\newcommand{\nat}{\mathbb N}
\newcommand{\real}{\mathbb R}
\newcommand{\F}{\mathcal{F}}
\newcommand{\M}{\mathcal{M}}
\newcommand{\Q}{\mathcal{Q}}
\newcommand{\A}{\mathcal{A}}
\newcommand{\Y}{\mathcal{Y}}
\newcommand{\Pc}{\mathcal{P}}
\newcommand{\T}{\mathbb{T}}
\newcommand{\Qs}{{\mathcal Q}^{\ast}}
\newcommand{\B}{{\mathcal B}}
\newcommand{\g}{\gamma}
\newcommand{\ma}{\M_1(P)}
\newcommand{\me}{\M^e(P)}
\newcommand{\LT}{L^{\infty}}
\newcommand{\Lt}{L^{\infty}_t}
\newcommand{\LTq}{L^{\infty}(\Omega, \F,P)}
\newcommand{\lk}{\left\{\,}
\newcommand{\rk}{\right\}}
\newcommand{\mk}{\;\big|\;}
\newcommand{\lp}{\left[}
\newcommand{\rp}{\right]}
\DeclareMathOperator*{\es}{ess\,sup}
\DeclareMathOperator*{\ei}{ess\,inf}
\DeclareMathOperator*{\qes}{Q\text{-}ess\,sup}
\DeclareMathOperator*{\res}{R\text{-}ess\,sup}
\newcommand{\esu}{\es_{Q\in\Q_t}}
\newcommand{\pk}{,\ldots ,}
\newcommand{\zt}{_{t\in\T}}
\newcommand{\At}{{\mathcal A}_{t+1}}
\newcommand{\Att}{{\mathcal A}_{t,t+1}}
\newcommand{\pfm}{\alpha_t^{\min}(Q)}
\newcommand{\pf}{\alpha_t^{\min}(Q)}
\newcommand{\pfo}{\alpha_t^{\min}}
\newcommand{\pfn}{\alpha_0^{\min}(Q)}
\newcommand{\pft}{\alpha_{t+1}^{\min}(Q)}
\newcommand{\pfs}{\alpha_{t+s}^{\min}(Q)}
\newcommand{\pftt}{\alpha_{t, t+1}^{\min}(Q)}
\newcommand{\pfts}{\alpha_{t, t+s}^{\min}(Q)}
\newcommand{\pfk}{\alpha_{k,k+1}^{\min}(Q)}
\newcommand{\rt}{\rho_t}
\newcommand{\rts}{(\rt)\zt}
\newcommand{\rtt}{\rho_{t+1}}
\newcommand{\rs}{\rho_{t+s}}
\newcommand{\tq}{\widetilde{Q}}
\newcommand{\ew}{E_Q[-X\,|\,\F_t\,]}
\newcommand{\ewo}{[-X\,|\,\F_t\,]}
\newcommand{\f}{P\mbox{-a.s.}}
\newcommand{\qf}{Q\mbox{-a.s.}}
\newcommand{\ra}{\Rightarrow}
\newcommand{\crm}{conditional convex risk measure }
\newcommand{\cts}{continuous from above }
\newcommand{\ctse}{continuous from above}
\newtheorem{theorem}{Theorem}
\newtheorem{corollary}[theorem]{Corollary}
\newtheorem{definition}[theorem]{Definition}
\newtheorem{lemma}[theorem]{Lemma}
\newtheorem{proposition}[theorem]{Proposition}
\newtheorem{remark}[theorem]{Remark}
\newtheorem{example}[theorem]{Example}
\title{Dynamic risk measures}
\author{Beatrice Acciaio\thanks{Department of Economy, Finance and Statistics, University of Perugia, Via A. Pascoli 20, 06123 Perugia, Italy. Email: beatrice.acciaio@stat.unipg.it. Financial support from the European Science Foundation (ESF) ``Advanced Mathematical Methods for Finance" (AMaMeF) under the exchange grant 2281, and hospitality of Vienna University of Technology are gratefully acknowledged.} \and Irina Penner\thanks{Humboldt-Universit\"{a}t zu Berlin, Institut f\"{u}r Mathematik, Unter den Linden 6, 10099 Berlin, Germany. Email: penner@math.hu-berlin.de. Supported by the DFG Research Center \textsc{Matheon} ``Mathematics for key technologies''. Financial support from the European Science Foundation (ESF) ``Advanced Mathematical Methods for Finance" (AMaMeF) under the short visit grant 2854 is gratefully acknowledged.}}
\date{\small first version: November 13, 2009;  this version: \today}
\begin{document}
\maketitle

\begin{abstract}
This paper gives an overview of the theory of dynamic convex risk measures for random variables in discrete time setting. We summarize robust representation results of conditional convex risk measures, and we characterize various time consistency properties of dynamic risk measures in terms of acceptance sets, penalty functions, and by supermartingale properties of risk processes and penalty functions.
\end{abstract}

\section{Introduction}

Risk measures are quantitative tools developed to determine mimimum capital reserves, which are required to be maintained by financial institutions in order to ensure their financial stability.   
An axiomatic analysis of risk assessment in terms of capital requirements was initiated by  Artzner, Delbaen, Eber, and Heath~\cite{adeh97,adeh99}, who introduced coherent risk measures. F\"ollmer and Schied~\cite{fs2} and Frittelli and Rosazza Gianin~\cite{fr2} replaced positive homogeneity by convexity in the set of axioms and established the more general concept of a convex risk measure.   Since then, convex and coherent risk measures and their applications have attracted a growing interest both in mathematical finance research and among practitioners.

One of the most appealing properties of a convex risk measure is its robustness against model uncertainty. Under some regularity condition, it can be represented as a suitably modified worst expected loss over a whole class of probabilistic models. This was initially observed in \cite{adeh99, fs2, fr2} in the static setting, where financial positions are described by random variables on some probability space and a risk measure is a real-valued functional. For a comprehensive presentation of the theory of static coherent and convex risk measures we refer to Delbaen~\cite{d0} and F\"ollmer and Schied~\cite[Chapter 4]{fs4}.

A natural extension of a static risk measure is given by a conditional risk measure, which takes into account the information available at the time of risk assessment. As its static counterpart, a conditional convex risk measure can be represented as the worst conditional expected loss over a class of suitably penalized probability measures; see \cite{rse5,rie4,dt5,bn4,ks5, cdk6}. In the dynamical setting described by some filtered probability space, risk assessment is updated over the time in accordance with the new information. This leads to the notion of dynamic risk measure, which is a sequence of conditional risk measures adapted to the underlying filtration.

A crucial question in the dynamical framework is how risk evaluations at different times are interrelated.
Several notions of time consistency were introduced and studied in the literature. One of todays most used notions is strong time consistency, which corresponds to the dynamic programming principle; see \cite{adehk7,d6,dt5,ks5,cdk6,bn6,fp6,ck6,dpr10} and references therein. As shown in \cite{d6, bn6, fp6}, strong time consistency can be characterized by additivity of the acceptance sets and penalty functions, and also by a supermartingale property of the risk process and the penalty function process.  
Similar characterizations of the weaker notions of time consistency, so called rejection and acceptance consistency, were given in \cite{Samuel, ipen7}. Rejection consistency, also called prudence in \cite{ipen7}, seems to be a particularly suitable property from the point of view of a regulator, since it ensures that one always stays on the safe side when updating risk assessment. The weakest notions of time consistency considered in the literature are  weak acceptance and weak rejection consistency, which require that if some position is accepted (or rejected) for any scenario tomorrow, it should be already accepted (or rejected) today; see \cite{adehk7, Weber, tu8, burg, ros7}.

As pointed out in \cite{jr8, er08},  risk assessment in the multi-period setting should also account for uncertainty about the time value of money. This requires to consider entire cash flow processes rather than total amounts at terminal dates as risky objects, and it leads to a further extention of the notion of risk measure. Risk measures for processes were studied in \cite{adehk7, rie4, cdk4, cdk5, cdk6, ck6, fs6, jr8, afp9}. The new feature in this framework is that not only the amounts but also the timing of payments matters; cf.\ \cite{cdk6, ck6, jr8, afp9}. However, as shown in \cite{adehk7} in the static and in \cite{afp9} in the dynamical setting, risk measures for processes can be identified with risk measures for random variables on an appropriate product space. This allows a natural translation of results obtained in the framework of risk measures for random variables to the framework of processes; see \cite{afp9}.

The aim of this paper it to give an overview of the current theory of dynamic convex risk measures for random variables in discrete time setting; the corresponding results for risk measures for processes are given in \cite{afp9}.  The paper is organized as follows. Section~\ref{setup} recalls the definition of a conditional convex risk measure and its interpretation as the minimal capital requirement from \cite{dt5}. Section~\ref{sectionrr} summarizes robust representation results from \cite{dt5,fp6,bn8}. In Section~\ref{sec:tc} we first give an overview of different time consistency properties based on \cite{tu8}. Then we focus on the strong notion of time consistency, in Subsection~\ref{subsec:tc},  and we characterize it by supermartingale properties of risk processes and penalty functions. The results of this subsection  are mainly based on \cite{fp6}, with the difference that here we give characterizations of time consistency also in terms of absolutely continuous probability measures, similar to \cite{bn8}. In addition, we relate the martingale property of a risk process with the worst case measure, and we provide the explicit form of the Doob- and the Riesz-decomposition of the penalty function process. Subsection~\ref{subsec:rc} generalizes \cite[Sections 2.4, 2.5]{ipen7} and characterizes rejection and acceptance consistency in terms of acceptance sets, penalty functions, and, in case of rejection consistency, by a supermartingale property of risk processes and one-step penalty functions. Subsection~\ref{subsec:wc} recalls characterizations of weak time consistency from \cite{tu8, burg}, and Subsection~\ref{recur} characterizes the recursive construction of time consistent risk measures suggested in \cite{cdk6, ck6}. Finally, the dynamic entropic risk measure with a non-constant risk aversion parameter  is studied in Section~\ref{entropic}.

\section{Setup and notation}\label{setup}
Let $T\in\nat\cup\{\infty\}$ be the time horizon, $\T:=\{0,\ldots,T\}$ for $T<\infty$, and $\T:=\nat_0$ for $T=\infty$. We consider a discrete-time setting given by a filtered probability space
$(\Omega, \F, (\F_t)\zt, P)$ with $\F_0=\{\emptyset, \Omega\}$, $\F=\F_T$ for $T<\infty$, and $\displaystyle{\F=\sigma(\cup_{t\ge 0}\F_t)}$ for $T=\infty$. For $t\in\T$, $L^{\infty}_t:=L^{\infty}(\Omega, \F_t,P)$ is the space of all essentially bounded $\F_t$-measurable random variables, and $L^\infty:=L^{\infty}(\Omega, \F_T,P)$. All equalities and inequalities between random variables and between sets are understood to hold $P$-almost surely, unless stated otherwise. We denote by $\ma$ (resp. by $\me$) the set of all probability measures on $(\Omega, \F)$ which are absolutely continuous with respect to $P$ (resp. equivalent to $P$). 

In this work we consider risk measures defined on the set $L^{\infty}$, which is understood as the set of discounted terminal values of financial positions. In the dynamical setting, a conditional risk measure $\rt$ assigns to each terminal payoff $X$ an $\F_t$-measurable random variable $\rt(X)$, that quantifies the risk of the position $X$ given the information $\F_t$. A rigorous definition of a conditional convex risk measure was given in \cite[Definition 2]{dt5}.

\begin{definition}\label{defrm}
A map $\rt\,:\,\LT\,\rightarrow\,\Lt$ is called a \emph{conditional convex risk measure} if it satisfies the 
following properties for all $X,Y\in\LT$:
\begin{itemize}
\item[(i)]
Conditional cash invariance: For all $m_t\in\Lt$
\[\rt(X+m_t)=\rt(X)-m_t;\]
\item[(ii)]
Monotonicity: $X\le Y\;\,\Rightarrow\;\,\rt(X)\ge\rt(Y)$;
\item[(iii)]
Conditional convexity: for all $\lambda\in\Lt,\,0\le \lambda\le 1$:
\[
\rt(\lambda X+(1-\lambda)Y)\le\lambda\rt(X)+(1-\lambda)\rt(Y);
\]
\item[(iv)]
{Normalization}: $\rt(0)=0$.
\end{itemize}
A conditional convex risk measure is called a \emph{conditional coherent risk measure} if it has in addition 
the following property:
\begin{itemize}
\item[(iv)]
{Conditional positive homogeneity}: for all $\lambda\in\Lt,\,\lambda\ge0$:
\[
\rt(\lambda X)=\lambda\rt(X).
\] 
\end{itemize}
\end{definition}

In the dynamical framework one can also analyze risk assessment for cumulated cash flow \emph{processes} rather than just for terminal pay-offs, i.e.\ one can consider a risk measure that accounts not only for the amounts but also for the timing of payments.  Such risk measures were studied in \cite{cdk4, cdk5, cdk6, ck6, fs6, jr8, afp9}. As shown in \cite{adehk7} in the static and in \cite{afp9} in the dynamical setting, convex risk measures for processes can be identified with convex risk measures for random variables on an appropriate product space. This allows to extend results obtained in our present setting to the framework of processes; cf.\ \cite{afp9}.

If $\rt$ is a conditional convex risk measure, the function $\phi_t:=-\rho_t$ defines a conditional monetary utility function in the sense of \cite{cdk6, ck6}. 
The term ``monetary'' refers to conditional cash invariance of the utility function, the only property in Definition \ref{defrm} that does not come from the classical utility theory.  Conditional cash invariance is a natural request in view of the interpretation of $\rt$ as a conditional capital requirement. In order to formalize this aspect we first recall the notion of the \textit{acceptance set} of a \crm $\rt$: 
\[
\A_t:=\lk X\in\LT\mk \rt(X)\le0\rk.
\]
The following properties of the acceptance set were given in \cite[Proposition 3]{dt5}. 
\begin{proposition}\label{acceptset}
The acceptance set $\A_t$ of a conditional convex risk measure $\rt$ is 
\begin{enumerate}
\item conditionally convex, i.e.\ $\alpha X+(1-\alpha)Y\in\A_t$ for all $X,Y\in\A_t$ and $\alpha$ $\F_t$-measurable such that $0\leq\alpha\leq 1$;
\item solid, i.e. $Y\in\A_t$ whenever $Y\geq X$ for some $X\in\A_t$;
\item such that $0\in\A_t$ and $\ei\lk X\in\Lt\mk X\in\A_t\rk=0$. 
\end{enumerate}
Moreover, $\rt$ is uniquely determined through its acceptance set, since
\begin{equation}\label{defviaset}
\rt(X)=\ei\lk Y\in\Lt\mk X+Y\in\A_t\rk.
\end{equation}
Conversely, if some set $\A_t\subseteq\LT$ satisfies conditions 1)-3), then the functional $\rt\,:\;\LT\rightarrow\Lt$ defined via (\ref{defviaset}) is a conditional convex risk measure.
\end{proposition}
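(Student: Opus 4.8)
The plan is to prove the two directions separately, using the standard duality between a risk measure and its acceptance set adapted to the conditional setting.

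\medskip

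\noindent\textbf{Direction 1: properties of $\A_t$ for a given $\rho_t$.}
First I would verify conditional convexity of $\A_t$. Given $X,Y\in\A_t$ and $\F_t$-measurable $\alpha$ with $0\le\alpha\le 1$, apply conditional convexity (iii) of $\rho_t$ to get $\rho_t(\alpha X+(1-\alpha)Y)\le\alpha\rho_t(X)+(1-\alpha)\rho_t(Y)\le 0$, where the last inequality uses $\rho_t(X)\le 0$, $\rho_t(Y)\le 0$ and $0\le\alpha\le 1$ pointwise. Solidity follows directly from monotonicity (ii): if $Y\ge X$ with $X\in\A_t$, then $\rho_t(Y)\le\rho_t(X)\le 0$. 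For property 3, $0\in\A_t$ is exactly normalization (iv). For the essential-infimum claim, note that any constant $c\ge 0$ (more precisely $c\mathbf 1\in\Lt$) lies in $\A_t$ since by cash invariance (i) and normalization, $\rho_t(c)=\rho_t(0+c)=\rho_t(0)-c=-c\le 0$, so the $\ei$ is $\le 0$; conversely, if $X\in\Lt\cap\A_t$ then $\rho_t(X)\le 0$, and cash invariance gives $0=\rho_t(0)=\rho_t(X+(-X))=\rho_t(X)+X\le X$ wait --- more carefully, for $X\in\Lt$ we have $\rho_t(X)=\rho_t(0)-X=-X$, so $X\in\A_t\iff -X\le 0\iff X\ge 0$, hence $\ei\{X\in\Lt\mid X\in\A_t\}=0$. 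This same computation also shows the representation \eqref{defviaset}: for any $X\in\LT$ and $Y\in\Lt$, $X+Y\in\A_t\iff\rho_t(X+Y)\le 0\iff\rho_t(X)-Y\le 0\iff Y\ge\rho_t(X)$, so the essential infimum of all such $Y$ is exactly $\rho_t(X)$ (and it is attained, since $Y=\rho_t(X)$ itself is admissible).

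\medskip

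\noindent\textbf{Direction 2: from a set $\A_t$ satisfying 1)--3) back to a risk measure.}
Define $\rho_t$ by \eqref{defviaset}. I would first check that $\rho_t(X)\in\Lt$, i.e.\ that it is $\F_t$-measurable and essentially bounded: finiteness from above follows because $X$ is bounded and by property 3 some constant lies in $\A_t$, so adding a large enough constant to $X$ lands in $\A_t$ by solidity; finiteness from below follows from property 3 again (the $\ei$ over $\Lt\cap\A_t$ being $0$, so one cannot push $Y$ arbitrarily negative). Then I would verify the four axioms. Cash invariance is a reindexing of the defining infimum: $\rho_t(X+m_t)=\ei\{Y\mid X+m_t+Y\in\A_t\}=\ei\{Y'-m_t\mid X+Y'\in\A_t\}=\rho_t(X)-m_t$. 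Monotonicity: if $X\le Y$ and $X+Z\in\A_t$, then $Y+Z\ge X+Z\in\A_t$, so by solidity $Y+Z\in\A_t$; hence the admissible set for $Y$ contains that for $X$, giving $\rho_t(Y)\le\rho_t(X)$. Normalization $\rho_t(0)=0$ is exactly the $\ei$ statement in property 3. Conditional convexity is where conditional convexity of $\A_t$ enters: given $\lambda\in\Lt$ with $0\le\lambda\le 1$, pick $Y_X,Y_Y\in\Lt$ with $X+Y_X\in\A_t$, $Y+Y_Y\in\A_t$; then $\lambda(X+Y_X)+(1-\lambda)(Y+Y_Y)\in\A_t$ by property 1, i.e.\ $(\lambda X+(1-\lambda)Y)+(\lambda Y_X+(1-\lambda)Y_Y)\in\A_t$, so $\rho_t(\lambda X+(1-\lambda)Y)\le\lambda Y_X+(1-\lambda)Y_Y$; taking essential infima over $Y_X$ and $Y_Y$ separately yields the claim.

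\medskip

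\noindent\textbf{Main obstacle.}
The routine axiom-checking above is genuinely routine; the only delicate points are the measurability and lattice arguments hidden inside the essential infimum. Specifically: one must check that the family $\{Y\in\Lt\mid X+Y\in\A_t\}$ is downward directed (closed under taking minima of two elements) so that its essential infimum is a decreasing limit along a sequence and the infimum is actually attained in $\Lt$ --- this is where conditional convexity of $\A_t$ (with the $\F_t$-measurable weight $\alpha=\mathbf 1_A$ for $A\in\F_t$) is used to paste two candidates together on a measurable partition. I expect this directedness/attainment argument, together with the uniform bounds needed to keep $\rho_t(X)$ in $\Lt$, to be the part requiring the most care; the rest is bookkeeping with the four axioms.
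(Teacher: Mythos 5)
Your proof is correct, and for the parts the paper actually proves (properties 1)--3) and formula \eqref{defviaset}) it follows essentially the same route: the observation that for $Y\in\Lt$ cash invariance gives $\rt(X+Y)=\rt(X)-Y$, so the admissible set is $\{Y\in\Lt:Y\ge\rt(X)\}$, is just the paper's two inequalities packaged as an equivalence. The difference is that the paper delegates the converse direction to Detlefsen--Scandolo, whereas you verify it directly; your verification is sound, including the boundedness argument (solidity plus $0\in\A_t$ for the upper bound, and the fact that every element of $\Lt\cap\A_t$ dominates the essential infimum $0$ for the lower bound) and the conditional-convexity step via separate essential infima over $Y_X$ and $Y_Y$. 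One remark: the ``main obstacle'' you anticipate is not actually needed for the stated claim --- the essential infimum of the family $\{Y\in\Lt\mid X+Y\in\A_t\}$ exists and is $\F_t$-measurable without any directedness, and attainment of the infimum is never used in checking axioms (i)--(iv); downward directedness would only matter if one additionally wanted, e.g., to recover $\A_t$ as the acceptance set of the constructed $\rt$, which the proposition does not assert.
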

\begin{proof} Properties 1)-3) of the acceptance set follow easily from properties (i)-(iii) in Definition~\ref{defrm}. To prove (\ref{defviaset}) note that by cash invariance $\rt(X)+X\in\A_t$ for all $X$, and this implies ``$\ge$'' in  (\ref{defviaset}). On the other hand, for all $Z\in \lk Y\in\Lt\mk X+Y\in\A_t\rk$ we have 
\[
0\ge\rt(Z+X)=\rt(X)-Z,
\]
thus $\rt(X)\le\ei\lk Y\in\Lt\mk X+Y\in\A_t\rk.$\\
For the proof of the last part of the assertion we refer to \cite[Proposition 3]{dt5}.\end{proof}

Due to (\ref{defviaset}), the value $\rt(X)$ can be viewed as the minimal conditional capital requirement needed to be added to the position $X$ in order to make it acceptable at time $t$.  The following example shows how risk measures can be defined via (\ref{defviaset}).
\begin{example}\label{ex:entr}
Consider the set of all positions having non-negative conditional expected utility, i.e. 
\[
\A_t:=\{X\in\LT\mk E[u_t(X)|\F_t]\geq 0\},
\]
where $u_t$ denotes some non-increasing and concave utility function. It is easy to check that the set $\A_t$ has all 
properties 1)-3) from Proposition~\ref{acceptset}. A basic choice is the exponential utility function
$u_t(x)=1-e^{-\g_tx}$, where $\gamma_t>0$ $P$-a.s.\ denotes the risk aversion parameter such that $\gamma_t,\frac{1}{\gamma_t}\in\Lt$. The corresponding conditional convex risk measure $\rt$ associated to $\A_t$ via \eqref{defviaset} takes the form
\[
\rt(X)=\frac{1}{\g_t}\log E[e^{-\g_t X}|\F_t],\qquad X\in\LT,
\]
and is called the \emph{conditional entropic risk measure}. The entropic risk measure was introduced in \cite{fs4} in the static setting, in the dynamical setting it appeared in \cite{bek4,ms5,dt5,cdk6,fp6,ck6}.  
We characterize the dynamic entropic risk measure in Section~\ref{entropic}.
\end{example}

\section{Robust representation}\label{sectionrr}
As observed in \cite{adeh99, fs4, fr2} in the static setting, the axiomatic properties of a convex risk measure yield, under some regularity condition, a representation of the minimal capital requirement as a suitably modified worst expected loss over a whole class of probabilistic models. In the dynamical setting, such robust representations of conditional coherent risk measures were obtained on a finite probability space in \cite{rse5} for random variables and in \cite{rie4} for stochastic processes.
On a general probability space, robust representations for conditional coherent and convex risk measures
were proved in \cite{dt5,bn4,burg,ks5,fp6,bn8} for random variables and in \cite{cdk6} for stochastic processes.
In this section we mainly summarize the results from \cite{dt5,fp6,bn8}.

The alternative probability measures in a robust representation of a risk measure $\rt$ contribute to the risk evaluation to a different degree. To formalize this aspect we use the notion of the minimal penalty function $\alpha_t^{\min}$, defined for each $Q\in\ma$ as
\begin{equation}\label{pf1}
\pfm=\qes_{X\in\A_t}\ew.
\end{equation}

The following property of the minimal penalty function is a standard result, that will be used in the proof of Theorem~\ref{robdar}.

\begin{lemma}\label{erwpf}
For $Q\in\ma$ and $0\le s\le t$,
\[
E_Q[\pf|\F_s]=\qes_{Y\in\A_t}E_Q[-Y|\F_s]\quad Q\text{-a.s.}
\]
and in particular
\[
E_Q[\pf]=\sup_{Y\in\A_t}E_Q[-Y].
\]
\end{lemma}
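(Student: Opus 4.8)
The plan is to prove the conditional statement directly and then obtain the unconditional version by taking $s=0$ and using $\F_0=\{\emptyset,\Omega\}$. The key tool will be the lattice property of the family $\{E_Q[-Y\mid\F_t]:Y\in\A_t\}$: I claim this family is upward directed, so that the $Q$-essential supremum defining $\pf$ is attained as the increasing limit of a sequence $(E_Q[-Y_n\mid\F_t])_n$ with $Y_n\in\A_t$. To see the directedness, take $Y_1,Y_2\in\A_t$ and set $A:=\{E_Q[-Y_1\mid\F_t]\ge E_Q[-Y_2\mid\F_t]\}\in\F_t$; then $Y:=Y_1\mathbf{1}_A+Y_2\mathbf{1}_{A^c}$ lies in $\A_t$ by conditional convexity (property 1 in Proposition~\ref{acceptset}, applied with the $\F_t$-measurable weight $\mathbf{1}_A$), and $E_Q[-Y\mid\F_t]=\max(E_Q[-Y_1\mid\F_t],E_Q[-Y_2\mid\F_t])$. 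Hence the defining supremum in \eqref{pf1} is a countable increasing limit.

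Next I would combine this with the tower property and monotone convergence. Fix $s\le t$. For the inequality ``$\ge$'': for every $Y\in\A_t$ we have $\pf\ge E_Q[-Y\mid\F_t]$ $Q$-a.s.\ by definition, so taking $E_Q[\,\cdot\mid\F_s]$ on both sides and using the tower property, $E_Q[\pf\mid\F_s]\ge E_Q[E_Q[-Y\mid\F_t]\mid\F_s]=E_Q[-Y\mid\F_s]$; taking the $Q$-essential supremum over $Y\in\A_t$ yields $E_Q[\pf\mid\F_s]\ge\qes_{Y\in\A_t}E_Q[-Y\mid\F_s]$. For the reverse inequality, pick the sequence $Y_n\in\A_t$ with $E_Q[-Y_n\mid\F_t]\uparrow\pf$ $Q$-a.s.; since $0\in\A_t$ we may also arrange $E_Q[-Y_n\mid\F_t]\ge -\es_{\Omega}|Y_1|$ or simply note the sequence is bounded below by an integrable random variable (e.g.\ by replacing $Y_n$ with $Y_n\mathbf 1_{A_n}$ on the set where it exceeds $-E_Q[Y_1\mid\F_t]$), so conditional monotone convergence applies and gives $E_Q[\pf\mid\F_s]=\lim_n E_Q[E_Q[-Y_n\mid\F_t]\mid\F_s]=\lim_n E_Q[-Y_n\mid\F_s]\le\qes_{Y\in\A_t}E_Q[-Y\mid\F_s]$ $Q$-a.s. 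Combining the two inequalities proves the first claim, and setting $s=0$ (where the conditional expectation given the trivial $\sigma$-algebra is the ordinary expectation, and the $Q$-essential supremum of an upward directed family of constants is the ordinary supremum) gives the displayed identity for $E_Q[\pf]$.

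The main obstacle is the integrability/boundedness bookkeeping needed to justify the interchange of limit and conditional expectation: $\pf$ need not be bounded or even $Q$-integrable a priori for a general $Q\in\ma$, so one must be slightly careful. The clean way around this is to prove the identity first under the extra assumption that the right-hand side $\sup_{Y\in\A_t}E_Q[-Y]$ (equivalently $E_Q[\pf]$) is finite — if it is infinite, both sides are $+\infty$ and there is nothing to prove — and then, on the finiteness set, use that the approximating sequence $E_Q[-Y_n\mid\F_t]$ is monotone with integrable limit, which is exactly the hypothesis for the conditional monotone convergence theorem. A secondary, purely measure-theoretic point is the existence of the countable maximizing sequence for a $Q$-essential supremum of a directed family, which is the standard fact recalled e.g.\ in \cite[Appendix A]{fs4}; I would cite it rather than reprove it.
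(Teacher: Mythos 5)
Your proposal is correct and follows essentially the same route as the paper's proof: upward directedness of $\{E_Q[-Y\,|\,\F_t]\,:\,Y\in\A_t\}$ via indicator pasting and conditional convexity, an increasing maximizing sequence, monotone convergence for one inequality and the definition of $\alpha_t^{\min}(Q)$ for the other, with $s=0$ giving the unconditional statement. The integrability bookkeeping you worry about is harmless: each term of the increasing sequence is bounded (the $Y_n$ lie in $\LT$), so conditional monotone convergence applies in the extended sense without any finiteness assumption, which is exactly how the paper proceeds.
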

\begin{proof} First we claim that the set
\[
\lk E_Q[-X|\F_t]\mk X\in\A_t\rk
\]
is directed upward for any $Q\in\ma$. Indeed, for $X,Y\in\A_t$ we can define $Z:=XI_A+YI_{A^c}$,
where $A:=\{E_Q[-X|\F_t]\ge E_Q[-Y|\F_t]\}\in\F_t$. Conditional convexity of $\rho_t$ implies that 
$Z\in\A_t$, and by definition of $Z$
\[
E_Q[-Z|\F_t]=\max\left(E_Q[-X|\F_t],E_Q[-Y|\F_t]\right)\quad Q\text{-a.s.}.
\]
Hence there exists a sequence $(X^Q_n)_{n\in\nat}$ in $\A_t$ such that 
\begin{equation}\label{folge}
\pf=\lim_nE_Q[-X^Q_n|\F_t]\qquad Q\text{-a.s.},
\end{equation}
and by monotone convergence we get
\begin{align*}
E_Q[\pf|\F_s]&=\lim_nE_Q\left[\,E_Q[-X_n^Q|\F_t]\,\big|\,\F_s\,\right]\\
&\le\qes_{Y\in\A_t}E_Q[-Y|\F_s]\quad Q\text{-a.s.}.
\end{align*}
The converse inequality follows directly from the definition of $\pf$.\end{proof}

The following theorem relates robust representations to some continuity properties of conditional convex risk measures. It combines \cite[Theorem 1]{dt5} with \cite[Corollary 2.4]{fp6}; similar results can be found in \cite{bn4, ks5, cdk6}.

\begin{theorem}\label{robdar}
For a conditional convex risk measure $\rt$ the following are equivalent:
\begin{enumerate}
\item 
$\rt$  has a robust representation
\begin{equation}\label{rd0}
\rt(X)=\esu(\ew-\alpha_t(Q)),\qquad X\in\LT,
\end{equation}
where 
\begin{equation*}
\Q_t:=\lk Q\in\ma\mk Q=P|_{\F_t}\rk
\end{equation*}
and $\alpha_t$ is a map from $\Q_t$ to the set of $\F_t$-measurable random variables with values in $\real\cup\{+\infty\}$, such that $\esu(-\alpha_t(Q))=0$. 
\item
$\rt$  has the robust representation in terms of the minimal penalty function, i.e.
\begin{equation}\label{rd1}
\rt(X)=\esu(\ew-\pf),\qquad X\in\LT,
\end{equation}
where $\alpha_t^{\min}$ is given in (\ref{pf1}).
\item $\rt$  has the robust representation
\begin{equation}\label{rd2}
\rt(X)=\es_{\Q\in\Q^f_t}(\ew-\pf)\quad P\text{-a.s.},\qquad X\in\LT,
\end{equation}
where
\[
\Q_t^f:=\lk Q\in\ma\mk Q=P|_{\F_t}\;E_{Q}[\pf]<\infty\rk.
\]
\item $\rt$ has the ``Fatou-property'': for any bounded sequence $(X_n)_{n\in\nat}$ which converges $P$-a.s.\ to some $X$,
\[
\rt(X)\le\liminf_{n\to\infty}\rt(X_n)\quad\f.
\]
\item
$\rt$ is continuous from above, i.e. 
\[
X_n\searrow X\;\,P\text{-a.s}\quad\Longrightarrow\quad \rt(X_n)\nearrow\rt(X)\;\,P\text{-a.s}
\]
for any sequence $(X_n)_n\subseteq\LT$ and $X\in\LT$.
\end{enumerate}
\end{theorem}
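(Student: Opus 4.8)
The plan is to prove the chain of equivalences by establishing the cycle $(1)\Rightarrow(2)\Rightarrow(3)\Rightarrow\cdots$ and closing it, while routing through the well-known static-type equivalence of the Fatou property and continuity from above. The core analytic work lies in the passage between a robust representation and continuity from above; the rest is essentially bookkeeping about penalty functions.

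\medskip
\noindent\textbf{Step 1: $(1)\Rightarrow(5)$ and $(1)\Rightarrow(4)$.} Assume \eqref{rd0}. Continuity from above follows from a monotone-convergence argument: if $X_n\searrow X$, then for each fixed $Q\in\Q_t$ the conditional expectations $E_Q[-X_n\,|\,\F_t]$ increase to $E_Q[-X\,|\,\F_t]$ by conditional monotone convergence, hence $\ew-\alpha_t(Q)=\lim_n(E_Q[-X_n\,|\,\F_t]-\alpha_t(Q))\le\liminf_n\rt(X_n)$ $Q$-a.s., and taking the essential supremum over $Q\in\Q_t$ (using that all such $Q$ agree with $P$ on $\F_t$, so the representation is consistent $P$-a.s.) gives $\rt(X)\le\liminf_n\rt(X_n)$; the reverse inequality is monotonicity. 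The Fatou property is proved the same way using Fatou's lemma for conditional expectations in place of monotone convergence. Equivalently one notes $(5)\Leftrightarrow(4)$ directly: $(4)$ trivially implies $(5)$, and $(5)$ implies $(4)$ by applying continuity from above to the decreasing sequence $Y_k:=\sup_{n\ge k}X_n$ together with monotonicity, exactly as in the static case \cite[Chapter 4]{fs4}.

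\medskip
\noindent\textbf{Step 2: $(5)\Rightarrow(2)$.} This is the substantive implication and the main obstacle. Given continuity from above, one must show the robust representation \eqref{rd1} holds with the \emph{minimal} penalty function \eqref{pf1}. The inequality ``$\ge$'' in \eqref{rd1} is the easy direction: for any $Q\in\Q_t$ and any $X$, cash invariance gives $\rt(X)+X\in\A_t$, so $\pf\ge E_Q[-(X+\rt(X))\,|\,\F_t]=\ew+\rt(X)$ $Q$-a.s. on $\{\pf<\infty\}$ after rearranging (and trivially otherwise), hence $\ew-\pf\le\rt(X)$, and one takes the essential supremum. The hard direction ``$\le$'' requires producing, for a given $X$ and given $\varepsilon>0$, a measure $Q\in\Q_t$ with $E_Q[-X\,|\,\F_t]-\pf\ge\rt(X)-\varepsilon$ on a prescribed $\F_t$-set; this is the conditional analogue of the Hahn--Banach separation argument of F\"ollmer--Schied. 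One invokes the machinery of \cite[Theorem 1]{dt5}: continuity from above upgrades the weak* lower semicontinuity needed to separate the point $X-\rt(X)+\varepsilon$ from the (conditionally convex, solid, $\sigma(\LT,L^1)$-closed after the continuity-from-above reduction) acceptance set $\A_t$ by a conditional expectation, yielding the required $Q$ and the bound $\esu(\ew-\pf)\ge\rt(X)$. I would simply cite \cite[Theorem 1]{dt5} for this step rather than reconstruct the separation.

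\medskip
\noindent\textbf{Step 3: $(2)\Rightarrow(3)$ and $(3)\Rightarrow(1)$.} For $(2)\Rightarrow(3)$, observe $\Q_t^f\subseteq\Q_t$, so ``$\ge$'' in \eqref{rd2} follows from \eqref{rd1} applied over the smaller set; for ``$\le$'', one must argue that measures $Q\in\Q_t$ with $E_Q[\pf]=\infty$ are negligible in \eqref{rd1}. By Lemma~\ref{erwpf}, $E_Q[\pf]=\sup_{Y\in\A_t}E_Q[-Y]$, and a localization argument shows that whenever $Q$ contributes non-trivially to the essential supremum on some $\F_t$-set of positive probability one can find $Q'\in\Q_t^f$ (obtained by gluing $Q$ with $P$ off a suitable set, or by a density truncation) achieving at least the same value there up to $\varepsilon$; this is the $P$-a.s.\ statement in \eqref{rd2}. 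Then $(3)\Rightarrow(1)$ is immediate: take $\alpha_t:=\pfo$ on $\Q_t^f$ (extended by $+\infty$ elsewhere), note $\esu(-\pf)=0$ because $0\in\A_t$ forces $\pf\ge0$ while $\ei\{X\in\Lt:X\in\A_t\}=0$ from Proposition~\ref{acceptset} forces the essential infimum of $\pf$ over $Q$ to vanish, and \eqref{rd2} is then an instance of \eqref{rd0}. This closes the cycle $(1)\Rightarrow(5)\Rightarrow(2)\Rightarrow(3)\Rightarrow(1)$, and with $(4)\Leftrightarrow(5)$ from Step 1 all five statements are equivalent. The only genuinely delicate point is Step 2, where the conditional version of the Fenchel--Moreau/Hahn--Banach argument is needed; everything else is manipulation of penalty functions and the order structure of $\LT$.
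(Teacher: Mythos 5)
Your Step 1 and the closing implication 3)$\Rightarrow$1) are correct and essentially as in the paper (which runs 1)$\Rightarrow$4)$\Rightarrow$5) instead of 1)$\Rightarrow$5); that difference is cosmetic). The remaining steps, however, contain genuine gaps. In Step 2 you delegate the implication 5)$\Rightarrow$2) to a citation of \cite[Theorem 1]{dt5}; but that theorem \emph{is} the statement being proved (the present theorem is exactly \cite[Theorem 1]{dt5} combined with \cite[Corollary 2.4]{fp6}), so as a proof this step is circular, and the accompanying sketch of a ``conditional Hahn--Banach separation'' is not carried out. The paper's argument is different and genuinely reductive: it scalarizes by setting $\rho^P(X):=E_P[\rt(X)]$, checks that $\rho^P$ is a static convex risk measure continuous from above, invokes only the static representation theorem \cite[Theorem 4.31]{fs4}, shows that any $Q$ with finite static penalty $\alpha(Q)$ agrees with $P$ on $\F_t$ (testing with $\lambda I_A$, $A\in\F_t$, $\lambda\in\real$), bounds $E_P[\pf]\le\alpha(Q)$ via Lemma~\ref{erwpf}, and then combines the resulting chain of inequalities for $E_P[\rt(X)]$ with the easy pointwise inequality ``$\ge$'' to obtain \eqref{rd1}. (In your easy direction there is also a sign slip: $E_Q[-(X+\rt(X))\,|\,\F_t]=\ew-\rt(X)$, not $+\rt(X)$; the conclusion is unaffected.)

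The second gap is your route 2)$\Rightarrow$3). The localization you invoke is not substantiated, and the two concrete devices you propose fail in general. Pasting $Q$ with $P$ outside a suitable $\F_t$-set presupposes that $P$ is finitely penalized, but $P$ need not lie in $\Q^f_t$: for $\rt(X)=E_{Q_0}[-X\,|\,\F_t]$ with $Q_0=P$ on $\F_t$ but $Q_0\neq P$ thereafter, taking $X=\lambda\left(I_B-E_{Q_0}[I_B|\F_t]\right)\in\A_t$ and letting $\lambda\to\infty$ gives $\alpha_t^{\min}(P)=+\infty$ on a set of positive measure, so gluing with $P$ yields no control on the penalty of the pasted measure; truncating the density $dQ/dP$ destroys both the constraint $Q=P$ on $\F_t$ and any handle on $\alpha_t^{\min}$. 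The paper avoids the issue altogether: its scalarization proof of 5)$\Rightarrow$2) passes through the supremum over $\lk Q\in\Q_t\mk E_P[\pf]<\infty\rk$, which coincides with $\Q^f_t$ because $\pf$ is $\F_t$-measurable and $Q=P$ on $\F_t$, so representation \eqref{rd2} falls out of the same computation (this is the paper's 5)$\Rightarrow$3)). I would restructure accordingly: obtain 3) directly from the argument for 5)$\Rightarrow$2) rather than trying to deduce it from 2) by localization.
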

\begin{proof}
3) $\;\ra\; $ 1) and 2) $\;\ra\; $ 1) are obvious.
1) $\,\ra\, $ 4): Dominated convergence implies that $E_Q[X_n|\F_t]\rightarrow E_Q[X|\F_t]$ for each 
$Q\in{\mathcal Q}_t$, and $\liminf_{n\to\infty}\rt(X_n)\ge\rt(X)$ follows by using the robust representation of 
$\rt$ as in the unconditional setting, see, e.g., \cite[Lemma 4.20]{fs4}.

4) $\,\ra\, $ 5): Monotonicity implies $\limsup_{n\to\infty}\rt(X_n)\le\rt(X)$, and $\liminf_{n\to\infty}\rt(X_n)\ge\rt(X)$ follows
by 4).

5) $\,\ra\, $ 2): The inequality
\begin{equation}\label{ungl1}
\rt(X)\ge\es_{Q\in{\mathcal Q}_t}(\ew-\pf)
\end{equation}
follows from the definition of $\alpha_t^{\min}$. In order to prove the equality we will show that
\[
E_P[\rho_t(X)]\le E_P\left[\esu(\ew-\pf)\right].
\]
To this end, consider the map $\rho^P\,:\,\LT\,\rightarrow\,\real$ defined by
$\rho^P(X):=E_P[\rt(X)]$. It is easy to check that $\rho^P$ is a convex risk measure which
is continuous from above. Hence \cite[Theorem 4.31]{fs4} implies that 
$\rho^P$ has the robust representation
\[
\rho^P(X)=\sup_{Q\in\M_1(P)}(E_Q[-X]-\alpha(Q))\qquad X\in\LT,
\]
where the penalty function $\alpha(Q)$ is given by
\[
\alpha(Q)=\sup_{X\in\LT: \rho^P(X)\le0}E_Q[-X].
\]
Next we will prove that $Q\in\Q_t$ if $\alpha(Q)<\infty$. Indeed, let $A\in\F_t$ and $\lambda>0$. Then
\[
-\lambda P[A]=E_P[\rt(\lambda I_A)]=\rho^P(\lambda I_A)\ge E_Q[-\lambda I_A]-\alpha(Q),
\]
so
\[
P[A]\le Q[A]+\frac{1}{\lambda}\alpha(Q)\quad\mbox{for all}\quad \lambda>0,
\]
and hence $P[A]\le Q[A]$ if $\alpha(Q)<\infty$. The same reasoning with $\lambda<0$ implies 
$P[A]\ge Q[A]$, thus $P = Q$ on $\F_t$ if $\alpha(Q)<\infty$. By Lemma~\ref{erwpf}, we have for every $Q\in{\mathcal Q}_t$ 
\[
E_P[\pf]=\sup_{Y\in\A_t}E_P[-Y].
\]
Since $\rho^P(Y)\le0$ for all $Y\in\A_t$, this implies
\begin{equation*}
E_P[\pf]\le\alpha(Q)
\end{equation*}
for all $Q\in{\mathcal Q}_t$, by definition of the penalty function $\alpha(Q)$.

Finally we obtain
\begin{align}\label{rdbeweis}
E_P[\rt(X)]=\rho^P(X)&=\sup_{Q\in\M_1(P), \alpha(Q)<\infty}\left(E_Q[-X]-\alpha(Q)\right)\nonumber\\
&\le\sup_{Q\in\Q_t, E_P[\pf]<\infty}\left(E_Q[-X]-\alpha(Q)\right)\nonumber\\
&\le\sup_{Q\in\Q_t, E_P[\pf]<\infty}E_P[E_Q[-X|\F_t]-\pf]\nonumber\\
&\le E_P\left[\es_{Q\in\Q_t, E_P[\pf]<\infty}\left(E_Q[-X|\F_t]-\pf\right)\right]\\
&\le E_P\left[\es_{Q\in\Q_t}E_Q[-X|\F_t]-\pf\right]\nonumber, 
\end{align}
proving equality (\ref{rd1}).

5) $\,\ra\, $ 3) The inequality
\[
\rt(X)\ge\es_{\Q\in\Q^f_t}(\ew-\pf)
\]
follows from (\ref{ungl1}) since $\Q_t^f\subseteq{\mathcal Q}_t$, and (\ref{rdbeweis}) proves the
equality. 
\end{proof}

The penalty function $\pf$ is minimal in the sense that any other function $\alpha_t$ in a
robust representation \eqref{rd0} of $\rt$ satisfies
\[
\pf\le\alpha_t(Q)\;\,\f
\]
for all $Q\in{\mathcal Q}_t$.
An alternative formula for the minimal penalty function is given by
\begin{equation*}
\pf=\es_{X\in\LT}\,\left(\ew-\rt(X)\right)\quad\mbox{for all}\;\, Q\in{\mathcal Q}_t.
\end{equation*}
This follows as in the unconditional case; see, e.g., \cite[Theorem 4.15, Remark 4.16]{fs4}. 

\begin{remark}\label{abg}
Another characterization of a conditional convex risk measure $\rt$ that is equivalent to the properties 1)-4) of Theorem \ref{robdar} is the following: The acceptance set $\A_t$ is weak$^\ast$-closed, i.e., it is closed in $\LT$ with respect to the topology $\sigma(\LT, L^1(\Omega,\F,P))$.
This equivalence was shown in
\cite{cdk6} in the context of risk measures for processes and in
\cite{ks5} for risk measures for random variables. Though in \cite{ks5} a slightly different definition of a conditional risk measure is used, the reasoning given there works just the same in our case; cf.\ \cite[Theorem 3.16]{ks5}.
\end{remark}

For the characterization of time consistency in Section~\ref{sec:tc} we will need a robust representation of a conditional convex risk measure $\rt$ under any measure $Q\in\ma$, where possibly $Q\notin\Q_t$. Such representation can be obtained as in Theorem~\ref{robdar} by considering $\rt$ as a risk measure under $Q$, as shown in the next corollary. This result is a version of \cite[Proposition 1]{bn8}.

\begin{corollary}\label{corrobdar}
A conditional convex risk measure $\rt$ is continuous from above if and only if it has the robust representations
\begin{align}
\rt(X)&=\qes_{R\in\Q_t(Q)}(E_R\ewo-\pfo(R))\label{rd3}\\
&=\qes_{R\in\Q^f_t(Q)}(E_R\ewo-\pfo(R))\quad Q\text{-a.s.},\quad \forall X\in\LT,\label{rd3f}
\end{align}
for all $Q\in\ma$, where 
\begin{equation*}
\Q_t(Q)=\lk R\in\ma\mk R=Q|_{\F_t}\rk
\end{equation*}
and 
\[
\Q_t^f(Q)=\lk R\in\M_1(P)\mk R=Q|_{\F_t},\;E_{R}[\pfo(R)]<\infty\rk.
\]
\end{corollary}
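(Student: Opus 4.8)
The plan is to carry out the argument of Theorem~\ref{robdar} with the reference measure $P$ replaced by the (possibly only absolutely continuous) measure $Q$. The implication ``$\Leftarrow$'' is immediate: since $P\in\ma$, specializing \eqref{rd3} to $Q=P$ gives $\Q_t(P)=\Q_t$, so \eqref{rd3} reduces to the representation \eqref{rd1}, and hence $\rt$ is continuous from above by Theorem~\ref{robdar}.

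For ``$\Rightarrow$'', fix $Q\in\ma$. The inequality ``$\ge$'' in \eqref{rd3} is the easy half: for $R\in\Q_t(Q)$ and $X\in\LT$, conditional cash invariance gives $X+\rt(X)\in\A_t$, so by the definition \eqref{pf1} of $\pfo(R)$ as an $R$-essential supremum we have $E_R\ewo-\rt(X)\le\pfo(R)$ $R$-a.s.; as both sides are $\F_t$-measurable and $R=Q$ on $\F_t$, the same inequality holds $Q$-a.s. Taking the $Q$-essential supremum over $R\in\Q_t(Q)$, and then using $\Q^f_t(Q)\subseteq\Q_t(Q)$, yields
\[
\rt(X)\ \ge\ \qes_{R\in\Q_t(Q)}\bigl(E_R\ewo-\pfo(R)\bigr)\ \ge\ \qes_{R\in\Q^f_t(Q)}\bigl(E_R\ewo-\pfo(R)\bigr)\quad Q\text{-a.s.}
\]

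For the reverse inequality I would aggregate under $Q$, exactly as in the proof of $5)\,\ra\,2)$ of Theorem~\ref{robdar}. Set $\rho^Q(X):=E_Q[\rt(X)]$; since $Q\ll P$ and $\rt$ is continuous from above, $\rho^Q$ is a real-valued convex risk measure on $\LT$ that is continuous from above, so \cite[Theorem 4.31]{fs4} gives $\rho^Q(X)=\sup_{R\in\M_1(P)}(E_R[-X]-\alpha_Q(R))$ with minimal penalty $\alpha_Q(R)=\sup\{E_R[-Y]:Y\in\LT,\ \rho^Q(Y)\le 0\}$. Testing against $Y=\lambda I_A$ with $A\in\F_t$, $\lambda\in\real$ (using $\rt(\lambda I_A)=-\lambda I_A$, hence $\rho^Q(\lambda I_A)=-\lambda Q[A]$) shows, precisely as in Theorem~\ref{robdar}, that $\alpha_Q(R)<\infty$ forces $R=Q$ on $\F_t$. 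Moreover $\A_t\subseteq\{Y:\rho^Q(Y)\le 0\}$, and Lemma~\ref{erwpf} applied with the measure $R$ gives $\sup_{Y\in\A_t}E_R[-Y]=E_R[\pfo(R)]$, so $\alpha_Q(R)\ge E_R[\pfo(R)]$; in particular $\{R:\alpha_Q(R)<\infty\}\subseteq\Q^f_t(Q)$. Chaining these facts as in \eqref{rdbeweis}, and using that for $R\in\Q_t(Q)$ both $E_R[-X]=E_Q\bigl[E_R\ewo\bigr]$ and $E_R[\pfo(R)]=E_Q[\pfo(R)]$ (the conditioned quantities being $\F_t$-measurable), one obtains $E_Q[\rt(X)]\le E_Q\bigl[\,\qes_{R\in\Q^f_t(Q)}(E_R\ewo-\pfo(R))\,\bigr]$.

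Putting the two directions together, the sandwich $E_Q[\rt(X)]\le E_Q[\qes_{R\in\Q^f_t(Q)}(E_R\ewo-\pfo(R))]\le E_Q[\qes_{R\in\Q_t(Q)}(E_R\ewo-\pfo(R))]\le E_Q[\rt(X)]$ forces all three expectations to coincide; since $E_Q[\rt(X)]$ is finite and the integrands are bounded above by $\rt(X)\in\LT$, they are $Q$-integrable, and then the $Q$-a.s.\ pointwise inequalities from the second paragraph must in fact be $Q$-a.s.\ equalities, which is exactly \eqref{rd3} and \eqref{rd3f}. I expect the only genuine difficulty to be the measure-theoretic bookkeeping: because $Q$ need not be equivalent to $P$, one must move carefully between ``$P$-a.s.'', ``$R$-a.s.'' and ``$Q$-a.s.'' statements, and this is legitimate precisely because the relevant objects $\rt(X)$, $E_R\ewo$, $\pfo(R)$ are $\F_t$-measurable with $R=Q$ on $\F_t$, and because $Q\ll P$ ensures that continuity from above of $\rt$ is inherited by $\rho^Q$.
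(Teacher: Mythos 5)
Your proof is correct and follows essentially the same route as the paper: the paper's proof of Corollary~\ref{corrobdar} is precisely to replace $P$ by $Q\in\ma$, aggregate via $\rho^Q(X)=E_Q[\rt(X)]$, and rerun the argument $5)\ra 2)$ (and the chain \eqref{rdbeweis}) of Theorem~\ref{robdar}, with the converse obtained by specializing \eqref{rd3} to $Q=P$. The only cosmetic difference is that you extract \eqref{rd3f} directly from the same aggregation chain, where the paper instead cites \cite[Corollary 2.4]{fp6}, and your closing remarks on the $P$-a.s./$R$-a.s./$Q$-a.s.\ bookkeeping via $\F_t$-measurability and $R=Q$ on $\F_t$ are exactly the care the paper leaves implicit.
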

\begin{proof} To show that continuity from above implies representation \eqref{rd3}, we can replace $P$ by a probability measure $Q\in\ma$ and repeat all the reasoning of the proof of 5)$\ra$2) in Theorem~\ref{robdar}. In this case we  consider the static convex risk measure 
\[
\rho^Q(X)=E_Q[\rt(X)]=\sup_{R\in\M_1(P)}(E_R[-X]-\alpha(R)),\qquad X\in\LT,
\]
instead of $\rho^P$. The proof of \eqref{rd3f} follows in the same way from \cite[Corollary 2.4]{fp6}. Conversely,  continuity from above follows from Theorem~\ref{robdar} since representation \eqref{rd3} holds under $P$.
\end{proof}

\begin{remark}
 One can easily see that the set $\Q_t$ in representations \eqref{rd0} and \eqref{rd1} can be replaced by 
$\Pc_t:=\lk Q\in\ma\mk Q\approx P\;\text{on}\;\F_t\rk$.
 Moreover, representation \eqref{rd0} is also equivalent to
\[
\rt(X)=\es_{Q\in\ma}(\ew-\hat{\alpha}_t(Q)),\qquad X\in\LT,
\]
where the conditional expectation under $Q\in\ma$ is defined under $P$ as
\[
E_Q[X|\F_t]:=\frac{E_P[Z_TX|\F_t]}{Z_t}I_{\{Z_t>0\}},
\]
and the extended penalty function $\hat{\alpha}_t$ is given by 
\begin{eqnarray*}
\hat{\alpha}_t(Q) = \left\{ \begin{array}{ll} \alpha_t(Q) & \textrm{on}\;\{\frac{dQ}{dP}|_{\F_t}>0\}; \\
+\infty & \textrm{otherwise}.
\end{array} \right.
\end{eqnarray*}
\end{remark}

In the \textit{coherent} case the penalty function $\pf$ can only take values $0$ or $\infty$ due to positive homogeneity of $\rt$. Thus representation \eqref{rd3} takes the following form.
\begin{corollary}\label{rdcoherent}
A conditional coherent risk measure $\rt$ is continuous from above if and only if it is 
representable in the form
\begin{equation}\label{rdcoh}
\rt(X)=\es_{\Q\in\Q^0_t}\ew,\qquad X\in\LT,
\end{equation}
where
\[
\Q_t^0:=\lk Q\in\Q_t\mk\pf=0\; Q\mbox{-a.s.}\rk.
\]
\end{corollary}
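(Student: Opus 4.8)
The plan is to obtain \eqref{rdcoh} as the specialization of Theorem~\ref{robdar} to the coherent case, the only genuinely new input being that conditional positive homogeneity forces the minimal penalty function $\pfo$ to take only the values $0$ and $+\infty$.

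First I would assume that $\rt$ is continuous from above and determine the shape of $\pfm$. By Theorem~\ref{robdar}, $\rt$ has the representation \eqref{rd1}, and, as recalled right after that theorem, $\pfm=\es_{X\in\LT}(\ew-\rt(X))$ for every $Q\in\Q_t$. Taking $X=0$ together with $\rt(0)=0$ gives $\pfm\ge 0$. On the other hand, for the constant $\lambda=2$ we have $2X\in\LT$, and linearity of the conditional expectation together with conditional positive homogeneity yields $E_Q[-2X\,|\,\F_t]-\rt(2X)=2(\ew-\rt(X))$; taking the essential supremum over $X\in\LT$ and using that multiplication by a positive constant commutes with $\es$, we obtain $2\,\pfm\le\pfm$. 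Combined with $\pfm\ge0$ this forces $\pfm\in\{0,+\infty\}$ $P$-a.s.

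Next I would match the two index sets. Since $\pfm$ is $\F_t$-measurable with values in $\{0,+\infty\}$ and every $Q\in\Q_t$ coincides with $P$ on $\F_t$, the event $\{\pfm=+\infty\}\in\F_t$ has the same probability under $Q$ and under $P$; hence, for $Q\in\Q_t$, the conditions $\pfm=0$ $Q$-a.s., $\pfm=0$ $P$-a.s., and $E_Q[\pfm]<\infty$ are all equivalent. Therefore $\Q_t^f=\Q_t^0$, and for $Q\in\Q_t^0$ one has $\pfm=0$ $P$-a.s., so the robust representation \eqref{rd2} reads $\rt(X)=\es_{Q\in\Q_t^0}(\ew-\pfm)=\es_{Q\in\Q_t^0}\ew$, which is \eqref{rdcoh}.

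For the converse, if $\rt$ admits the representation \eqref{rdcoh}, I would read it as a particular instance of the general robust representation \eqref{rd0}: define $\alpha_t(Q):=0$ for $Q\in\Q_t^0$ and $\alpha_t(Q):=+\infty$ for $Q\in\Q_t\setminus\Q_t^0$; normalization $\rt(0)=0$ forces $\Q_t^0\neq\emptyset$, so that $\esu(-\alpha_t(Q))=0$, and Theorem~\ref{robdar} then gives that $\rt$ is continuous from above. The only step carrying real content is the first one, and even there the subtlety lies merely in invoking the extremal formula for $\pfm$ and in pulling the positive scalar through the essential supremum; both are already justified by the material recalled after Theorem~\ref{robdar}.
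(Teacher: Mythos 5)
Your proposal is correct and follows essentially the route the paper intends: it makes explicit the paper's one-line observation that positive homogeneity forces $\pfo(Q)\in\{0,+\infty\}$ (your scaling argument via the formula $\pfm=\es_{X\in\LT}(\ew-\rt(X))$ is a valid way to see this), identifies $\Q_t^f=\Q_t^0$ using that $\pfm$ is $\F_t$-measurable and $Q=P$ on $\F_t$, and then specializes the representations of Theorem~\ref{robdar}, with the converse read off from \eqref{rd0}. No gaps.
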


\begin{example}\label{avar}
A notable example of a conditional coherent risk measure is \emph{conditional Average Value at Risk} defined as
\begin{eqnarray*}
AV@R_{t,\lambda_t}(X):=\es\{E_Q[-X|\F_t]\mk Q\in\Q_t, \frac{dQ}{dP}\leq \lambda_t^{-1}\}
\end{eqnarray*}
with $\lambda_t\in\Lt$, $0<\lambda_t\leq 1$. Static Average Value at Risk was introduced in \cite{adeh99} as a valid alternative to the widely used yet criticized Value at Risk. 
The conditional version of  Average Value at Risk appeared in \cite{adehk7}, and was also studied in \cite{Samuel, vo6}.
\end{example}

As observed, e.g., in \cite[Remark 3.13]{cdk6}, the minimal penalty function has the local property. In our context it means  that for any $Q^1, Q^2\in\Q_t(Q)$ with the corresponding density processes $Z^1$ and $Z^2$ with respect to $P$, and for any $A\in\F_t$, the probability measure $R$ defined via $\frac{dR}{dP}:=I_AZ^1_T+I_{A^{\text{c}}}Z^2_T$ has the penalty function value
\[
\pfo(R)= I_A\pfo(Q^1)+I_{A^{\text{c}}}\pfo(Q^2)\qquad Q\text{-a.s.}.
\]
In particular $R\in\Q_t^f(Q)$ if $Q^1, Q^2\in\Q_t^f(Q)$. Standard arguments (cf., e.g., \cite[Lemma 1]{dt5}) imply then that the set
\[
\lk E_R[\,-X\,|\,\F_t]-\alpha_t^{\min}(R)\mk R\in\Q^f_t(Q)\rk
\]
is directed upward, thus
\begin{equation}\label{erwrho}
E_{Q}[\rt(X)|\F_s]=\qes_{R\in\Q^f_t(Q)}\left(E_{R}[-X|\F_s]-E_{R}[\alpha_t^{\min}(R)|\F_s]\right)
\end{equation}
for all $Q\in\ma, X\in\LTq$ and $0\le s\leq t$. 

\section{Time consistency properties}\label{sec:tc}

In the dynamical setting risk assessment of a financial position is updated when new information is released. This leads to the notion of a dynamic risk measure. 

\begin{definition}\label{dcrm}
A a sequence $(\rt)\zt$ is called a \emph{dynamic convex risk measure} if $\rt$ is a conditional convex risk measure for each $t\in\T$.
\end{definition}

A key question in the dynamical setting is how the conditional risk assessments at different times are interrelated. This question has led to several notions of time consistency discussed in the literature. 
A unifying view was suggested in \cite{tu8}. 
\begin{definition}\label{sina}
Assume that $(\rt)\zt$ is a dynamic convex risk measure and let $\Y_t$ be a subset of $\LT$ such that $0\in\Y_t$ and $\Y_t+\real=\Y_t$ for each $t\in\T$. Then $(\rt)\zt$ is called \emph{acceptance (resp. rejection) consistent with respect to $(\Y_t)\zt$}, if for all $t\in\T$ such that $t<T$ and for any $X\in\LT$ and $Y\in\Y_{t+1}$ the following condition holds:
\begin{equation}\label{definition1}
\rho_{t+1}(X)\le\rho_{t+1}(Y)\;\;(\mbox{resp.}\,\ge)\quad\Longrightarrow\quad\rho_{t}(X)\le\rho_{t}(Y)\;\;(\mbox{resp.}\,\ge).
\end{equation}
\end{definition}
The idea is that the degree of time consistency is determined by a sequence of benchmark sets $(\Y_t)\zt$: if a financial position at some future time is always preferable to some element of the benchmark set, then it should also be preferable today. The bigger the benchmark set, the stronger is the resulting notion of time consistency. In the following we focus on three cases.

\begin{definition}\label{cons}
We call a dynamic convex risk measure $(\rt)\zt$
\begin{enumerate}
\item \emph{strongly time consistent}, if it is either acceptance consistent or rejection consistent with respect to $\Y_t=\LT$ for all $t$ in the sense of Definition~\ref{sina};
\item \emph{middle acceptance (resp. middle rejection) consistent}, if for all $t$ we have $\Y_t=L^\infty_t$ in Definition~\ref{sina};
\item \emph{weakly acceptance (resp. weakly rejection) consistent}, if for all $t$ we have $\Y_t=\real$ in Definition~\ref{sina}.
\end{enumerate}
\end{definition}
Note that there is no difference between rejection consistency and acceptance consistency with respect to $\LT$, since the role of $X$ and $Y$ is symmetric in that case. Obviously strong time consistency implies both middle rejection and middle acceptance consistency, and middle rejection (resp. middle acceptance) consistency implies weak rejection (resp. weak acceptance) consistency. In the rest of the paper we drop the terms ``middle'' and ``strong'' in order to simplify the terminology.

\subsection{Time consistency}\label{subsec:tc}
Time consistency has been studied extensively in the recent work on dynamic risk measures, see \cite{adehk7,d6,rie4,dt5,cdk6,ks5,burg,bn8,ipen7,fp6,ck6, dpr10} and the references therein. In the next proposition we recall some equivalent characterizations of time consistency. 

\begin{proposition}\label{def2}
A dynamic convex risk measure $(\rt)\zt$ is time consistent if and only if any of the following conditions holds:
\begin{enumerate}
\item for all $t\in\T$ such that $t<T$ and for all $X,Y\in\LT$:
\begin{equation}\label{tc4}
\rtt(X)\le\rtt(Y)\;\,P\text{-a.s}\quad\Longrightarrow\quad\rt(X)\le\rt(Y)\;\,P\text{-a.s.};
\end{equation}
\item for all $t\in\T$ such that $t<T$ and for all $X,Y\in\LT$:
\begin{equation}\label{tc2}
\rtt(X)=\rtt(Y)\;\,P\text{-a.s}\quad\Longrightarrow\quad\rt(X)=\rt(Y)\;\,P\text{-a.s.};
\end{equation}
\item
$(\rt)\zt$ is recursive, i.e.
\[
\rt=\rt(-\rs)\quad P\text{-a.s.}
\]
for all $t,s\ge 0$ such that $t,t+s\in\T$.
\end{enumerate}
\end{proposition}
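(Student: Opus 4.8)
The plan is to prove the circle of implications $3)\Rightarrow 1)\Rightarrow 2)\Rightarrow$ (time consistency) $\Rightarrow 3)$, where ``time consistency'' means the defining property \eqref{definition1} with $\Y_t=\LT$, i.e.\ for all $t<T$ and $X,Y\in\LT$, $\rtt(X)\le\rtt(Y)$ implies $\rt(X)\le\rt(Y)$. Throughout I would exploit conditional cash invariance (Definition~\ref{defrm}(i)) together with monotonicity, which are the only two properties really needed; the key elementary fact is that for $Z_t\in\Lt$ one has $\rtt(X+Z_t)=\rtt(X)-Z_t$ and $\rt(-Z_t)=Z_t$, so the one-step risk applied to an $\F_{t+1}$-measurable position recovers that position up to sign.

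First I would do $3)\Rightarrow 1)$: assume recursiveness, in particular $\rt=\rt(-\rtt(\cdot))$ for $s=1$, and iterate if $s>1$. If $\rtt(X)\le\rtt(Y)$ $P$-a.s., then by monotonicity of $\rt$ applied to the $\F_{t+1}$-measurable random variables $-\rtt(X)\ge -\rtt(Y)$ we get $\rt(-\rtt(X))\le\rt(-\rtt(Y))$, i.e.\ $\rt(X)\le\rt(Y)$; for general $t+s\in\T$ one peels off one step at a time. Next, $1)\Rightarrow 2)$ is immediate by applying \eqref{tc4} in both directions. Then $2)\Rightarrow$ time consistency: suppose only the equality version \eqref{tc2} holds and $\rtt(X)\le\rtt(Y)$. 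Put $Z_{t+1}:=\rtt(X)-\rtt(Y)\le 0$, an $\F_{t+1}$-measurable bounded random variable, and consider $Y':=Y-Z_{t+1}$; by conditional cash invariance $\rtt(Y')=\rtt(Y)+Z_{t+1}=\rtt(X)$, so by \eqref{tc2} we get $\rt(Y')=\rt(X)$, hence $\rt(X)=\rt(Y)+Z_{t+1}\le\rt(Y)$ since $Z_{t+1}\le 0$. This is the inequality we wanted.

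Finally I would prove (time consistency) $\Rightarrow 3)$, which I expect to be the main obstacle and the only step requiring real care. Fix $t$ and $s$ with $t+s\in\T$; by induction it suffices to treat $s=1$, i.e.\ to show $\rt(X)=\rt(-\rtt(X))$ for all $X\in\LT$. The random variable $\rtt(X)$ lies in $L^\infty_{t+1}$, so $-\rtt(X)\in\LT$ and the right-hand side makes sense. Now observe that by conditional cash invariance $\rtt\bigl(-\rtt(X)\bigr)=\rtt(0)+\rtt(X)=\rtt(X)$, using normalization $\rtt(0)=0$; thus $X$ and $-\rtt(X)$ have the same one-step risk at time $t+1$. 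Applying time consistency (in its equality form, which follows from the inequality form by symmetry) yields $\rt(X)=\rt(-\rtt(X))$, which is exactly recursiveness for $s=1$. For $s>1$ one writes $\rt(X)=\rt(-\rho_{t+1}(X))=\rt(-\rho_{t+1}(-\rho_{t+2}(X)))=\cdots$, collapsing the nested expression using the $s=1$ case at each stage together with the tower-type identity $\rho_{t+1}(X)=\rho_{t+1}(-\rho_{t+2}(X))$; a clean induction on $s$ makes this rigorous. The subtle point to watch is that all the manipulations stay within $\LT$ (so that the conditional risk measures are defined) and that ``$P$-a.s.'' equalities are preserved under the monotonicity and cash-invariance steps — both are routine given the hypotheses, but worth stating explicitly.
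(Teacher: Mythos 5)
Your overall plan is sound and, apart from one step, coincides with the paper's proof (the chain through one-step recursiveness, the induction on $s$, and recursiveness $\Rightarrow$ time consistency by monotonicity). The step that fails as written is your implication $2)\Rightarrow$ time consistency: you set $Z_{t+1}:=\rtt(X)-\rtt(Y)$, $Y':=Y-Z_{t+1}$, and conclude $\rt(X)=\rt(Y')=\rt(Y)+Z_{t+1}\le\rt(Y)$. The identity $\rt(Y-Z_{t+1})=\rt(Y)+Z_{t+1}$ invokes conditional cash invariance of $\rt$ for an amount that is only $\F_{t+1}$-measurable, whereas Definition~\ref{defrm}(i) allows shifts only by $m_t\in\Lt$. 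The identity is in fact false in general: its validity would force the $\F_t$-measurable random variable $\rt(X)-\rt(Y)$ to equal $Z_{t+1}$, which need not be $\F_t$-measurable (try $\rt(\cdot)=E[-\,\cdot\,|\F_t]$). Since time consistency with respect to $\Y_t=\LT$ is precisely condition 1), this flawed step is exactly the nontrivial implication $2)\Rightarrow1)$, so the circle of implications is genuinely broken at this point.

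The repair is short, and you can keep your construction: since $Z_{t+1}\le0$ you have $Y'=Y-Z_{t+1}\ge Y$, so monotonicity gives $\rt(Y')\le\rt(Y)$; combined with $\rt(X)=\rt(Y')$, which follows from \eqref{tc2} because $\rtt(Y')=\rtt(Y)+Z_{t+1}=\rtt(X)$ by cash invariance at time $t+1$, this yields $\rt(X)\le\rt(Y)$. The paper instead avoids the issue by ordering the implications differently: from \eqref{tc2} it first deduces one-step recursiveness (using $\rtt(-\rtt(X))=\rtt(X)$, valid cash invariance at level $t+1$), extends it to full recursiveness by the same induction you describe, and then recovers time consistency from recursiveness by monotonicity -- which is exactly your (correct) step $3)\Rightarrow1)$. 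The remaining parts of your proposal ($3)\Rightarrow1)\Rightarrow2)$ and time consistency $\Rightarrow3)$) are correct and essentially identical to the paper's argument.
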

\begin{proof}
It is obvious that time consistency implies condition (\ref{tc4}), and that (\ref{tc4}) implies (\ref{tc2}). By cash invariance we have $\rtt(-\rtt(X))=\rtt(X)$ and hence one-step recursiveness follows from (\ref{tc2}). We prove that one-step recursiveness implies recursiveness by induction on $s$. For $s=1$ the claim is true for all $t$. Assume that the induction hypothesis holds for each $t$ and all $k\le s$ for some $s\ge 1$. Then we obtain
\begin{align*}
\rt(-\rho_{t+s+1}(X))&=\rt(-\rs(-\rho_{t+s+1}(X)))\\
&=\rt(-\rs(X))\\
&=\rt(X),
\end{align*}
where we have applied the induction hypothesis to the random variable $-\rho_{t+s+1}(X)$. Hence the claim
follows. Finally, due to monotonicity, recursiveness implies time consistency.
\end{proof}

If we restrict a conditional convex risk measure $\rt$ to the space $L^\infty_{t+s}$ for some $s\ge0$, the corresponding acceptance set is given by
\[
\A_{t,t+s}:=\lk X\in L^\infty_{t+s}\mk\rt(X)\le0\;\,P\text{-a.s.}\rk,
\]
and the minimal penalty function by
\begin{equation}\label{ats}
\pfts:=\qes_{X\in\A_{t,t+s}}\,\ew, \qquad Q\in\ma.
\end{equation}

The following lemma recalls equivalent characterizations of recursive inequalities in terms of acceptance sets from \cite[Lemma 4.6]{fp6};  property \eqref{eqacset1} was shown in \cite{d6}.

\begin{lemma}\label{4.6}
Let $(\rt)\zt$ be a dynamic convex risk measure. Then the following equivalences hold for all 
$s,t$ such that $t,t+s\in\T$ and all $X\in\LT$:
\begin{align}
X\in\A_{t,t+s}+\A_{t+s}&\iff-\rho_{t+s}(X)\in\A_{t,t+s}\label{eqacset1}\\
\A_t\subseteq \A_{t,t+s}+\A_{t+s}&\iff\rt(-\rho_{t+s})\le\rt\quad\f\label{eqacset2}\\
\A_t\supseteq \A_{t,t+s}+\A_{t+s}&\iff\rt(-\rho_{t+s})\ge\rt\quad\f.\label{eqacset3}
\end{align}
\end{lemma}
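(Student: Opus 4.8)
The plan is to prove the three equivalences in order, obtaining the second and third essentially as consequences of the first. For \eqref{eqacset1}, I would argue both implications by hand. For the direction ``$\Leftarrow$'', if $-\rho_{t+s}(X)\in\A_{t,t+s}$, then writing $X = -\rho_{t+s}(X) + \big(X + \rho_{t+s}(X)\big)$ it suffices to observe that the second summand lies in $\A_{t+s}$: indeed $X+\rho_{t+s}(X)$ is $\F_{t+s}$-measurable (wait — it is not, $X$ need not be $\F_{t+s}$-measurable), so instead I would use cash invariance at time $t+s$ to get $\rho_{t+s}\big(X+\rho_{t+s}(X)\big)=\rho_{t+s}(X)-\rho_{t+s}(X)=0$, hence $X+\rho_{t+s}(X)\in\A_{t+s}$, and $-\rho_{t+s}(X)\in L^\infty_{t+s}$ lies in $\A_{t,t+s}$ by hypothesis. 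For ``$\Rightarrow$'', suppose $X = U + V$ with $U\in\A_{t,t+s}$ and $V\in\A_{t+s}$. Monotonicity of $\rho_{t+s}$ together with $\rho_{t+s}(V)\le 0$ and cash invariance give $\rho_{t+s}(X)=\rho_{t+s}(U+V)\le\rho_{t+s}(U)-0$ — more precisely, since $U\in L^\infty_{t+s}$, cash invariance yields $\rho_{t+s}(U+V)=\rho_{t+s}(V)-U\le -U$, so $-\rho_{t+s}(X)\ge U$, and then solidity of $\A_{t,t+s}$ (property 2) of Proposition~\ref{acceptset}, which transfers to the restricted risk measure) forces $-\rho_{t+s}(X)\in\A_{t,t+s}$.

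For \eqref{eqacset2}, the direction ``$\Leftarrow$'' is immediate: if $X\in\A_t$, then $\rho_t(-\rho_{t+s}(X))\le\rho_t(X)\le 0$, so $-\rho_{t+s}(X)\in\A_{t,t+s}\subseteq\A_{t,t+s}+\A_{t+s}$, and then \eqref{eqacset1} gives what we want after noting $X\in\A_{t,t+s}+\A_{t+s}$ follows from $-\rho_{t+s}(X)\in\A_{t,t+s}$ by the argument above (write $X$ as $-\rho_{t+s}(X)$ plus an element of $\A_{t+s}$). For ``$\Rightarrow$'', fix $X\in\LT$ and set $\tilde X := X + \rho_{t+s}(X)$, so that $\rho_{t+s}(\tilde X) = 0$, hence $\tilde X\in\A_{t+s}\subseteq\A_t$; but also trivially $0\in\A_{t,t+s}$, so $\tilde X\in\A_{t,t+s}+\A_{t+s}$... the cleaner route is: take $X\in\A_t$; by hypothesis $X\in\A_{t,t+s}+\A_{t+s}$, so \eqref{eqacset1} gives $-\rho_{t+s}(X)\in\A_{t,t+s}$, i.e. $\rho_t(-\rho_{t+s}(X))\le 0$; applying this to the position $X+m_t$ for $m_t\in\Lt$ and using cash invariance twice ($\rho_t(-\rho_{t+s}(X+m_t)) = \rho_t(-\rho_{t+s}(X)+m_t) = \rho_t(-\rho_{t+s}(X)) - m_t$ and $\rho_t(X+m_t)=\rho_t(X)-m_t$) shows that ``$X\in\A_t\Rightarrow \rho_t(-\rho_{t+s}(X))\le 0$'' upgrades to ``$\rho_t(-\rho_{t+s}(X))\le\rho_t(X)$ for all $X$'' by the standard translation trick: choose $m_t = \rho_t(X)$, so $X+m_t\in\A_t$, whence $\rho_t(-\rho_{t+s}(X))-\rho_t(X)=\rho_t(-\rho_{t+s}(X+m_t))\le 0$. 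Equivalence \eqref{eqacset3} is proved by the symmetric argument, reversing all inclusions and inequalities.

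The main obstacle, and the only step that needs care, is the translation argument linking the inclusion of acceptance sets to the pointwise inequality between $\rho_t(-\rho_{t+s})$ and $\rho_t$: one must check that the set inclusion, which a priori only controls positions already in $\A_t$, propagates to all of $\LT$ via conditional cash invariance applied simultaneously to $\rho_t$ and to the composition $\rho_t\circ(-\rho_{t+s})$. Everything else — solidity, monotonicity, $0\in\A_{t,t+s}$, and the decomposition of $X$ through $-\rho_{t+s}(X)$ — is routine and follows directly from Definition~\ref{defrm} and Proposition~\ref{acceptset} applied to both $\rho_t$ and its restriction to $L^\infty_{t+s}$.
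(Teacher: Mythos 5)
Your proposal is correct and follows essentially the same route as the paper: \eqref{eqacset1} is proved directly from conditional cash invariance of $\rho_{t+s}$ plus monotonicity/solidity, and \eqref{eqacset2}, \eqref{eqacset3} are then reduced to \eqref{eqacset1} via the translation trick of shifting $X$ by the $\F_t$-measurable amount $\rho_t(X)$ (resp.\ $\rho_t(-\rho_{t+s}(X))$), exactly the mechanism used in the paper's proof. The two false starts you flag yourself (the measurability of $X+\rho_{t+s}(X)$ and the implicit use of $\A_{t+s}\subseteq\A_t$) are correctly abandoned, so the final argument is sound.
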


\begin{proof}
To prove ``$\ra$'' in (\ref{eqacset1}) let $X=X_{t,t+s}+X_{t+s}$ with $X_{t,t+s}\in\A_{t,t+s}$ and
$X_{t+s}\in\A_{t+s}$. Then
\[
\rho_{t+s}(X)=\rho_{t+s}(X_{t+s})-X_{t,t+s}\le-X_{t,t+s}
\]
by cash invariance, and monotonicity implies
\[
\rt(-\rho_{t+s}(X))\le\rt(X_{t,t+s})\le0.
\]
The converse direction follows immediately from
$X=X+\rho_{t+s}(X)-\rho_{t+s}(X)$ and $X+\rho_{t+s}(X)\in\A_{t+s}$ for all $X\in\LT$.

In order to show ``$\ra$'' in (\ref{eqacset2}), fix $X\in\LT$. Since 
$X+\rt(X)\in\A_t\subseteq \A_{t,t+s}+\A_{t+s}$, we obtain
\[
\rho_{t+s}(X)-\rt(X)=\rho_{t+s}(X+\rt(X))\in-\A_{t,t+s},
\]
by (\ref{eqacset1}) and cash invariance. Hence
\[
\rt(-\rho_{t+s}(X))-\rt(X)=\rt(-(\rho_{t+s}(X)-\rt(X)))\le0\quad\f.
\]
To prove ``$\Leftarrow$'' let $X\in\A_t$. Then $-\rho_{t+s}(X)\in\A_{t,t+s}$ by the right hand side
of (\ref{eqacset2}), and hence $X\in\A_{t,t+s}+\A_{t+s}$ by (\ref{eqacset1}).

Now let $X\in\LT$ and assume $\A_t\supseteq \A_{t,t+s}+\A_{t+s}$. Then
\begin{align*}
\rt(-\rho_{t+s}(X))+X&=\rt(-\rho_{t+s}(X))-\rho_{t+s}(X)+\rho_{t+s}(X)+X\\&\in\A_{t,t+s}+\A_{t+s}\subseteq\A_t.
\end{align*}
Hence 
\[
\rt(X)-\rt(-\rho_{t+s}(X))=\rt(X+\rt(-\rho_{t+s}(X)))\le 0
\]
by cash invariance, and this proves ``$\ra$'' in (\ref{eqacset3}). For the converse direction let 
$X\in\A_{t,t+s}+\A_{t+s}$. Since $-\rho_{t+s}(X)\in\A_{t,t+s}$ by (\ref{eqacset1}), we obtain
\[
\rt(X)\le\rt(-\rho_{t+s}(X))\le0,
\]
hence $X\in\A_t$.\end{proof}

We also have the following relation between acceptance sets and penalty functions; cf.\ \cite[Lemma 2.2.5]{ipen7}.
\begin{lemma}\label{setpen}
Let $(\rt)\zt$ be a dynamic convex risk measures. Then the following implications hold for all 
$t,s$ such that $t,t+s\in\T$ and for all $Q\in\ma$:
\begin{align*}
\A_t\subseteq \A_{t,t+s}+\A_{t+s}&\Rightarrow\pf\le\pfts+E_Q[\pfs|\F_t]\quad\qf
\\
\A_t\supseteq \A_{t,t+s}+\A_{t+s}&\Rightarrow\pf\ge\pfts+E_Q[\pfs|\F_t]\quad\qf.
\end{align*}
\end{lemma}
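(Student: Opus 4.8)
The plan is to prove each implication separately, and to exploit the fact that the penalty functions $\pfts$, $\pfs$, and $\pf$ are defined as $Q$-essential suprema over the acceptance sets $\A_{t,t+s}$, $\A_{t+s}$, and $\A_t$ respectively, via \eqref{pf1} and \eqref{ats}. The two implications are symmetric in spirit, so I will describe the first one and indicate the modification for the second.

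For the first implication, assume $\A_t \subseteq \A_{t,t+s} + \A_{t+s}$. Fix $Q \in \ma$. The key observation is that by Lemma~\ref{erwpf} (applied at level $t+s$, with $s$ there replaced by $t$), we have $E_Q[\pfs \mid \F_t] = \qes_{Z \in \A_{t+s}} E_Q[-Z \mid \F_t]$. So it suffices to show
\[
\pf \le \pfts + \qes_{Z\in\A_{t+s}} E_Q[-Z\mid\F_t] \quad \qf.
\]
Take any $X \in \A_t$. By the inclusion hypothesis, write $X = X_{t,t+s} + X_{t+s}$ with $X_{t,t+s} \in \A_{t,t+s}$ and $X_{t+s} \in \A_{t+s}$. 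Then, using that $\F_{t+s}$-measurability of $X_{t,t+s}$ lets it be pulled out of the inner conditional expectation, the tower property gives
\[
E_Q[-X\mid\F_t] = E_Q\big[E_Q[-X_{t+s}\mid\F_{t+s}]\mid\F_t\big] + E_Q[-X_{t,t+s}\mid\F_t] \le E_Q[\pfs\mid\F_t] + \pfts,
\]
where the last step uses the defining inequalities $E_Q[-X_{t+s}\mid\F_{t+s}] \le \pfs$ (from \eqref{pf1} at level $t+s$) and $E_Q[-X_{t,t+s}\mid\F_t] \le \pfts$ (from \eqref{ats}). Since the right-hand side does not depend on the chosen decomposition, taking the $Q$-essential supremum over all $X \in \A_t$ on the left yields $\pf \le \pfts + E_Q[\pfs\mid\F_t]$, $Q$-a.s., which is the claim.

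For the second implication, assume $\A_t \supseteq \A_{t,t+s} + \A_{t+s}$. Now one reverses the roles: take arbitrary $X_{t,t+s} \in \A_{t,t+s}$ and $X_{t+s} \in \A_{t+s}$, so that $X := X_{t,t+s} + X_{t+s} \in \A_t$ by hypothesis. Then the same tower-property computation gives
\[
\pf \ge E_Q[-X\mid\F_t] = E_Q\big[E_Q[-X_{t+s}\mid\F_{t+s}]\mid\F_t\big] + E_Q[-X_{t,t+s}\mid\F_t].
\]
Taking $Q$-essential suprema over $X_{t+s} \in \A_{t+s}$ and over $X_{t,t+s} \in \A_{t,t+s}$ independently on the right, one obtains $\pf \ge \pfts + E_Q[\pfs\mid\F_t]$, provided one can interchange these two suprema with the sum. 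This is where the only real subtlety lies: passing the $Q$-essential supremum inside the conditional expectation $E_Q[\,\cdot\mid\F_t]$ and splitting the supremum of a sum into the sum of suprema both require an upward-directedness argument. For the first, the set $\{E_Q[-X_{t+s}\mid\F_{t+s}] : X_{t+s}\in\A_{t+s}\}$ is directed upward (shown exactly as in the proof of Lemma~\ref{erwpf}, gluing two elements along an $\F_{t+s}$-set), so $E_Q[\qes_{X_{t+s}} E_Q[-X_{t+s}\mid\F_{t+s}]\mid\F_t] = \qes_{X_{t+s}} E_Q[E_Q[-X_{t+s}\mid\F_{t+s}]\mid\F_t]$ by monotone convergence along an approximating sequence; for the second, $\{E_Q[-X_{t,t+s}\mid\F_t] : X_{t,t+s}\in\A_{t,t+s}\}$ is likewise directed upward, so the two $\F_t$-measurable families can be optimized simultaneously. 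I expect this directedness/interchange bookkeeping to be the main (and only) obstacle; the algebraic core of both implications is the one-line tower-property estimate above.
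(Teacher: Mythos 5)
Your argument is correct and is exactly the route the paper intends: its proof is the one-line remark that the claim is ``straightforward from the definition of the minimal penalty function and Lemma~\ref{erwpf}'', and your write-up simply makes that explicit (decompose or recombine $X=X_{t,t+s}+X_{t+s}$, use the tower property, and invoke Lemma~\ref{erwpf} for the interchange of the $Q$-essential supremum with $E_Q[\,\cdot\,|\,\F_t]$). The only cosmetic point is that in the second implication no extra directedness bookkeeping is needed beyond Lemma~\ref{erwpf} itself: taking the essential suprema sequentially (first over $X_{t+s}$ with $X_{t,t+s}$ fixed, then over $X_{t,t+s}$) already gives the claim.
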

\begin{proof} Straightforward from the definition of the minimal penalty function and Lemma~\ref{erwpf}. 
\end{proof}

The following theorem gives equivalent characterizations of time consistency in terms of acceptance sets, penalty functions, and a supermartingale property of the risk process.

\begin{theorem}\label{eqchar}
Let $(\rt)\zt$ be a dynamic convex risk measure such that each $\rt$ is \ctse. Then the
following conditions are equivalent:
\begin{enumerate}
\item $(\rt)\zt$ is time consistent.
\item $\A_t=\A_{t,t+s}+\A_{t+s}\,$ for all $t,s$ such that $t,t+s\in\T$.
\item $\pf=\pfts+E_Q[\,\pfs\,|\,\F_t\,]\;\; Q$-a.s. for all $t,s$ such that $t,t+s\in\T$ and all $\,Q\in\ma$.
\item For all $X\in\LTq$ and  all $t,s$ such that $t,t+s\in\T$ and all $\,Q\in\ma$ we have
\[
 E_Q[\,\rho_{t+s}(X)+\pfs\,|\,\F_t]\le\rt(X)+\pf\quad\qf.
\]

\end{enumerate}
\end{theorem}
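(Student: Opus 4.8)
The plan is to establish a cycle of implications $1)\Rightarrow 2)\Rightarrow 3)\Rightarrow 4)\Rightarrow 1)$, drawing on the machinery already assembled. For $1)\Rightarrow 2)$: by Proposition~\ref{def2} time consistency is equivalent to recursiveness, so in particular $\rt(-\rho_{t+s})=\rt$ $P$-a.s. Then the two inclusions $\A_t\subseteq\A_{t,t+s}+\A_{t+s}$ and $\A_t\supseteq\A_{t,t+s}+\A_{t+s}$ follow at once from \eqref{eqacset2} and \eqref{eqacset3} in Lemma~\ref{4.6}, giving the set equality. The implication $2)\Rightarrow 3)$ splits into two inequalities. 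The inequality ``$\le$'' is immediate from Lemma~\ref{setpen}; the reverse ``$\ge$'' is the delicate part. Here I would argue directly from the definition \eqref{ats} of $\pfts$: for $X\in\A_{t,t+s}$ and $Y\in\A_{t+s}$ we have $X+Y\in\A_t$ by $2)$, and
\[
E_Q\ewo+E_Q[-Y\,|\,\F_t]=E_Q[-(X+Y)\,|\,\F_t]\le\pf,
\]
and then I would take the $Q$-essential supremum over $X$ and over $Y$ separately, using the fact (as in the proof of Lemma~\ref{erwpf}, via an upward-directedness argument using conditional convexity of $\rho_t$ and $\rho_{t+s}$) that the families $\{E_Q\ewo : X\in\A_{t,t+s}\}$ and $\{E_Q[-Y\,|\,\F_t] : Y\in\A_{t+s}\}$ are directed upward, so that the supremum of the sum is the sum of the suprema, and applying Lemma~\ref{erwpf} to identify $\qes_{Y\in\A_{t+s}}E_Q[-Y\,|\,\F_t]$ with $E_Q[\pfs\,|\,\F_t]$.

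For $3)\Rightarrow 4)$: fix $Q\in\ma$ and $X\in\LTq$. Applying the robust representation of $\rho_{t+s}$ under $Q$ from Corollary~\ref{corrobdar} together with the conditional-expectation formula \eqref{erwrho}, one has
\[
E_Q[\rho_{t+s}(X)\,|\,\F_t]=\qes_{R\in\Q_{t+s}^f(Q)}\bigl(E_R[-X\,|\,\F_t]-E_R[\pfs\,|\,\F_t]\bigr)\quad Q\text{-a.s.},
\]
where for $R\in\Q_{t+s}^f(Q)$ we have $R=Q$ on $\F_{t+s}\supseteq\F_t$, so $E_R[\pfs\,|\,\F_t]=E_Q[\pfs\,|\,\F_t]$ and hence
\[
E_Q[\rho_{t+s}(X)+\pfs\,|\,\F_t]=\qes_{R\in\Q_{t+s}^f(Q)}E_R[-X\,|\,\F_t].
\]
Now each such $R$ also satisfies $R=Q$ on $\F_t$, and by $3)$ (with $t,t+s$ and the measure $R$, noting $\pf(R)=\pfts(R)+E_R[\pfs(R)\,|\,\F_t]$ and that $E_R[\pfs(R)\,|\,\F_t]<\infty$ forces $R\in\Q_t^f(Q)$ with $E_R[\pf(R)]<\infty$) one checks $E_R[-X\,|\,\F_t]\le\rt(X)+\pf(R)$. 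But actually the cleanest route is: by $3)$, $\pfts(R)+E_R[\pfs(R)\,|\,\F_t]=\pf(R)$, so
\[
E_R[-X\,|\,\F_t]-\pf(R)=\bigl(E_R[-X\,|\,\F_t]-\pfts(R)\bigr)-E_R[\pfs(R)\,|\,\F_t],
\]
and taking $Q$-ess sup over $R\in\Q_{t+s}^f(Q)$ and using the robust representation \eqref{rd3f} of $\rt$ restricted to $L^\infty_{t+s}$, one recognizes the right-hand side; rearranging yields exactly the claimed inequality $E_Q[\rho_{t+s}(X)+\pfs\,|\,\F_t]\le\rt(X)+\pf$. Finally $4)\Rightarrow 1)$: specialize $Q$ to $P$. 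Then $\alpha_t^{\min}(P)=0$ and $\alpha_{t+s}^{\min}(P)=0$ $P$-a.s. (since $\rho_t(0)=0$), so $4)$ reduces to $E_P[\rho_{t+s}(X)\,|\,\F_t]\le\rt(X)$; more precisely, applying $4)$ along a sequence of measures $P^{(n)}$ that approximate the worst case, or simply observing that $4)$ for all $Q\in\Q_t(P)$ combined with the robust representation gives $\rt(-\rho_{t+s}(X))\le\rt(X)$, and then the reverse inequality follows by applying the same to $-\rho_{t+s}(X)$ in place of $X$ together with $\rho_{t+s}(-\rho_{t+s}(X))=\rho_{t+s}(X)$; hence recursiveness, which is time consistency by Proposition~\ref{def2}.

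I expect the main obstacle to be the reverse inequality in $2)\Rightarrow 3)$ and the careful bookkeeping in $3)\Rightarrow 4)$: one must be vigilant that the penalty-function values $\pf(R)$, $\pfts(R)$, $\pfs(R)$ are all computed for the \emph{same} measure $R$, exploit $R=Q$ on $\F_{t+s}$ to pull conditional expectations of $\F_{t+s}$-measurable penalty terms through, and invoke the directed-upward arguments to interchange $Q$-ess sup with conditional expectation — this is where continuity from above of each $\rt$ (guaranteeing the representations \eqref{rd3}, \eqref{rd3f} and formula \eqref{erwrho}) is essential. The other steps are formal consequences of Lemmas~\ref{4.6}, \ref{setpen}, \ref{erwpf}, Proposition~\ref{def2}, and Corollary~\ref{corrobdar}.
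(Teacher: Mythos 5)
Your cycle $1)\ra 2)\ra 3)\ra 4)\ra 1)$ is the paper's outline, and the first two steps are fine: $1)\ra 2)$ via Proposition~\ref{def2} and Lemma~\ref{4.6} is exactly the paper's route, and your ``delicate'' inequality in $2)\ra 3)$ is nothing but the second implication of Lemma~\ref{setpen}, which you re-prove correctly. In $3)\ra 4)$, however, your first displayed identity misquotes \eqref{erwrho}: the correct formula has $E_R[\alpha_{t+s}^{\min}(R)\,|\,\F_t]$ inside the essential supremum, not the penalty of $Q$, and since $\alpha_{t+s}^{\min}(R)\neq\alpha_{t+s}^{\min}(Q)$ in general (the time-$t+s$ penalty depends on $R$ beyond $\F_{t+s}$), the identity $E_Q[\rho_{t+s}(X)+\pfs\,|\,\F_t]=\qes_{R\in\Q^f_{t+s}(Q)}E_R[-X\,|\,\F_t]$ is false. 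Your ``cleanest route'' is the right repair and is essentially the paper's computation -- apply 3) to each $R\in\Q^f_{t+s}(Q)$, use $\alpha_{t,t+s}^{\min}(R)=\alpha_{t,t+s}^{\min}(Q)$ (valid because $R=Q$ on $\F_{t+s}$), bound $\qes_{R}\bigl(E_R[-X|\F_t]-\alpha_t^{\min}(R)\bigr)\le\rt(X)$ by \eqref{rd3} since $\Q^f_{t+s}(Q)\subseteq\Q_t(Q)$, then apply 3) to $Q$ itself, treating $\{\pf=\infty\}$ separately so the rearrangement on $\{\pf<\infty\}$ is legitimate -- but as written these steps are only gestured at (``\eqref{rd3f} of $\rt$ restricted to $L^\infty_{t+s}$'' is not the ingredient needed).

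The genuine gap is $4)\ra 1)$. First, $\alpha_t^{\min}(P)=0$ is not implied by normalization; normalization gives only $\alpha_t^{\min}(Q)\ge 0$, and $\alpha_t^{\min}(P)$ may be strictly positive or even $+\infty$, so 4) does not ``reduce'' under $P$ as you claim. Second, your fallback establishes only the one-sided inequality $\rt(-\rho_{t+s}(X))\le\rt(X)$ (this part is correct, using 4), $\alpha_{t+s}^{\min}\ge0$ and \eqref{rd2}), which by Proposition~\ref{rejrecursiveness} is just rejection consistency and is strictly weaker than time consistency. Your route to the reverse inequality -- substituting $-\rho_{t+s}(X)$ for $X$ and using $\rho_{t+s}(-\rho_{t+s}(X))=\rho_{t+s}(X)$ -- is vacuous: it returns $\rt(-\rho_{t+s}(X))\le\rt(-\rho_{t+s}(X))$ and nothing more. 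What is needed (and what the paper does) is to verify property \eqref{tc4} of Proposition~\ref{def2}: for $\rho_{t+1}(X)\le\rho_{t+1}(Y)$ and $Q\in\Q^f_t$, combine 4) applied to $Y$ with the lower bound $\rho_{t+1}(X)\ge E_Q[-X|\F_{t+1}]-\alpha_{t+1}^{\min}(Q)$ from \eqref{rd3f} under $Q$ (noting $Q\in\Q^f_{t+1}(Q)$, which follows from 4)), obtaining $\rt(Y)+\pf\ge E_Q[-X|\F_t]$, and then take the essential supremum over $Q\in\Q^f_t$ and invoke \eqref{rd2}. Without combining 4) with this dual lower bound at time $t+1$, the implication $4)\ra 1)$ is not established.
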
 

Equivalence of properties 1) and 2) of Theorem~\ref{eqchar} was proved in \cite{d6}. Characterizations of time consistency in terms of penalty functions as in 3) of Theorem~\ref{eqchar} appeared in \cite{fp6, bn6, ck6, bn8}; similar results for risk measures for processes were given in \cite{cdk6, ck6}. The supermartingale property  as in 4) of Theorem~\ref{eqchar} was obtained in \cite{fp6}; cf.\ also \cite{bn8} for the absolutely continuous case. 

\begin{proof} The proof of 1)$\ra$2)$\ra$3) follows from  Lemma~\ref{4.6} and Lemma~\ref{setpen}. To prove 3)$\ra$4) fix $Q\in\ma$. 
By \eqref{erwrho} we have
\[
E_{Q}[\rs(X)|\F_t]=\qes_{R\in\Q^f_{t+s}(Q)}\left(E_{R}[-X|\F_{t}]-E_{R}[\alpha_{t+s}^{\min}(R)|\F_t]\right). 
\]
On the set $\lk\pf=\infty\rk$ property 4) holds trivially. On the set $\lk\pf<\infty\rk$ property 3) implies 
$E_Q[\alpha_{t+s}^{\min}(Q)|\F_t]<\infty$ and $\alpha_{t,t+s}^{\min}(Q)<\infty$, then for $R\in\Q^f_{t+s}(Q)$
\begin{equation*}
\pfo(R)=\alpha_{t,t+s}^{\min}(Q)+E_R[\alpha_{t+s}^{\min}(R)|\F_t]<\infty\quad\qf.
\end{equation*}
Thus
\begin{equation*}
E_{Q}[\rs(X)+\alpha_{t+s}^{\min}(Q)|\F_t]=\qes_{R\in\Q^f_{t+s}(Q)}\left(E_{R}[-X|\F_{t}]-\alpha_{t}^{\min}(R)\right)+\pf
\end{equation*}
on $\lk\pf<\infty\rk$.
Moreover, since $\Q_{t+s}^f(Q)\subseteq \Q_{t}(Q)$, \eqref{rd3} implies
\begin{equation*}
E_{Q}[\rs(X)+\alpha_{t+s}^{\min}(Q)|\F_t]\le\qes_{R\in \Q_{t}(Q)}\left(E_{R}[-X|\F_{t}]-\alpha_{t}^{\min}(R)\right)+\pf=\rt(X)+\pf\quad\qf.
\end{equation*}
It remains to prove 4)$\ra$1). To this end fix $Q\in\Q_t^f$ and $X,Y\in\LT$ such that $\rho_{t+1}(X)\le \rho_{t+1}(Y)$. Note that $E_Q[\alpha_{t+s}(Q)]<\infty$ due to 4), hence $Q\in\Q_{t+s}^f(Q)$.
Using 4) and representation \eqref{rd3f} for $\rho_{t+s}$ under $Q$, we obtain
\begin{align*}
\rt(Y)+\alpha_t^{\min}(Q)&\geq E_Q[\rho_{t+1}(Y)+\alpha_{t+1}^{\min}(Q)|\F_t]\\
&\geq E_Q[\rho_{t+1}(X)+\alpha_{t+1}^{\min}(Q)|\F_t]\\
&\geq E_Q[E_Q[-X|\F_{t+1}]-\alpha_{t+1}^{\min}(Q)+\alpha_{t+1}^{\min}(Q)|\F_t]\\
&=E_Q[-X|\F_t].
\end{align*}
Hence  representation \eqref{rd2} yields $\rho_t(y)\ge \rho_t(X)$, and time consistency follows from Proposition~\ref{def2}.
\end{proof}
Properties 3) and 4) of Theorem~\ref{eqchar} imply in particular supermartingale propeties of penalty function processes and risk processes. This allows to apply martingale theory for characterization the the dynamics of these processes, as we do in Proposition~\ref{worstcase} and Proposition~\ref{riesz}; cf.\ also \cite{d6, fp6, ipen7, bn8, dpr10}. 
\begin{proposition}\label{worstcase}
Let $(\rt)\zt$ be a time consistent dynamic convex risk measure such that each $\rt$ is \ctse. Then the process
\begin{equation*}
V_t^Q(X):=\rt(X)+\pf,\qquad t\in\T
\end{equation*}
is a $Q$-supermartingale for all $X\in\LT$ and all $Q\in\Q_0$, where
\begin{equation*}
\Q_0:=\lk Q\in\ma\mk \pfn<\infty\rk.
\end{equation*} 
Moreover, $(V_t^Q(X))_{t\in\T}$ is a $Q$-martingale if $Q\in\Q_0$ is  a ``worst case'' measure for $X$ at time $0$, i.e.\ if the supremum in the robust representation of $\rho_0(X)$ is attained at $Q$:
\[
\rho_0(X)=E_Q[-X]-\alpha^{\min}_0(Q)\quad Q\text{-a.s.}.
\] 
In this case $Q$ is a ``worst case'' measure for $X$ at any time $t$, i.e. 
\[
\rho_t(X)=E_Q[-X|\F_t]-\alpha^{\min}_t(Q)\quad Q\text{-a.s.}\quad\text{for all}\quad t\in\T.
\]
The converse holds if $T<\infty$ or $\lim_{t\to\infty}\rt(X)=-X$ $P$-a.s.\ (what is called asymptotic precision in \cite{fp6}): If $(V_t^Q(X))_{t\in\T}$ is a $Q$-martingale then  $Q\in\Q_0$ is a ``worst case'' measure for $X$ at any time $t\in\T$.
\end{proposition}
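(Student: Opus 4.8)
The plan is to exploit the supermartingale characterization from Theorem~\ref{eqchar}(4) together with the robust representations of Corollary~\ref{corrobdar}. First, for the supermartingale property of $V^Q(X)$: fix $X\in\LT$ and $Q\in\Q_0$. Since $\pfn<\infty$ and $(\rt)\zt$ is time consistent, property 3) of Theorem~\ref{eqchar} applied with $s=t$ gives $\pfn=\pftt[0,t]+E_Q[\pf|\F_0]$, so $E_Q[\pf]<\infty$, hence $V^Q_t(X)$ is $Q$-integrable for every $t$. The $Q$-supermartingale inequality $E_Q[V^Q_{t+s}(X)|\F_t]\le V^Q_t(X)$ is then \emph{exactly} property 4) of Theorem~\ref{eqchar}, valid on $\{\pf<\infty\}$; but on $\Q_0$ we have $\pf<\infty$ $Q$-a.s.\ for all $t$ by the decomposition just mentioned, so the inequality holds unconditionally $Q$-a.s.

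Next, the martingale/worst-case equivalence. Suppose $Q\in\Q_0$ attains the supremum in the representation of $\rho_0(X)$, i.e.\ $V^Q_0(X)=\rho_0(X)=E_Q[-X]-\pfn$. I would show $V^Q$ is a $Q$-martingale by showing it is a supermartingale with constant expectation: from time consistency, $\rho_0(X)=\rho_0(-\rt(X))$, and applying Lemma~\ref{erwpf}-type reasoning (or directly $\pfn=\pftt[0,t]+E_Q[\pf|\F_0]$ together with $\rho_0(-Z)\le E_Q[Z]-\pftt[0,t]$ for $Z=\rt(X)$, which is a version of \eqref{rd3} restricted to $L^\infty_t$) one gets
\[
E_Q[V^Q_t(X)] = E_Q[\rt(X)]+E_Q[\pf] \ge \rho_0(-\rt(X)) + \pftt[0,t] + E_Q[\pf] \ge \rho_0(X)+\pfn = V^Q_0(X),
\]
where the last inequality collapses because $\rho_0(-\rt(X))\ge E_Q[\rt(X)]-\pftt[0,t]$ would go the wrong way — so instead I use that $Q$ attains the sup in \eqref{rd3f} for $\rho_0$ restricted to $L^\infty_t$ precisely because it attains it for $\rho_0$ on all of $\LT$ (the worst-case measure for $X$ is automatically worst-case for the coarser random variable $-\rt(X)$, since $E_Q[-(-\rt(X))]-\pftt[0,t]\ge \rho_0(-\rt(X))=\rho_0(X)=E_Q[-X]-\pfn = E_Q[E_Q[-X|\F_t]]-\pfn$ forces equality by time consistency of the penalty decomposition). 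Thus $E_Q[V^Q_t(X)]=V^Q_0(X)$ for all $t$, and a $Q$-supermartingale with constant expectation is a $Q$-martingale; then the martingale equality $V^Q_t(X)=E_Q[V^Q_T(X)|\F_t]$ combined with $V^Q_{t+s}(X)=E_Q[V^Q_{t+s+r}(X)|\F_{t+s}]$ and the penalty decomposition $\pf = \pftt[t,t+s]+E_Q[\pft[t+s]|\F_t]$ rearranges, using \eqref{rd3}, into $\rho_t(X)=E_Q[-X|\F_t]-\pf$ $Q$-a.s., i.e.\ $Q$ is worst-case at every $t$.

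For the converse under $T<\infty$ or asymptotic precision: if $V^Q(X)$ is a $Q$-martingale then $\rho_0(X)=V^Q_0(X)-\pfn = E_Q[V^Q_T(X)]-\pfn$. When $T<\infty$, $V^Q_T(X)=\rho_T(X)+\alpha^{\min}_T(Q) = -X + 0$ (since $\F_T=\F$, so $\rho_T(X)=-X$ and $\alpha^{\min}_T(Q)=0$), giving $\rho_0(X)=E_Q[-X]-\pfn$, which is the worst-case property at $0$; by the forward direction it then holds at every $t$. When $T=\infty$ and $\lim_t\rt(X)=-X$ $P$-a.s., I would pass to the limit in $V^Q_t(X)=E_Q[V^Q_{t'}(X)|\F_t]$ — the martingale is $\rho_t(X)+\pf$ with $\rho_t(X)\to -X$, and $\pf\downarrow$ along the decomposition $\pf = \pftt[0,t]'$-type telescoping so that $E_Q[V^Q_t(X)]=V^Q_0(X)$ is preserved in the limit, yielding $-X$ as the $L^1(Q)$-limit and hence $\rho_0(X)=E_Q[-X]-\pfn$ again. \textbf{The main obstacle} I expect is this last limiting argument: justifying $L^1(Q)$-convergence of $V^Q_t(X)$ (uniform integrability of the martingale, or monotonicity of $\pf$ allowing monotone convergence) so that the constant-expectation property survives the passage $t\to\infty$; in the finite-horizon case everything is immediate, but for $T=\infty$ one genuinely needs asymptotic precision plus a closedness/convergence argument for the penalty process, which is where \cite{fp6} does the real work.
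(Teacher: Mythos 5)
Your supermartingale part and your finite-horizon converse are correct and agree in substance with the paper. The genuine gap is in the ``if'' direction. Your displayed chain rests on the inequality $E_Q[\rho_t(X)]\ge\rho_0(-\rho_t(X))+\alpha_{0,t}^{\min}(Q)$, which is the \emph{reverse} of what the robust representation of $\rho_0$ restricted to $\Lt$ provides; you acknowledge this, but the patch you offer (``the worst-case measure for $X$ is automatically worst-case for $-\rho_t(X)$'') is then justified by invoking that same unproved inequality, so the key step is circular. It can be repaired with exactly the ingredients you name: time consistency gives $\alpha_0^{\min}(Q)=\alpha_{0,t}^{\min}(Q)+E_Q[\alpha_t^{\min}(Q)]<\infty$, hence $Q\in\Q_t^f(Q)$ and \eqref{rd3f} yields the pointwise bound $\rho_t(X)\ge E_Q[-X|\F_t]-\alpha_t^{\min}(Q)$ $Q$-a.s.; taking $Q$-expectations and using the worst-case property at time $0$,
\[
E_Q[V_t^Q(X)]\;\ge\; E_Q[-X]\;=\;\rho_0(X)+\alpha_0^{\min}(Q)\;=\;V_0^Q(X),
\]
which together with the supermartingale property forces $E_Q[V_t^Q(X)]=V_0^Q(X)$, so $V^Q(X)$ is a $Q$-martingale; moreover the nonnegative random variable $\rho_t(X)+\alpha_t^{\min}(Q)-E_Q[-X|\F_t]$ then has $Q$-expectation zero and hence vanishes $Q$-a.s., which is precisely the worst-case property at every $t$. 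Note that this last conclusion cannot be extracted from your ``martingale equality $V_t^Q(X)=E_Q[V_T^Q(X)|\F_t]$'', which is meaningless when $T=\infty$. The paper packages this computation more compactly: $U_t(X):=V_t^Q(X)-E_Q[-X|\F_t]$ is a nonnegative $Q$-supermartingale starting at $0$, hence identically $0$.

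The second gap is the converse for $T=\infty$, which you explicitly leave open. The same device closes it without the uniform-integrability machinery you anticipate: if $V^Q(X)$ is a $Q$-martingale, then $U(X)$ is a nonnegative $Q$-martingale; asymptotic precision gives $\rho_t(X)-E_Q[-X|\F_t]\to 0$ $Q$-a.s.\ and boundedly, and since precision (read, as in \cite{fp6}, as a property of the whole risk measure) entails asymptotic safety, Remark~\ref{mnull} and Proposition~\ref{riesz} give $E_Q[\alpha_t^{\min}(Q)]\to 0$; hence $U_t(X)\to 0$ in $L^1(Q)$. For a martingale this suffices: $\|U_t(X)\|_{L^1(Q)}\le\|U_s(X)\|_{L^1(Q)}\to 0$ as $s\to\infty$, so $U_t(X)=0$ for all $t$, i.e.\ $Q$ is a worst-case measure at every time. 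So the real content of the infinite-horizon step is not a closedness argument for the penalty process but the identification of the $L^1(Q)$-limit of the martingale $U(X)$ as $0$.
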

\begin{proof}
The supermartingale property of $(V_t^Q(X))\zt$ under each $Q\in\Q_0$ follows directly from properties 3) and 4) of Theorem~\ref{eqchar}. To prove the remaining part of the claim, fix $Q\in\Q_0$ and $X\in\LT$.
If $Q$ is a ``worst case'' measure for $X$ at time $0$, the process
\[
U_t(X):=V_t^Q(X)-E_Q[-X|\F_t],\qquad t\in\T
\]
is a non-negative $Q$-supermartingale beginning at $0$. Indeed, the supermartingale property follows from that of $(V_t^Q(X))\zt$, and  non-negativity follows from the representation \eqref{rd3f}, since $Q\in\Q_t^f(Q)$. Thus $U_t=0$ $Q$-a.s.\ for all $t$, and this proves the ``if'' part of the claim. To prove the converse direction, note that if $(V_t^Q(X))_{t\in\T}$ is a $Q$-martingale and $\rho_T(X)=-X$ (resp.\ $\lim_{t\to\infty}\rt(X)=-X$ $P$-a.s.), the process $U(X)$ is a $Q$-martingale ending at $0$ (resp.\ converging to $0$ in $L^1(Q)$), and thus $U_t(X)=0$ $Q$-a.s.\ for all $t\in\T$.
\end{proof}

\begin{remark}
The fact that a worst case measure for $X$ at time $0$, if it exists, remains  a worst case measure for $X$ at any time $t\in\T$ was also shown in \cite[Theorem 3.9]{ck6} for a time consistent dynamic risk measure without using the supermartingale property from Proposition~\ref{worstcase}.
\end{remark}

\begin{remark}\label{remarksm}
In difference to \cite[Theorem 4.5]{fp6}, without the additional assumption that the set
\begin{equation}\label{qstar}
\Qs:=\lk Q\in\me\mk \pfn<\infty\rk
\end{equation}
is nonempty, the supermartingale property of $(V_t^Q(X))\zt$ for all $X\in\LT$ and all $Q\in\Qs$ is not sufficient to prove time consistency.  In this case we also do not have the robust representation of $\rt$ in terms of the set $\Qs$.
\end{remark}

The process $(\pf)\zt$ is a $Q$-supermartingale for all $Q\in\Q_0$ due to Property 3) of Theorem~\ref{eqchar}. The next proposition provides the explicit form of its Doob- and its Riesz-decomposition; cf.\ also \cite[Proposition 2.3.2]{ipen7}. 

\begin{proposition}\label{riesz}
Let $(\rt)\zt$ be a time consistent dynamic convex risk measure such that each $\rt$ is \ctse. Then for each $Q\in\Q_0$ the process $(\pf)\zt$ is a non-negative $Q$-supermartingale with the Riesz decomposition
\[
\pf=Z_t^Q+M_t^Q\quad \qf,\qquad t\in\T,
\]
where
\[
Z_t^Q:= E_Q\left[\,\sum_{k=t}^{T-1}\pfk\,\big|\,\F_t\,\right]\quad \qf,\quad t\in\T
\]
is a $Q$-potential and
\[
M_t^{Q}:=\left\{
\begin{array}{c@{\quad  \quad}l}
0 & \text{if $T<\infty$},\\
\displaystyle\lim_{s\to\infty}E_Q\left[\alpha_s(Q)\,|\,\F_t\,\right] & \text{if $T=\infty$}
\end{array}\right.\qquad \qf,\quad t\in\T
\]
is a non-negative $Q$-martingale. 

Moreover, the Doob decomposition of $(\pf)\zt$ is given by 
\begin{equation*}
\pf=E_Q\left[\,\sum_{k=0}^{T-1}\pfk\,\big|\,\F_t\,\right]+M_t^Q-\sum_{k=0}^{t-1}\pfk,\quad t\in\T
\end{equation*}
with the $Q$-martingale
\[
E_Q\left[\,\sum_{k=0}^{T-1}\pfk\,\big|\,\F_t\,\right]+M_t^Q,\quad t\in\T
\]
and the non-decreasing predictable process $(\sum_{k=0}^{t-1}\pfk)\zt$.
\end{proposition}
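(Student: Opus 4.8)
The plan is to exploit the fact that, by Property 3) of Theorem~\ref{eqchar}, the one-step decomposition $\pf = \pfk + E_Q[\alpha_{k+1}^{\min}(Q)\,|\,\F_k]$ holds $Q$-a.s.\ for every $k$ with $k, k+1 \in \T$ (taking $s=1$ there). Summing this telescopically from $k=t$ up to $k = t+n-1$ and taking conditional expectations under $Q$ given $\F_t$ yields, for finite $n$,
\[
\pf = E_Q\left[\,\sum_{k=t}^{t+n-1}\pfk \,\big|\,\F_t\,\right] + E_Q\left[\,\alpha_{t+n}^{\min}(Q)\,\big|\,\F_t\,\right]\qquad \qf.
\]
If $T<\infty$, take $n = T-t$; the last term is $E_Q[\alpha_T^{\min}(Q)\,|\,\F_t]$, and since $\rho_T = -\mathrm{id}$ one checks $\alpha_T^{\min}(Q)=0$, so $M_t^Q=0$ and the formula for $Z_t^Q$ as a $Q$-potential follows (it is non-negative, a $Q$-supermartingale by Property 3), and ends at $0$). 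If $T=\infty$, the point is that the non-negative sequence $\big(E_Q[\alpha_{t+n}^{\min}(Q)\,|\,\F_t]\big)_{n}$ is non-increasing in $n$ for fixed $t$ — because $\pf \ge E_Q[\alpha_{t+1}^{\min}(Q)\,|\,\F_t]$ is precisely the supermartingale property — hence converges $Q$-a.s.\ to a limit $M_t^Q := \lim_{s\to\infty} E_Q[\alpha_s(Q)\,|\,\F_t]$. Passing to the limit $n\to\infty$ in the displayed identity (monotone convergence applied to the partial sums, which increase to $\sum_{k=t}^\infty \pfk$) gives $\pf = Z_t^Q + M_t^Q$ with $Z_t^Q$ as claimed.

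Next I would verify the three structural claims about the summands. That $(\pf)\zt$ is a non-negative $Q$-supermartingale is immediate from Property 3) with $s=1$ together with non-negativity of the minimal penalty function (it is a Q-ess\,sup of conditional expectations of $-X$ over $X\in\A_t$, and $0\in\A_t$). For $M^Q$: the tower property gives $E_Q[M_{t+1}^Q\,|\,\F_t] = \lim_s E_Q[\alpha_s(Q)\,|\,\F_t] = M_t^Q$ (the interchange of $E_Q[\,\cdot\,|\,\F_t]$ with the monotone limit being justified by monotone convergence, since $\alpha_s(Q)\ge 0$), so $M^Q$ is a non-negative $Q$-martingale. For $Z^Q$: it is non-negative and, being the difference of a supermartingale and a martingale, is itself a $Q$-supermartingale; moreover $E_Q[Z_t^Q] = E_Q\big[\sum_{k=t}^{T-1}\pfk\big] \to 0$ as $t\to\infty$ by dominated/monotone convergence on the tail of a convergent non-negative series (convergence of $\sum_k \pfk$ in $L^1(Q)$ follows because its partial sums are bounded in $L^1(Q)$ by $E_Q[\pfn]<\infty$, using $Q\in\Q_0$), so $Z^Q$ is a $Q$-potential. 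This establishes the Riesz decomposition.

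For the Doob decomposition, the canonical recipe writes a supermartingale $Y_t$ as $Y_t = N_t - A_t$ with $A_t = \sum_{k=0}^{t-1}\big(Y_k - E_Q[Y_{k+1}\,|\,\F_k]\big)$ predictable non-decreasing and $N$ a martingale. Here $Y_t = \pf$ and, by Property 3) with $s=1$, $Y_k - E_Q[Y_{k+1}\,|\,\F_k] = \pfk$, which is $\F_{k+1}$-measurable — hence the increment process $A_t = \sum_{k=0}^{t-1}\pfk$ is predictable (each increment $A_{t+1}-A_t = \alpha_{t,t+1}^{\min}(Q)$ is $\F_{t+1}$-measurable; strictly, ``predictable'' here is used in the weaker one-step-ahead sense standard in this literature, cf.\ \cite{fp6, ipen7}) and non-decreasing. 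The martingale part is then $N_t = \pf + \sum_{k=0}^{t-1}\pfk$, and substituting the already-established Riesz identity $\pf = Z_t^Q + M_t^Q = E_Q[\sum_{k=t}^{T-1}\pfk\,|\,\F_t] + M_t^Q$ rewrites $N_t$ as $E_Q[\sum_{k=0}^{T-1}\pfk\,|\,\F_t] + M_t^Q$, which is visibly a $Q$-martingale (sum of a closed martingale and $M^Q$). This yields exactly the stated Doob decomposition.

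The only genuinely delicate point is the $T=\infty$ case: justifying that $E_Q[\alpha_s(Q)\,|\,\F_t]$ converges as $s\to\infty$ and that one may interchange this limit with the telescoping sum and with the conditional expectation in the martingale check. All of these rest on non-negativity of $\pfk$ and monotonicity of the tail expectations, so monotone convergence settles them; the integrability needed to conclude $Z^Q$ is a true potential (not merely a non-negative supermartingale with vanishing limit) is where the hypothesis $Q\in\Q_0$, i.e.\ $\pfn = E_Q[\sum_{k=0}^\infty \pfk] < \infty$, does its work. I would also flag that the identification $M_t^Q = \lim_s E_Q[\alpha_s(Q)\,|\,\F_t]$ presumes the notation $\alpha_s(Q) := \alpha_s^{\min}(Q)$, consistent with its use in Proposition~\ref{worstcase}.
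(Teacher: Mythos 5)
Your proof is correct and follows essentially the same route as the paper's: iterating property 3) of Theorem~\ref{eqchar} to obtain $\pf=E_Q\left[\sum_{k=t}^{t+n-1}\pfk\,\big|\,\F_t\right]+E_Q\left[\alpha_{t+n}^{\min}(Q)\,\big|\,\F_t\right]$, using $\alpha_T^{\min}(Q)=0$ when $T<\infty$ and monotone limits when $T=\infty$, invoking $Q\in\Q_0$ for the integrability that makes $Z^Q$ a potential and $M^Q$ a martingale, and deducing the Doob decomposition from the Riesz one. Two cosmetic corrections: each $\pfk$ is in fact $\F_k$-measurable (it is a $Q$-essential supremum of $\F_k$-conditional expectations), so $(\sum_{k=0}^{t-1}\pfk)\zt$ is predictable in the standard sense and your ``one-step-ahead'' hedge is unnecessary; and $Q\in\Q_0$ yields $E_Q\left[\sum_{k=0}^{\infty}\pfk\right]\le\pfn<\infty$ rather than equality, since equality would force $M_0^Q=0$, which can fail (cf.\ Remark~\ref{mnull}).
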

\begin{proof} We fix $Q\in\ma$ and applying property 3) of Theorem \ref{eqchar} step by step we obtain
\begin{equation}\label{lim}
\pf=E_Q\left[\,\sum_{k=t}^{t+s-1}\pfk\,\big|\,\F_t\,\right]+E_Q[\,\pfs\,|\,\F_t\,]\quad \qf
\end{equation}
for all $t,s$ such that $t,t+s\in\T$. If $T<\infty$, the Doob- and Riesz-decompositions follow immediately from \eqref{lim}, since $\alpha_T(Q)=0\; \qf$. If $T=\infty$, by monotonicity there exists the limit
\[
Z_t^Q= \lim_{s\to\infty}E_Q\left[\,\sum_{k=t}^{s}\pfk\,\big|\,\F_t\,\right]=E_Q\left[\,\sum_{k=t}^{\infty}\pfk\,\big|\,\F_t\,\right]\quad \qf
\]
for all $t\in\T$, where we have used the monotone convergence theorem for the second equality. Equality \eqref{lim} implies then that there exists 
\[
M_t^Q= \lim_{s\to\infty}E_Q\left[\,\pfs\,|\,\F_t\,\right]\quad \qf,\quad t\in\T
\]
and
\[
\pf=Z_t^Q+M_t^Q\quad \qf
\]
for all $t\in\T$.

The process $(Z_t^Q)\zt$ is a non-negative $Q$-supermartingale. Indeed, 
\begin{equation}\label{fin}
E_Q[\,Z_{t}^Q\,]\le E_Q\left[\,\sum_{k=0}^{\infty}\pfk\,\right]\le\pfn<\infty
\end{equation}
and $E_Q[\,Z_{t+1}^Q\,|\,\F_t\,]\le Z_t^Q$ $Q$-a.s. for all $t\in\T$ by definition. Moreover, monotone convergence implies
\[
\lim_{t\to\infty}E_Q[\,Z_t^Q\,]=E_Q\left[\,\lim_{t\to\infty}\sum_{k=t}^{\infty}\pfk\,\right]=0\quad\qf,
\]
since $\sum_{k=0}^{\infty}\pfk<\infty\;Q$-a.s. by \eqref{fin}. Hence the process $(Z_t^Q)\zt$ is a $Q$-potential.

The process $(M_t^Q)\zt$ is a non-negative $Q$-martingale, since
\[
E_Q[\,M_{t}^Q\,]\le E_Q\left[\,\pf\,\right]\le\pfn<\infty
\]
and 
\begin{align*}
E_Q[M_{t+1}^Q-M_{t}^Q|\F_t]&=E_Q[\pft|\F_t]-\pf-E_Q[Z_{t+1}^Q-Z_t^Q|\F_t]\\
&=\pftt-\pftt=0\qquad\Q\mbox{-a.s.}
\end{align*}
for all $t\in\T$ by property 3) of Theorem \ref{eqchar} and the definition of $(Z_t^Q)\zt$.

The Doob-decomposition follows straightforward from the Riesz-decomposition.
\end{proof}

\begin{remark}\label{mnull}
It was shown in \cite[Theorem 5.4]{fp6} that the martingale $M^Q$ in the Riesz decomposition of $(\pf)_{t\in\T}$ vanishes if and only if $\lim_{t\to\infty}\rt(X)\ge-X\,P$-a.s., i.e.\ the dynamic risk measure $(\rt)\zt$ is asymptotically safe. This is not always the case; see \cite[Example 5.5]{fp6}. 

\end{remark}

For a \emph{coherent} risk measure we have
\begin{equation*}
\Q_t^f(Q)=\Q_t^0(Q):=\lk R\in\M^1(P)\mk R=Q|_{\F_t},\;\; \pfo(R)=0\;\qf\rk.
\end{equation*}
In order to give an equivalent characterization of property 3) of Theorem~\ref{eqchar} in the coherent case, we introduce the sets
\begin{equation*}
\Q_{t,t+s}^0(Q)=\lk R\ll P|_{\F_{t+s}}\mk R=Q|_{\F_t},\;\; \alpha_{t,t+s}^{\min}(R)=0\;\qf\rk\quad \forall\, t,s\ge 0\;\, \textrm{such that}\;\, t,t+s\in\T.
\end{equation*}
For $Q^1\in\Q_{t, t+s}^0(Q)$ and $Q^2\in\Q_{t+s}^0(Q)$ we denote by $Q^1\oplus^{t+s} Q^2$ the pasting  of $Q^1$ and $Q^2$ in $t+s$ via $\Omega$, i.e. the measure $\tq$ defined via
\begin{equation}\label{pasting}
\tq(A)=E_{Q^1}\left[E_{Q^2}[I_A|\F_{t+s}]\right],\qquad A\in\F.
\end{equation}
The relation between stability under pasting and time consistency of coherent risk measures that can be represented in terms of equivalent probability measures was studied in \cite{adehk7, d6, ks5, fp6}. In our present setting, Theorem~\ref{eqchar} applied to a coherent risk measure takes the following form.
\begin{corollary}\label{coherent}
Let $(\rt)\zt$ be a dynamic coherent risk measure such that each $\rt$ is \ctse. Then the following conditions are equivalent:
\begin{enumerate}
\item $(\rt)\zt$ is time consistent.
\item For all $Q\in\ma$ and all $t,s$ such that $t,t+s\in\T$
\begin{equation*}
\Q_t^0(Q)= \lk Q^1\oplus^{t+s} Q^2\mk Q^1\in\Q_{t, t+s}^0(Q), \;Q^2\in\Q_{t+s}^0(Q^1)\rk.
\end{equation*}
\item For all $Q\in\ma$ such that $\pf=0\;Q$-a.s.,
\[
E_Q[\rs(X)\,|\,\F_t] \le \rt(X)\quad\text{and}\quad \alpha_{t+s}^{\min}(Q)=0\;\,\qf
\]
for all $X\in\LTq$ and for all $t,s$ such that $t,t+s\in\T$. 
\end{enumerate}
\end{corollary}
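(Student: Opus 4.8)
The plan is to specialise Theorem~\ref{eqchar} to the coherent case, using that the minimal penalty function takes only the values $0$ and $+\infty$, and to translate the penalty-additivity condition 3) of Theorem~\ref{eqchar} into the pasting statement. First I would establish the identity $\Q_t^f(Q)=\Q_t^0(Q)$ mentioned just before the corollary, which follows immediately from conditional positive homogeneity of $\rt$ (it forces $\alpha_t^{\min}(R)\in\{0,+\infty\}$), and similarly for the one-step sets $\Q_{t,t+s}^0(Q)$. This makes property 3) of Theorem~\ref{eqchar}, namely $\alpha_t^{\min}(Q)=\alpha_{t,t+s}^{\min}(Q)+E_Q[\alpha_{t+s}^{\min}(Q)\mid\F_t]$, equivalent to the $\{0,\infty\}$-valued statement: $\alpha_t^{\min}(Q)=0$ $Q$-a.s.\ if and only if both $\alpha_{t,t+s}^{\min}(Q)=0$ and $\alpha_{t+s}^{\min}(Q)=0$ $Q$-a.s. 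Here the local property of the minimal penalty function, recalled before \eqref{erwrho}, is what guarantees that the three $\{0,\infty\}$-valued random variables behave consistently on $\F_t$-measurable sets.

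For the equivalence 1)$\Leftrightarrow$2), I would argue that condition 3) of Theorem~\ref{eqchar}, rewritten as above, says exactly: $R\in\Q_t^0(Q)$ iff $R$ restricted appropriately lies in $\Q_{t,t+s}^0(Q)$ and the conditional part lies in $\Q_{t+s}^0$. The key computation is that for $Q^1\in\Q_{t,t+s}^0(Q)$ and $Q^2\in\Q_{t+s}^0(Q^1)$, the pasting $\tq=Q^1\oplus^{t+s}Q^2$ defined in \eqref{pasting} has density process satisfying $Z_T^{\tq}=Z_{t+s}^{Q^1}\cdot(Z_T^{Q^2}/Z_{t+s}^{Q^2})$ on the relevant set, so that $\tq=Q|_{\F_t}$ and, by the additivity of the penalty function along the pasting (again the local property, cf.\ the displayed formula before \eqref{erwrho}), $\alpha_t^{\min}(\tq)=\alpha_{t,t+s}^{\min}(Q^1)+E_{\tq}[\alpha_{t+s}^{\min}(Q^2)\mid\F_t]=0$; conversely every $R\in\Q_t^0(Q)$ decomposes as such a pasting by defining $Q^1:=R|_{\F_{t+s}}$ glued to $P$ (or $Q$) and $Q^2:=R$, with the penalty-additivity of Theorem~\ref{eqchar}(3) forcing both pieces to have zero penalty. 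This gives the set equality in 2).

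For 1)$\Leftrightarrow$3), I would read off property 4) of Theorem~\ref{eqchar} in the coherent case: since $\alpha_{t+s}^{\min}(Q)\in\{0,\infty\}$, the inequality $E_Q[\rho_{t+s}(X)+\alpha_{t+s}^{\min}(Q)\mid\F_t]\le\rho_t(X)+\alpha_t^{\min}(Q)$ is vacuous unless $\alpha_t^{\min}(Q)=0$, in which case property 3) forces $\alpha_{t+s}^{\min}(Q)=0$ and the inequality reduces to $E_Q[\rho_{t+s}(X)\mid\F_t]\le\rho_t(X)$; this is precisely condition 3) of the corollary. The converse direction, that condition 3) of the corollary implies property 3) of Theorem~\ref{eqchar}, uses that $E_Q[\alpha_{t+s}^{\min}(Q)\mid\F_t]=0$ whenever $\alpha_t^{\min}(Q)=0$ together with Lemma~\ref{setpen} or a direct penalty computation to recover the $\supseteq$ inclusion of acceptance sets, while the $\subseteq$ inclusion follows from the supermartingale inequality as in the proof of 4)$\ra$1) of Theorem~\ref{eqchar}. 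The main obstacle I anticipate is bookkeeping the density processes of the pasted measures carefully enough to verify both that $\tq\in\ma$ with the correct $\F_t$-marginal and that the penalty function is genuinely additive along the pasting; this is where the local property of $\alpha_t^{\min}$ must be invoked with some care, especially on the null sets where densities vanish.
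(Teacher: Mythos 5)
Your specialisation to the coherent case (the $\{0,\infty\}$-valued penalty, $\Q_t^f(Q)=\Q_t^0(Q)$) and your arguments for $1)\Rightarrow 2)$ and $1)\Rightarrow 3)$ are sound and essentially the paper's: the pasting $\tq=Q^1\oplus^{t+s}Q^2$ satisfies $\tq=Q^1$ on $\F_{t+s}$ and $E_{\tq}[\,\cdot\,|\F_{t+s}]=E_{Q^2}[\,\cdot\,|\F_{t+s}]$, so that property 3) of Theorem~\ref{eqchar} applied to $\tq$ gives $\alpha_t^{\min}(\tq)=\alpha_{t,t+s}^{\min}(Q^1)+E_{Q^1}[\alpha_{t+s}^{\min}(Q^2)|\F_t]=0$, while any $R\in\Q_t^0(Q)$ is trivially a pasting of itself with itself (your ``glue $R|_{\F_{t+s}}$ to $Q$'' is an unnecessary complication, and note the additivity of the penalty along the pasting is itself a consequence of Theorem~\ref{eqchar}~3), not of the local property alone). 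Likewise, reading off 4) of Theorem~\ref{eqchar} once 3) forces $\pfs=0$ gives $1)\Rightarrow 3)$, and running the argument of $4)\Rightarrow 1)$ of Theorem~\ref{eqchar} with measures of vanishing penalty gives $3)\Rightarrow 1)$, which is exactly how the paper closes that step.

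The genuine gap is that nothing in your plan ever gets \emph{out of} condition 2). Your paragraph on $1)\Leftrightarrow 2)$ asserts that Theorem~\ref{eqchar}~3) ``says exactly'' the pasting identity, but both arguments you sketch (the key computation for $\supseteq$ and the decomposition for $\subseteq$) use Theorem~\ref{eqchar}~3) as a hypothesis; they only prove $1)\Rightarrow 2)$. The converse $2)\Rightarrow$ Theorem~\ref{eqchar}~3) is not a rewriting: the identity $\pf=\pfts+E_Q[\pfs|\F_t]$ must hold for \emph{every} $Q\in\ma$, including measures whose $\F_t$-measurable penalty is $0$ on part of $\Omega$ and $+\infty$ elsewhere, whereas condition 2) only speaks about measures with globally vanishing penalties; recovering the pointwise identity would need a localization/pasting argument on $\F_t$-sets that you do not supply (and neither inequality in Theorem~\ref{eqchar}~3) holds for free). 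The paper avoids this entirely by proving $2)\Rightarrow 3)$ instead: given $R$ with $\pfo(R)=0$ $R$-a.s., the decomposition $R=Q^1\oplus^{t+s}Q^2$ from 2) yields $\alpha_{t,t+s}^{\min}(R)=\alpha_{t+s}^{\min}(R)=0$, and then the representation \eqref{erwrho} gives $E_R[\rs(X)|\F_t]=\res_{Q\in\Q_{t+s}^0(R)}E_Q[-X|\F_t]\le\rt(X)$, the crucial point being that the pasting of $R|_{\F_{t+s}}$ with any $Q\in\Q_{t+s}^0(R)$ lies in $\Q_t^0(R)$ by the ``$\supseteq$'' half of 2). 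Some such argument deriving either 3) or time consistency from 2) must be added before your proposal is a complete proof of the stated equivalences.
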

\begin{proof}
$1)\ra 2)$: Time consistency implies property 3) of Theorem \ref{eqchar}, and we will show that this implies property 2) of Corollary \ref{coherent}. Fix $Q\in\ma$. To prove ``$\supseteq$''  let $Q^1\in\Q_t^0(Q)$,  $Q^2\in\Q_{t+s}^0(Q^1)$, and consider $\tq$ defined as in \eqref{pasting}. Note that $\tq=Q^1$ on $\F_{t+s}$ and  
\[
E_{\tq}[X|\F_{t+s}]=E_{Q^2}[X|\F_{t+s}]\quad Q^1\text{-a.s. for all}\quad X\in\LTq.
\]
Hence, using 3) of Theorem~\ref{eqchar} we obtain
\begin{align*}
\alpha_t^{\min}(\tq)&= \alpha_{t,t+s}^{\min}(\tq)+E_{\tq}[\alpha_{t+s}^{\min}(\tq)|\F_{t}]\\
&= \alpha_{t,t+s}^{\min}(Q^1)+E_{Q^1}[\alpha_{t+s}^{\min}(Q^2)|\F_{t}]=0\qquad\qf,
\end{align*}
and thus $\tq \in\Q_t^0(Q)$.
Conversely, for every $\tq\in\Q_t^0(Q)$ we have $\alpha_{t+s}^{\min}(\tq)=\alpha_{t,t+s}^{\min}(\tq)=0\;\tq$-a.s. by 3) of Theorem~\ref{eqchar}, and $\tq=\tq\oplus\tq$. This proves ``$\subseteq$''.

$2)\ra 3)$: Let $R\in\ma$ with $\pfo(R)=0\;R$-a.s.. Then $R\in\Q_t^0(R)$, and thus $R=Q^1\otimes^{t+s} Q^2$ for some $Q^1\in\Q_{t,t+s}^0(R)$ and $Q^2\in\Q_{t+s}^0(Q^1)$. This implies $R=Q^1$ on $\F_{t+s}$ and 
\[
E_R[X|\F_{t+s}]=E_{Q^2}[X|\F_{t+s}]\quad R\text{-a.s.}. 
\]
Hence $\alpha_{t, t+s}^{\min}(R)=\alpha_{t, t+s}^{\min}(Q^1)=0\;\,R$-a.s., and $\alpha_{t+s}^{\min}(R)=\alpha_{t+s}^{\min}(Q^2)=0\,R$-a.s..
To prove the inequality 3) note that due to \eqref{erwrho}
\begin{align*}
E_R[\,\rs(X)\,|\,\F_t]&=\res_{Q\in\Q_{t+s}^0(R)} E_{Q}[-X\,|\,\F_t]\\
&\le \res_{Q\in\Q_t^0(R)}\ew=\rt(X)\quad R\text{-a.s.},
\end{align*}
where we have used that the pasting of $R|_{\F_{t+s}}$ and $Q$ belongs to $\Q_t^0(R)$.

$3)\ra 1)$: Obviously property 3) of Corollary~\ref{coherent} implies property 4) of Theorem~\ref{eqchar} and thus time consistency. \end{proof}

\subsection{Rejection and acceptance consistency}\label{subsec:rc}
Rejection and acceptance consistency were introduced and studied in \cite{tu8, Samuel, ipen7}. These properties can be characterized via recursive inequalities as stated in the next proposition; see \cite[Theorem 3.1.5]{tu8} and \cite[Proposition 3.5]{Samuel}.

\begin{proposition}\label{rejrecursiveness}
A dynamic convex risk measure $(\rt)\zt$ is rejection (resp. acceptance) consistent if and only if
for all $t\in\T$ such that $t<T$ 
\begin{equation}\label{rcdef}
\rt(-\rho_{t+1})\le\rt\quad(\mbox{resp.}\ge)\quad\f.
\end{equation}
\end{proposition}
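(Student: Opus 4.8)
The plan is to prove both directions of the rejection-consistency equivalence directly from Definition~\ref{sina} with $\Y_{t+1}=L^\infty_{t+1}$, using nothing beyond conditional cash invariance and monotonicity; the acceptance case then follows verbatim after reversing every inequality. The single observation driving the argument is that for any $\F_{t+1}$-measurable $Y\in L^\infty_{t+1}$ conditional cash invariance gives $\rho_{t+1}(Y)=-Y$, so the benchmark hypothesis $\rho_{t+1}(X)\ge\rho_{t+1}(Y)$ reads simply as $-\rho_{t+1}(X)\le Y$. The complementary tool is the idempotence identity $\rho_{t+1}(-\rho_{t+1}(X))=\rho_{t+1}(X)$, again an instance of cash invariance since $-\rho_{t+1}(X)$ is $\F_{t+1}$-measurable, which lets me feed $-\rho_{t+1}(X)$ back into the consistency condition as a legitimate element of the benchmark set.

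For the implication ``rejection consistent $\Rightarrow$ \eqref{rcdef}'' I would fix $X\in\LT$ and put $Y:=-\rho_{t+1}(X)\in L^\infty_{t+1}$. By the idempotence identity $\rho_{t+1}(Y)=\rho_{t+1}(X)$, so in particular $\rho_{t+1}(X)\ge\rho_{t+1}(Y)$ holds, and rejection consistency then delivers $\rho_t(X)\ge\rho_t(Y)=\rho_t(-\rho_{t+1}(X))$, which is exactly the asserted inequality $\rt(-\rho_{t+1})\le\rt$. For the converse ``\eqref{rcdef} $\Rightarrow$ rejection consistent'' I would take $X\in\LT$ and $Y\in L^\infty_{t+1}$ with $\rho_{t+1}(X)\ge\rho_{t+1}(Y)$; rewriting $\rho_{t+1}(Y)=-Y$ turns this into $-\rho_{t+1}(X)\le Y$, and monotonicity of $\rt$ gives $\rho_t(-\rho_{t+1}(X))\ge\rho_t(Y)$. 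Combining this with the assumed $\rho_t(-\rho_{t+1}(X))\le\rho_t(X)$ yields $\rho_t(X)\ge\rho_t(Y)$, establishing rejection consistency.

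The acceptance case is handled identically with $\ge$ replaced by $\le$ throughout: in the forward direction I again set $Y:=-\rho_{t+1}(X)$ and use $\rho_{t+1}(X)\le\rho_{t+1}(Y)$ to obtain $\rho_t(X)\le\rho_t(-\rho_{t+1}(X))$; in the converse I rewrite the hypothesis as $-\rho_{t+1}(X)\ge Y$ and apply monotonicity in the opposite sense, producing $\rho_t(X)\le\rho_t(-\rho_{t+1}(X))\le\rho_t(Y)$. There is no genuine obstacle in this argument; the only point demanding care is keeping the orientation of the monotonicity step aligned with the rejection-versus-acceptance case, and it is precisely this orientation that sends the two cases to the opposite recursive inequalities in \eqref{rcdef}.
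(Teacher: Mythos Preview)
Your proof is correct and follows essentially the same approach as the paper: both directions rely on cash invariance (to reduce $\rho_{t+1}$ on $\F_{t+1}$-measurable arguments, giving $\rho_{t+1}(Y)=-Y$ and the idempotence $\rho_{t+1}(-\rho_{t+1}(X))=\rho_{t+1}(X)$) together with monotonicity, with the forward direction obtained by plugging $Y=-\rho_{t+1}(X)$ into the consistency condition. The only cosmetic difference is that in the converse direction the paper writes the chain as $\rt(X)\ge\rt(-\rho_{t+1}(X))\ge\rt(-\rho_{t+1}(Y))=\rt(Y)$, applying cash invariance at the last step rather than at the outset as you do.
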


\begin{proof} We argue for the case of  rejection consistency; the case of  acceptance consistency follows in the same manner. Assume first that $\rts$ satisfies (\ref{rcdef}) and let $X\in\LT$ and $Y\in L^{\infty}(\F_{t+1})$ such that $\rho_{t+1}(X)\ge\rho_{t+1}(Y)$. Using cash invariance, (\ref{rcdef}), and monotonicity, we obtain 
\[
\rt(X)\ge\rt(-\rho_{t+1}(X))\ge\rt(-\rho_{t+1}(Y))=\rt(Y).
\]
The converse implication follows due to cash invariance by applying (\ref{definition1}) to $Y=-\rho_{t+1}(X)$.
\end{proof}

\begin{remark}\label{weakmiddle}
For a dynamic \emph{coherent} risk measure, weak acceptance consistency and acceptance consistency are equivalent. This was shown in \cite[Proposition 3.9]{Samuel}. 
\end{remark}

Another way to characterize  rejection consistency was suggested in \cite{ipen7}.
\begin{proposition}
A dynamic convex risk measure $(\rt)\zt$ is  rejection consistent if only if any of the following conditions holds:
\begin{enumerate}
 \item For all $t\in\T$ such that $t<T$ and all $X\in\LT$  
 \begin{equation}\label{pruddef}
\rt(X)-\rho_{t+1}(X) \in\A_{t,t+1};
\end{equation}
\item For all $t\in\T$ such that $t<T$ and all $X\in\A_t$, we have $-\rtt(X)\in\A_t$.
\end{enumerate}
\end{proposition}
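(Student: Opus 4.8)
The plan is to deduce both equivalences from Proposition~\ref{rejrecursiveness} by showing that each of conditions (1) and (2) is equivalent to the recursive inequality \eqref{rcdef}. In each case the argument reduces to a one-line manipulation with conditional cash invariance, so I do not expect any serious difficulty; the only thing to keep track of is the measurability of the random variables involved.

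For condition (1) I would fix $t<T$ and $X\in\LT$ and observe that $\rt(X)-\rho_{t+1}(X)$ is bounded and $\F_{t+1}$-measurable, hence lies in $L^\infty_{t+1}$; therefore $\rt(X)-\rho_{t+1}(X)\in\A_{t,t+1}$ if and only if $\rt\bigl(\rt(X)-\rho_{t+1}(X)\bigr)\le0$. Since $\rt(X)\in\Lt$, conditional cash invariance rewrites the left-hand side as $\rt(-\rho_{t+1}(X))-\rt(X)$, so \eqref{pruddef} holding for all $X$ is precisely \eqref{rcdef}. An alternative route is through Lemma~\ref{4.6}: from $\rho_{t+1}(X+\rt(X))=\rho_{t+1}(X)-\rt(X)$, condition \eqref{pruddef} reads $-\rho_{t+1}(X+\rt(X))\in\A_{t,t+1}$ for all $X$, which by \eqref{eqacset1} is $X+\rt(X)\in\A_{t,t+1}+\A_{t+1}$ for all $X$, i.e.\ $\A_t\subseteq\A_{t,t+1}+\A_{t+1}$, equivalent to \eqref{rcdef} by \eqref{eqacset2}.

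For condition (2) I would treat the two directions separately. If \eqref{rcdef} holds and $X\in\A_t$, then $\rt(-\rtt(X))\le\rt(X)\le0$, so $-\rtt(X)\in\A_t$; this is the easy direction. For the converse, assuming (2), I would take an arbitrary $X\in\LT$ and apply (2) to $X':=X+\rt(X)$, which lies in $\A_t$ because $\rt(X')=0$ by cash invariance; this gives $\rt(-\rtt(X'))\le0$. Using cash invariance of $\rtt$ (note $\rt(X)\in\Lt\subseteq L^\infty_{t+1}$) and then of $\rt$, one has $\rtt(X')=\rtt(X)-\rt(X)$ and hence $\rt(-\rtt(X'))=\rt(-\rtt(X))-\rt(X)$, so $\rt(-\rtt(X))\le\rt(X)$, which is \eqref{rcdef}.

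The main obstacle is minor and purely a matter of bookkeeping: $\rho_{t+1}(X)$ is $\F_{t+1}$- rather than $\F_t$-measurable, so conditional cash invariance of $\rt$ can be invoked only after an $\F_t$-measurable quantity such as $\rt(X)$ has been separated out. The device of replacing an arbitrary $X$ by $X+\rt(X)\in\A_t$ is exactly what lets one upgrade the ``on $\A_t$'' statement (2) to a statement valid for every $X\in\LT$, and the symmetric cancellation of $\rt(X)$ under $\rt$ is what makes the computations for (1) and (2) collapse to \eqref{rcdef}.
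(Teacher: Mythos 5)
Your proposal is correct and uses essentially the same ingredients as the paper's proof: the cash-invariance identity $\rt(-\rtt(X))=\rt\bigl(\rt(X)-\rtt(X)\bigr)+\rt(X)$ for condition (1), and the substitution $X\mapsto X+\rt(X)\in\A_t$ for condition (2). The only difference is organizational — you prove each condition equivalent to \eqref{rcdef} separately, whereas the paper runs the cycle (1) $\Rightarrow$ rejection consistency $\Rightarrow$ (2) $\Rightarrow$ (1) — which does not change the substance of the argument.
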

\begin{proof} Since 
\begin{equation*}
\rt(-\rtt(X))=\rt(\rt(X)-\rtt(X))+\rt(X)
\end{equation*}
by cash invariance, \eqref{pruddef} implies rejection consistency, and obviously rejection consistency implies condition 2). If 2) holds, then for any $X\in\LT$ 
\begin{equation*}
\rt(\rt(X)-\rtt(X))=\rt\left(-\rtt(X+\rt(X))\right)\le0,
\end{equation*}
due to cash invariance and the fact that $X+\rt(X)\in\A_t$. 
\end{proof}

Property \eqref{pruddef} was introduces in \cite{ipen7} under the name \emph{prudence}. It means that the adjustment  
$\rtt(X)-\rt(X)$ of the minimal capital requirement for $X$ at time $t+1$ is acceptable at time $t$. In other words, one stays on the safe side at each period of time by making capital reserves according to a rejection consistent dynamic risk measure. 

Similar to time consistency, rejection and acceptance consistency can be characterized in terms of acceptance sets and penalty functions.  

\begin{theorem}\label{eqcharprud}
Let $(\rt)\zt$ be a dynamic convex risk measure such that each $\rt$ is \ctse. Then the following properties are equivalent:
\begin{enumerate}
\item
$(\rt)\zt$ is  rejection consistent (resp.  acceptance consistent).
\item The inclusion
\[
\A_t\subseteq\A_{t,t+1}+\A_{t+1}\quad\text{resp.}\quad\A_t\supseteq\A_{t,t+1}+\A_{t+1}
\] 
holds for all $t\in\T$ such that $t<T$.
\item  The inequality
\[
\pf\le(\text{resp.}\ge)\pftt+E_Q[\,\pft\,|\,\F_t\,]\quad\qf
\] 
holds for all $t\in\T$ such that $t<T$ and all $Q\in\ma$.
\end{enumerate}
\end{theorem}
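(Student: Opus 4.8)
The plan is to prove $1)\Leftrightarrow 2)$ and $2)\Leftrightarrow 3)$, with the first equivalence and the implication $2)\Rightarrow 3)$ immediate from results already at hand, so that the analytic content lies entirely in $3)\Rightarrow 2)$. For $1)\Leftrightarrow 2)$: by Proposition~\ref{rejrecursiveness}, rejection (resp.\ acceptance) consistency is equivalent to the one-step recursive inequality $\rt(-\rtt)\le\rt$ (resp.\ $\ge$) $P$-a.s.\ for all $t<T$; and by Lemma~\ref{4.6} with $s=1$, the inequality $\rt(-\rtt)\le\rt$ is equivalent to $\A_t\subseteq\Att+\At$ via \eqref{eqacset2}, while $\rt(-\rtt)\ge\rt$ is equivalent to $\A_t\supseteq\Att+\At$ via \eqref{eqacset3}. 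Concatenating gives $1)\Leftrightarrow 2)$, and $2)\Rightarrow 3)$ is exactly Lemma~\ref{setpen} with $s=1$.

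For $3)\Rightarrow 2)$ in the rejection case, assume $\pf\le\pftt+E_Q[\pft\,|\,\F_t]$ for all $Q\in\ma$. By \eqref{eqacset1} it is enough to show that $X\in\A_t$ implies $-\rtt(X)\in\Att$, i.e.\ $\rt(-\rtt(X))\le 0$; since $-\rtt(X)\in\LT$, representation \eqref{rd1} reduces this further to showing $E_Q[\rtt(X)\,|\,\F_t]\le\pf$ for every $Q\in\Q_t$. Fix such a $Q$. By \eqref{erwrho}, applied with $t+1$ in place of $t$ and $t$ in place of $s$,
\[
E_Q[\rtt(X)\,|\,\F_t]=\qes_{R\in\Q^f_{t+1}(Q)}\bigl(E_R[-X\,|\,\F_t]-E_R[\alpha_{t+1}^{\min}(R)\,|\,\F_t]\bigr).
\]
For each $R\in\Q^f_{t+1}(Q)$ one has $R\in\Q_t$, so $X\in\A_t$ gives $E_R[-X\,|\,\F_t]\le\alpha_t^{\min}(R)$; subtracting the non-negative quantity $E_R[\alpha_{t+1}^{\min}(R)\,|\,\F_t]$ and invoking hypothesis 3) for the measure $R$ yields $E_R[-X\,|\,\F_t]-E_R[\alpha_{t+1}^{\min}(R)\,|\,\F_t]\le\alpha_{t,t+1}^{\min}(R)$. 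Finally $\alpha_{t,t+1}^{\min}(R)=\alpha_{t,t+1}^{\min}(Q)$, because by \eqref{ats} the one-step penalty function depends only on the restriction of the measure to $\F_{t+1}$, and $\alpha_{t,t+1}^{\min}(Q)\le\pf$ because $\Att\subseteq\A_t$. Taking the $Q$-essential supremum over $R$ completes this case.

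The acceptance case of $3)\Rightarrow 2)$ runs symmetrically: assume the reverse inequality in 3); by \eqref{eqacset1} it is enough to show that $-\rtt(X)\in\Att$ implies $X\in\A_t$. Given $R\in\Q^f_t$, one first checks, using non-negativity of the penalty functions and hypothesis 3) for $R$, that $R\in\Q^f_{t+1}(R)$ and $\alpha_{t,t+1}^{\min}(R)<\infty$ $R$-a.s.; Corollary~\ref{corrobdar} applied to $\rtt$ under $R$ then gives $\rtt(X)\ge E_R[-X\,|\,\F_{t+1}]-\alpha_{t+1}^{\min}(R)$, whence $E_R[\rtt(X)\,|\,\F_t]\ge E_R[-X\,|\,\F_t]-E_R[\alpha_{t+1}^{\min}(R)\,|\,\F_t]$ by the tower property; and $-\rtt(X)\in\Att$ together with \eqref{ats} gives $\alpha_{t,t+1}^{\min}(R)\ge E_R[\rtt(X)\,|\,\F_t]$. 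Chaining these with hypothesis 3) for $R$ yields $E_R[-X\,|\,\F_t]\le\alpha_t^{\min}(R)$, and taking the essential supremum over $R\in\Q^f_t$ together with \eqref{rd2} gives $\rt(X)\le 0$, i.e.\ $X\in\A_t$. The main obstacle throughout is the bookkeeping with $\{+\infty\}$-valued penalty functions and with the distinction between $Q$-a.s.\ and $P$-a.s.\ statements: before rearranging the inequality in 3) one must pass to the sets on which the relevant penalty values are finite, verify that membership in $\Q^f_{t+1}$ is inherited so that the conditional penalty terms are integrable, and observe that all quantities involved are $\F_t$-measurable, so ``$Q$-a.s.'' and ``$P$-a.s.'' coincide for the measures in $\Q_t$ under consideration. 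Modulo these routine but delicate points the arguments mirror those used for Theorem~\ref{eqchar}.
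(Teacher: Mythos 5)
Your proof is correct, and for the easy parts it coincides with the paper's: $1)\Leftrightarrow 2)$ via Proposition~\ref{rejrecursiveness} and Lemma~\ref{4.6}, and $2)\Rightarrow 3)$ via Lemma~\ref{setpen}. Where you diverge is the closing implication. The paper proves $3)\Rightarrow 1)$ by introducing the auxiliary composed risk measure $\widetilde\rho_t(X):=\rt(-\rtt(X))$, observing that the two-period family $(\widetilde\rho_t,\rtt)$ is time consistent by construction, invoking Theorem~\ref{eqchar} to identify $\widetilde\alpha_t^{\min}(Q)=\alpha_{t,t+1}^{\min}(Q)+E_Q[\alpha_{t+1}^{\min}(Q)\,|\,\F_t]$, and then comparing penalty functions through representation \eqref{rd2} to get $\rt\ge\widetilde\rho_t$ (only the rejection case is written out). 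You instead prove $3)\Rightarrow 2)$ directly: via \eqref{eqacset1} you reduce to a one-step statement about acceptance sets, and then run hands-on estimates with \eqref{erwrho}, the definition \eqref{pf1}--\eqref{ats} of the minimal penalty functions, and the representations \eqref{rd1}, \eqref{rd2}, \eqref{rd3}; this is essentially the $3)\Rightarrow 4)\Rightarrow 1)$ machinery of Theorem~\ref{eqchar} compressed into a single conditional step. What the paper's route buys is brevity: all the measure-theoretic bookkeeping is delegated to Theorem~\ref{eqchar}, at the cost of checking that $\widetilde\rho_t$ is again a conditional convex risk measure continuous from above. What your route buys is self-containedness and symmetry: you avoid the auxiliary risk measure altogether, you treat the rejection and acceptance cases explicitly rather than by analogy, and the finiteness issues (passing from $\Q^f_{t+1}(Q)$ or $\Q^f_t$ to a.s.\ finite conditional penalties before rearranging the inequality in 3), and the identification of $Q$-a.s.\ with $P$-a.s.\ statements for $\F_t$-measurable quantities) are exactly the delicate points, and you flag and handle them correctly. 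Both arguments are sound; yours is a legitimate alternative rather than a reproduction of the paper's proof.
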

\begin{proof}  Equivalence of 1) and 2) was proved in Proposition~\ref{rejrecursiveness} and  Lemma~\ref{4.6}, and the proof of  $2)\ra 3) $ is given in Lemma~\ref{setpen}. 

Let us show that property 3) implies property 1). We argue for the case of  rejection consistency; the case of  acceptance consistency follows in the same manner. We fix $t\in\T$ such that $t<T$, and consider the risk measure
\[
{\widetilde \rho_t(X)}:=\rt(-\rtt(X)),\qquad X\in\LT.
\]
It is easily seen that ${\widetilde \rho_t}$ is a conditional convex risk measure that is \ctse. Moreover, the dynamic risk measure $({\widetilde \rho_t}, \rtt)$ is time consistent by definition, and thus it fulfills properties 2) and 3) of Theorem~\ref{eqchar}. We denote by ${\widetilde \A_t}$ and 
${\widetilde \A_{t,t+1}}$ the acceptance sets of the risk measure ${\widetilde \rho_t}$, and by ${\widetilde \alpha_{t}^{\min}}$ its penalty function. Since 
\[
{\widetilde \rho_t(X)}=\rt(-\rtt(X))=\rt(X)
\]
for all $X\in L_{t+1}$, we have ${\widetilde \A_{t,t+1}}=\Att$,
and thus
\[
{\widetilde \A_t}=\Att+\At
\]
by 2) of Theorem~\ref{eqchar}. Lemma~\ref{setpen} and property 3) then imply
\begin{equation*}
{\widetilde \alpha_{t}^{\min}}(Q) =\pftt+E_Q[\pft|\F_t]\ge\pf
\end{equation*}
for all $Q\in\Q_t$.  Thus  
\begin{equation*}
\rt(X)\ge{\widetilde \rho_t(X)}=\rt(-\rtt(X))
\end{equation*}
for all $X\in\LT$, due to representation (\ref{rd2}). 
\end{proof}

\begin{remark}\label{corm}
Similar to Corollary~\ref{coherent}, condition 3) of Theorem~\ref{eqcharprud} can be restated for a dynamic \emph{coherent} risk measure $(\rt)\zt$ as follows: 
\begin{equation*}
\Q_t^0(Q) \supseteq \lk Q^1\oplus^{t+1} Q^2\mk Q^1\in\Q_{t, t+1}^0(Q), \;Q^2\in\Q_{t+1}^0(Q^1)\rk\quad (\mbox{resp.} \subseteq)
\end{equation*}
for all $t\in\T$ such that $t<T$ and all $Q\in\ma$.
\end{remark}

The following proposition provides an additional equivalent characterization of rejection consistency, that can be viewed as an analogon of the supermartingale property 4) of Theorem~\ref{eqchar}.
\begin{proposition}\label{mrc}
Let $(\rt)\zt$ be a dynamic convex risk measure such that each $\rt$ is \ctse. Then $(\rt)\zt$ is rejection consistent if and only if the inequality
\begin{equation}\label{superm1}
E_Q\left[\,\rho_{t+1}(X)\,|\,\F_t\,\right]\le \rho_t(X)+\pftt\qquad Q\text{-a.s.}
\end{equation}
holds for all  $Q\in\ma$  and all $t\in\T$ such that $t<T$. In this case the process 
\[
U_t^Q(X):=\rt(X)-\sum_{k=0}^{t-1}\pfk,\qquad t\in\T
\] 
is a $Q$-supermartingale for all $X\in\LT$ and all $Q\in\Q^f$, where
\[
\Q^f:=\lk Q\in\ma\mk E_Q\left[\sum_{k=0}^{t}\pfk\right]<\infty\;\,\forall\,t\in\T\rk.
\]
\end{proposition}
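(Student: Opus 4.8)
The plan is to first prove the equivalence ``$(\rt)\zt$ is rejection consistent $\Leftrightarrow$ \eqref{superm1}'' and then to deduce the supermartingale property directly from \eqref{superm1}. For the implication from rejection consistency to \eqref{superm1}, fix $Q\in\ma$, $t\in\T$ with $t<T$, and $X\in\LT$, and evaluate $E_Q[\rtt(X)\,|\,\F_t]$ via \eqref{erwrho}, specialized to horizon $t+1$ conditioned on $\F_t$:
\[
E_Q[\rtt(X)\,|\,\F_t]=\qes_{R\in\Q^f_{t+1}(Q)}\bigl(E_R\ewo-E_R[\alpha_{t+1}^{\min}(R)\,|\,\F_t]\bigr)\quad\qf.
\]
For each $R\in\Q^f_{t+1}(Q)$ we have $R=Q$ on $\F_{t+1}\supseteq\F_t$, so $R\in\Q_t(Q)$ and representation \eqref{rd3} for $\rt$ gives $E_R\ewo\le\rt(X)+\pfo(R)$. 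Rejection consistency, in the form of property 3) of Theorem~\ref{eqcharprud}, yields $\pfo(R)\le\alpha_{t,t+1}^{\min}(R)+E_R[\alpha_{t+1}^{\min}(R)\,|\,\F_t]$, and since the last term is finite $R$-a.s.\ (because $R\in\Q^f_{t+1}(Q)$) it cancels, leaving
\[
E_R\ewo-E_R[\alpha_{t+1}^{\min}(R)\,|\,\F_t]\le\rt(X)+\alpha_{t,t+1}^{\min}(R)\quad\qf.
\]
Finally, every $Y\in\Att$ is $\F_{t+1}$-measurable and $R=Q$ on $\F_{t+1}$, so $E_R[-Y\,|\,\F_t]=E_Q[-Y\,|\,\F_t]$ and hence $\alpha_{t,t+1}^{\min}(R)=\pftt$ $Q$-a.s.; taking the $Q$-essential supremum over $R\in\Q^f_{t+1}(Q)$ gives \eqref{superm1}.

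For the converse, by Proposition~\ref{rejrecursiveness} it suffices to prove $\rt(-\rtt)\le\rt$ $P$-a.s.\ for all $t<T$. Fixing such $t$ and $X\in\LT$, apply representation \eqref{rd2} to $-\rtt(X)\in\LT$:
\[
\rt(-\rtt(X))=\es_{Q\in\Q^f_t}\bigl(E_Q[\rtt(X)\,|\,\F_t]-\pf\bigr)\quad\f.
\]
For $Q\in\Q^f_t$ one has $\pf<\infty$ $P$-a.s., and the inclusion $\Att\subseteq\A_t$ gives $\pftt\le\pf$; combined with \eqref{superm1} this yields $E_Q[\rtt(X)\,|\,\F_t]-\pf\le\rt(X)+\pftt-\pf\le\rt(X)$, so the essential supremum over $Q\in\Q^f_t$ is bounded by $\rt(X)$, which is the desired recursive inequality.

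Granting the equivalence, the supermartingale property is routine: $(U^Q_t(X))\zt$ is adapted since each $\pfk$ is $\F_k$-measurable, and $Q$-integrable since $\rt(X)\in\Lt$ is bounded and $0\le\pfk$ with $E_Q[\sum_{k=0}^{t-1}\pfk]<\infty$ for $Q\in\Q^f$; moreover $\sum_{k=0}^{t}\pfk$ is $\F_t$-measurable and $Q$-integrable, so \eqref{superm1} gives
\[
E_Q[U^Q_{t+1}(X)\,|\,\F_t]=E_Q[\rtt(X)\,|\,\F_t]-\sum_{k=0}^{t}\pfk\le\rt(X)+\pftt-\sum_{k=0}^{t}\pfk=U^Q_t(X)\quad\qf.
\]
The step I expect to be the main obstacle is the first implication. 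Feeding $\rt(-\rtt)\le\rt$ directly into \eqref{rd3} only produces $E_Q[\rtt(X)\,|\,\F_t]\le\rt(X)+\pf$, with the \emph{full} shifted penalty $\pf$ in place of the smaller restricted penalty $\pftt$ appearing in \eqref{superm1}; recovering the sharper bound forces one to resolve $E_Q[\rtt(X)\,|\,\F_t]$ as a supremum over the pasted measures $R\in\Q^f_{t+1}(Q)$ through \eqref{erwrho} and to apply rejection consistency termwise, and the finiteness of $E_R[\alpha_{t+1}^{\min}(R)\,|\,\F_t]$ built into the definition of $\Q^f_{t+1}(Q)$ is exactly what makes the cancellation legitimate and sidesteps the $\infty-\infty$ ambiguity on $\{\pf=\infty\}$ that a single-measure argument would face.
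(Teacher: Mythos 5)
Your proof is correct, but it takes a genuinely different route from the paper. The paper obtains Proposition~\ref{mrc} as a special case of Theorem~\ref{optdec}: rejection consistency means exactly that, for each $X$, the risk process $(\rho_t(X))\zt$ is sustainable (this is the prudence property \eqref{pruddef}), and the equivalence with \eqref{superm1} then follows from the sustainability criterion, whose nontrivial direction is a Hahn--Banach separation argument that manufactures a measure $Q$ violating \eqref{superm2}. You instead argue both directions inside the dual machinery: for the forward direction you resolve $E_Q[\rho_{t+1}(X)|\F_t]$ via \eqref{erwrho} over $R\in\Q^f_{t+1}(Q)$, apply the penalty inequality of Theorem~\ref{eqcharprud}~3) termwise, and use that $\alpha_{t,t+1}^{\min}(R)=\alpha_{t,t+1}^{\min}(Q)$ $Q$-a.s.\ for $R=Q$ on $\F_{t+1}$ --- this is precisely how one gets the restricted penalty $\pftt$ rather than the weaker bound with $\pf$, and the finiteness built into $\Q^f_{t+1}(Q)$ legitimizes the cancellation, as you note; for the converse you apply \eqref{rd2} to $-\rho_{t+1}(X)$ together with $\pftt\le\pf<\infty$ $P$-a.s.\ for $Q\in\Q^f_t$ and conclude via Proposition~\ref{rejrecursiveness}. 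Both arguments are sound (the only steps you gloss are the transfers of $Q$-a.s.\ resp.\ $R$-a.s.\ inequalities between $\F_t$-measurable quantities, which are harmless since the measures agree on $\F_t$). The trade-off: the paper's route needs no dual characterization of rejection consistency and yields, via Theorem~\ref{optdec}, a statement valid for arbitrary bounded adapted sustainable processes, of which Proposition~\ref{mrc} and Proposition~\ref{smallest} are corollaries; your route avoids any new separation argument, reusing only Theorem~\ref{eqcharprud} and the representation results, and runs parallel to the proof of 3)$\ra$4)$\ra$1) in Theorem~\ref{eqchar}, which makes the analogy between \eqref{superm1} and the supermartingale property 4) there completely transparent.
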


The proof of Proposition~\ref{mrc} is a special case of Theorem~\ref{optdec}, which involves the notion of sustainability; cf.\ \cite{ipen7}.

\begin{definition}\label{sustainable}
Let $\rts$ be a dynamic convex risk measure.
We call a bounded adapted process $X=(X_t)\zt$ \emph{sustainable with respect to the risk measure} $\rts$ if
\begin{equation*}
\rt(X_t-X_{t+1})\le0\qquad\textrm{for all $t\in\T$ such that $t<T$}.
\end{equation*}
\end{definition}
Consider $X$ to be a cumulative investment process.  If it is sustainable, then for all $t\in\T$ the adjustment $X_{t+1}-X_t$ is acceptable with respect to $\rt$. 

The next theorem characterizes sustainable processes in terms of a supermartingale inequality; it is a generalization of \cite[Corollary 2.4.10]{ipen7}.

\begin{theorem}\label{optdec}
Let $(\rt)\zt$ be a dynamic convex risk measure such that each $\rt$ is \cts and let $(X_t)\zt$ be a bounded adapted process. Then the following properties are equivalent: 
\begin{enumerate}
\item The process $(X_t)\zt$ is sustainable with respect to the risk measure $(\rt)\zt$.
\item For all  $Q\in\ma$ and all $t\in\T, t\ge 1$, we have
\begin{equation}\label{superm2}
E_Q\left[\,X_{t}\,|\,\F_{t-1}\,\right]\le X_{t-1}+\alpha_{t-1,t}^{\min}(Q)\qquad Q\text{-a.s.}.
\end{equation}
\end{enumerate}
\end{theorem}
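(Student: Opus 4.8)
The plan is to prove the two implications separately, relying on conditional cash invariance, the defining formula \eqref{ats} for the one-step penalty $\alpha_{t,t+1}^{\min}$ as a $Q$-essential supremum over $\Att=\A_t\cap L^\infty_{t+1}$, and the robust representation \eqref{rd2}, which is available since each $\rt$ is \ctse. For the implication $1)\Rightarrow 2)$, which should be the easy one, I would fix $Q\in\ma$ and $t\in\T$ with $t\ge 1$. Since $(X_t)\zt$ is adapted and bounded, $X_{t-1}-X_t$ lies in $\Lt$, and the sustainability hypothesis applied at time $t-1$ reads exactly $\rho_{t-1}(X_{t-1}-X_t)\le 0$, i.e.\ $X_{t-1}-X_t\in\A_{t-1,t}$. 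Plugging this single element into the $Q$-essential supremum \eqref{ats} defining $\alpha_{t-1,t}^{\min}(Q)$ and pulling the $\F_{t-1}$-measurable $X_{t-1}$ out of the conditional expectation gives
\[
\alpha_{t-1,t}^{\min}(Q)\ge E_Q[-(X_{t-1}-X_t)|\F_{t-1}]=E_Q[X_t|\F_{t-1}]-X_{t-1}\qquad Q\text{-a.s.},
\]
which is \eqref{superm2}; note that continuity from above is not even needed here.

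For the implication $2)\Rightarrow 1)$, I would fix $t\in\T$ with $t<T$ and show $\rt(X_t-X_{t+1})\le 0$. By conditional cash invariance this equals $\rt(-X_{t+1})-X_t$, so it is enough to prove $\rt(-X_{t+1})\le X_t$. Here I would apply representation \eqref{rd2} to $\rt$ at the bounded random variable $-X_{t+1}\in\LT$, obtaining $\rt(-X_{t+1})=\es_{Q\in\Q^f_t}\bigl(E_Q[X_{t+1}|\F_t]-\pf\bigr)$ $P$-a.s. The key observation is that $\Att=\A_t\cap L^\infty_{t+1}\subseteq\A_t$, so passing to the $Q$-essential supremum over a smaller set in \eqref{pf1} gives $\pftt\le\pf$ for every $Q\in\ma$. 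Consequently, for each $Q\in\Q^f_t$, on the set $\{\pf<\infty\}$ one has
\[
E_Q[X_{t+1}|\F_t]-\pf\le E_Q[X_{t+1}|\F_t]-\pftt\le X_t\qquad Q\text{-a.s.}
\]
by hypothesis \eqref{superm2} (applied with $t+1$ in place of $t$, which is legitimate because $t+1\in\T$ and $t+1\ge 1$), while on $\{\pf=\infty\}$ the left-hand side equals $-\infty\le X_t$. Taking the essential supremum over $Q\in\Q^f_t$ yields $\rt(-X_{t+1})\le X_t$, which completes this direction.

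I expect the only delicate point to be invoking the robust representation with the \emph{one-step} penalty: the comparison $\pftt\le\pf$ lets me avoid re-deriving the representation of the restricted risk measure $\rt|_{L^\infty_{t+1}}$, but I must keep track that \eqref{rd2} is an equality only $P$-a.s.\ and that \eqref{superm2} is assumed for \emph{all} $Q\in\ma$, in particular for all $Q\in\Q^f_t$. Once Theorem~\ref{optdec} is established, Proposition~\ref{mrc} follows by applying it to the bounded adapted process $X_t:=\rt(X)$ for a fixed $X\in\LT$: sustainability of this process is precisely the recursive inequality $\rt(-\rho_{t+1})\le\rt$ that characterizes rejection consistency in Proposition~\ref{rejrecursiveness}, condition \eqref{superm2} turns into \eqref{superm1}, and rewriting \eqref{superm1} as $E_Q[U_{t+1}^Q(X)\,|\,\F_t]\le U_t^Q(X)$ together with the integrability built into $\Q^f$ gives the asserted $Q$-supermartingale property of $(U_t^Q(X))\zt$.
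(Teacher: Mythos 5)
Your proof is correct, and the nontrivial direction $2)\Rightarrow 1)$ takes a genuinely different route from the paper. The paper argues by contradiction: if $X_t-X_{t-1}\notin-\A_{t-1,t}$, it uses the weak$^\ast$-closedness of the acceptance set (Remark~\ref{abg}) and the Hahn--Banach separation theorem to produce a nonnegative density $Z$, hence a measure $Q\in\ma$ with $E_Q[\alpha_{t-1,t}^{\min}(Q)]<\infty$ (computed via Lemma~\ref{erwpf}), for which \eqref{superm2} fails in expectation. You instead argue directly: writing $\rt(X_t-X_{t+1})=\rt(-X_{t+1})-X_t$ by cash invariance, you bound each term $E_Q[X_{t+1}|\F_t]-\pf$ in the representation \eqref{rd2} by $X_t$, combining the elementary comparison $\pftt\le\pf$ (from $\A_{t,t+1}\subseteq\A_t$) with hypothesis \eqref{superm2} applied at $t+1$; the passage from the $Q$-a.s.\ inequality to a $P$-a.s.\ one is legitimate because all quantities involved are $\F_t$-measurable and $Q=P$ on $\F_t$ for $Q\in\Q_t^f$ --- a point you flag but should state explicitly, and note also that for $Q\in\Q_t^f$ the set $\{\pf=\infty\}$ is already $P$-null, so your case distinction is harmless. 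Both arguments rest on continuity from above, just packaged differently: the paper channels it through weak$^\ast$-closedness of $\A_{t-1,t}$ plus an explicit separation, you through the already-established representation \eqref{rd2} of Theorem~\ref{robdar}, which makes your argument shorter and avoids re-running a duality argument; the paper's separation proof has the mild additional payoff of showing that it suffices to impose \eqref{superm2} only for measures $Q$ with $E_Q[\alpha_{t-1,t}^{\min}(Q)]<\infty$, whereas your route shows it suffices for $Q\in\Q_t^f$. The easy direction $1)\Rightarrow 2)$ is exactly the paper's (left there as immediate from the definitions), and your remark on deducing Proposition~\ref{mrc} by taking $X_t:=\rt(X)$ matches the paper's intention.
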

\begin{proof} The proof of $1)\ra 2)$ follows directly from the definition of sustainability and the definition of the minimal penalty function.

To prove $2)\ra 1)$, let $(X_t)\zt$ be a bounded adapted process such that \eqref{superm2} holds. In order to prove 
\begin{equation*}
X_t-X_{t-1}=:A_t\in-\A_{t-1,t}\quad\mbox{for all}\quad t\in\T, t\ge 1,
\end{equation*}
suppose by way of contradiction that $A_t\notin-\A_{t-1,t}$. 
Since the set $\A_{t-1,t}$ is convex and weak$^\ast$-closed due to Remark~\ref{abg}, the Hahn-Banach separation theorem (see, e.g., \cite[Theorem A.56 ]{fs4}) ensures the existence of $Z\in L^1(\F_t,P)$ such that
\begin{equation}\label{e15}
a:=\sup_{X\in\A_{t-1,t}}E[\,Z(-X)\,]<E[\,Z\,A_t\,]=:b<\infty.
\end{equation}
Since $\lambda I_{\{Z<0\}}\in\A_{t-1,t}$ for every $\lambda\ge0$,  (\ref{e15}) implies $Z\ge0\;P$-a.s., and in particular $E[Z]>0$. Define a probability measure $Q\in\ma$ via $\frac{dQ}{dP}:=\frac{Z}{E[Z]}$ and note that, due to Lemma \ref{erwpf} and \eqref{e15}, we have
\begin{equation}\label{e172}
E_Q[\alpha_{t-1,t}^{\min}(Q)]=\sup_{X\in\A_{t-1,t}}E_Q[\,(-X)\,]=\sup_{X\in\A_{t-1,t}}E[\,Z(-X)\,]\frac{1}{E[Z]}=\frac{a}{E[Z]}<\infty.
\end{equation}
Moreover, (\ref{e15}) and \eqref{e172} imply
\begin{equation*}
E_Q\left[\left(X_t-X_{t-1}-\alpha_{t-1,t}^{\min}(Q)\right)\right]=E[Z]\left(E[ZA_t]-E_Q\left[\alpha_{t-1,t}^{\min}(Q)\right]\right)= E[Z](b-a)>0,
\end{equation*}
which cannot be true if  \eqref{superm2} holds under $Q$. 
\end{proof}

\begin{remark}
In particular, property 2) of Theorem~\ref{optdec} implies that the process 
\[
X_t-\sum_{k=0}^{t-1}\pfk,\qquad t\in\T
\] 
is a $Q$-supermartingale for all $Q\in\Q^f$, if $X$ is sustainable with respect to $(\rt)$. As shown in \cite[Theorem 2.4.6, Corollary 2.4.8]{ipen7}, this supermartingale property is equivalent to sustainability of $X$ under some additional assumptions. 
\end{remark}

\subsection{Weak time consistency}\label{subsec:wc}
In this section we characterize the weak notions of time consistency from Definition~\ref{cons}. Due to cash invariance, they can be restated as follows: A dynamic convex risk measure $(\rt)\zt$ is weakly acceptance (resp. weakly rejection) consistent, if and only if 
\begin{equation*}
\rtt(X)\le0\quad(\mbox{resp.}\;\ge)\quad\Longrightarrow\quad\rt(X)\le0\quad(\mbox{resp.}\;\ge)
\end{equation*}
for any $X\in\LT$ and for all $t\in\T$ such that $t<T$.
This means that if some position is accepted (or rejected) for any scenario tomorrow, it should be already accepted (or rejected) today. In this form, weak acceptance consistency was introduced in \cite{adehk7}.  Both weak acceptance and weak rejection consistency appeared in \cite{Weber, ros7}. 

Weak acceptance consistency was characterized in terms of acceptance sets in \cite[Corollary 3.6]{tu8}, and in terms of a supermartingale property of penalty functions in \cite[Lemma 3.17]{burg}. We summarize these characterizations in our present setting in the next proposition.

\begin{proposition}\label{weaktc}
Let $(\rt)\zt$ be a dynamic convex risk measure such that each $\rt$ is \ctse. Then the following properties are equivalent: 
\begin{enumerate}
\item $(\rt)\zt$ is weakly acceptance consistent.
\item $\A_{t+1}\subseteq\A_t\quad$ for all $t\in\T$ such that $t<T$.
\item The inequality
\begin{equation}\label{supermart}
E_Q[\,\pft\,|\,F_t\,]\le\pf\quad\qf
\end{equation} 
holds for all $Q\in\ma$ and all $t\in\T$ such that $t<T$. In particular $(\pf)\zt$ is a $Q$-supermartingale for all $Q\in\Q_0$.
\end{enumerate} 
\end{proposition}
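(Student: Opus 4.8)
The plan is to prove the equivalences by a cycle $1)\Rightarrow 2)\Rightarrow 3)\Rightarrow 1)$, exploiting the characterizations already available for rejection and acceptance consistency. First I would observe that weak acceptance consistency, in its restated form $\rtt(X)\le 0\Rightarrow\rt(X)\le 0$, is precisely the statement that $\A_{t+1}\subseteq\A_t$: indeed $X\in\A_{t+1}$ means $\rtt(X)\le 0$, and $X\in\A_t$ means $\rt(X)\le 0$, so $1)\Leftrightarrow 2)$ is essentially a tautology once one unwinds the definitions of the acceptance sets $\A_t$ and $\A_{t+1}$ together with the restated form of weak acceptance consistency given just before the proposition. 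This is the easy part and requires no continuity assumption.

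Next, for $2)\Rightarrow 3)$ I would use the formula \eqref{pf1} for the minimal penalty function together with Lemma~\ref{erwpf}. From $\A_{t+1}\subseteq\A_t$ we get, for any $Q\in\ma$,
\[
\pf=\qes_{X\in\A_t}\ew\ge\qes_{X\in\A_{t+1}}\ew=E_Q[\,\pft\,|\,\F_t\,]\quad\qf,
\]
where the last equality is exactly Lemma~\ref{erwpf} applied with $s=t$ and acceptance set $\A_{t+1}$, after noting that $\qes_{X\in\A_{t+1}}E_Q[-X|\F_t]=E_Q[\qes_{X\in\A_{t+1}}E_Q[-X|\F_{t+1}]\,|\,\F_t]=E_Q[\pft|\F_t]$ by the directed-upward argument from the proof of that lemma. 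This gives \eqref{supermart}. The ``in particular'' statement then follows by restricting to $Q\in\Q_0$, where $\pfn<\infty$ guarantees integrability of $\pf$ for every $t$ (using \eqref{supermart} iteratively, $E_Q[\pf]\le E_Q[\alpha_0^{\min}(Q)]=\pfn<\infty$), so that $(\pf)\zt$ is a genuine $Q$-supermartingale.

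For the closing implication $3)\Rightarrow 1)$ I would use the robust representation \eqref{rd2} (available since each $\rt$ is continuous from above, via Theorem~\ref{robdar}), restricting attention to $Q\in\Q_t^f$. Suppose $\rtt(X)\le 0$. For any $Q\in\Q_t^f$ we have $\pf<\infty$ $P$-a.s., hence by \eqref{supermart} also $E_Q[\pft|\F_t]<\infty$, so that $Q\in\Q_{t+1}^f$ up to restriction; applying the representation \eqref{rd3} for $\rtt$ under $Q$ gives $E_Q[-X|\F_{t+1}]\le\rtt(X)+\pft\le\pft$, and taking conditional expectation and using \eqref{supermart},
\[
\ew=E_Q[E_Q[-X|\F_{t+1}]\,|\,\F_t]\le E_Q[\pft|\F_t]\le\pf.
\]
Thus $\ew-\pf\le 0$ for every $Q\in\Q_t^f$, and \eqref{rd2} yields $\rt(X)\le 0$, which is weak acceptance consistency. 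The main obstacle I anticipate is the bookkeeping around the sets $\Q_t^f$, $\Q_{t+1}^f$ and the passage from the conditional penalty $\pft$ to the representation under an arbitrary $Q\in\ma$ rather than $Q\in\Q_{t+1}$; here one should invoke Corollary~\ref{corrobdar} and \eqref{erwrho} to make the conditional expectation step rigorous, rather than the $P$-based representation, exactly as in the proof of $4)\Rightarrow 1)$ in Theorem~\ref{eqchar}.
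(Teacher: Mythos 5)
Your proof is correct and follows essentially the same route as the paper: 1)$\Leftrightarrow$2) by definition, 2)$\Rightarrow$3) via Lemma~\ref{erwpf}, and 3)$\Rightarrow$1) by bounding $E_Q[-X|\F_{t+1}]$ by $\pft$, taking conditional expectations, and concluding with representation \eqref{rd2}. The only (harmless) difference is that you invoke the robust representation under $Q$ to get $E_Q[-X|\F_{t+1}]\le\pft$ for $X\in\A_{t+1}$, whereas this follows directly from the definition of the minimal penalty function, so the $\Q_{t+1}^f$ bookkeeping you worry about is not actually needed.
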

\begin{proof} The equivalence of 1) and 2) follows directly from the definition of weak acceptance consistency.  Property 2) implies 3), since by Lemma~\ref{erwpf} 
\begin{align*}
E_Q[\,\pft\,|\,F_t\,]&=\qes_{X_{t+1}\in\At}E_Q[-X_{t+1}|\F_t]\\
&\le\qes_{X\in\A_t}E_Q[-X|\F_t]=\pf\qquad\qf
\end{align*}
for all $Q\in\ma$.

To prove that 3) implies 2), we fix $X\in\At$ and note that
\[
E_Q[-X|\F_{t+1}]\le\pft\quad\qf\quad\mbox{for all}\;\,Q\in\ma
\] 
by the definition of the minimal penalty function. Using (\ref{supermart}) we obtain
\[
E_Q[-X|\F_{t}]\le E_Q[\,\pft\,|\,F_t\,]\le\pf\quad\qf
\]
for all $Q\in\ma$, in particular for $Q\in\Q_t^f(P)$. Thus $\rt(X)\le0$ by \eqref{rd2}.
\end{proof}

\subsection{A recursive construction}\label{recur}

In this section we assume that the time horizon $T$ is finite. Then one can define a time consistent dynamic convex risk measure $({\widetilde \rt})_{t=0\pk T}$ in a recursive way, starting with an arbitrary dynamic convex risk measure $(\rt)_{t=0\pk T}$, via
\begin{equation}\label{recurs}
\begin{aligned}
{\widetilde \rho_T}(X)&:=\rho_T(X)=-X\\
{\widetilde \rho_t}(X)&:=\rho_t(-{\widetilde \rho_{t+1}}(X)),\quad t=0\pk T-1,\quad X\in\LT.
\end{aligned}
\end{equation}
The recursive construction \eqref{recurs} was introduced in \cite[Section 4.2]{cdk6}, and also studied in \cite{Samuel, ck6}. It is easy to see that $({\widetilde \rt})_{t=0\pk T}$ is indeed a time consistent dynamic convex risk measure, and each ${\widetilde \rt}$ is \cts if each $\rt$ has this property.

\begin{remark}\label{cheaper}
If the original dynamic convex risk measure $(\rt)_{t=0\pk T}$ is rejection (resp.\ acceptance) consistent, then the time consistent dynamic convex risk measure $({\widetilde \rt})_{t=0\pk T}$ defined via (\ref{recurs}) lies below (resp. above) $(\rt)_{t=0\pk T}$, i.e.
\[
 {\widetilde \rt}(X)\le(\text{resp.}\ge)\rt(X)\quad\text{for all $t=0\pk T$ and all $X\in\LT$}.
\]
This can be easily proved by backward induction using Proposition~\ref{rejrecursiveness}, monotonicity, and \eqref{recurs}. Moreover, as shown in \cite[Theorem 3.10]{Samuel} in the case of rejection consistency, $({\widetilde \rt})_{t=0\pk T}$ is the biggest time consistent dynamic convex risk measure that lies below $(\rt)_{t=0\pk T}$.
\end{remark}

For all $X\in\LT$, the process $({\widetilde \rt}(X))_{t=0\pk T}$ has the following properties: 
${\widetilde \rho}_T(X)\ge-X$, and
\begin{equation}\label{sust}
\rt({\widetilde \rt}(X)-{\widetilde \rho}_{t+1}(X))=-{\widetilde \rt}(X)+\rt(-{\widetilde \rho}_{t+1}(X))=0\qquad\forall\; t=0\pk T-1,
\end{equation}
by definition and cash invariance. In other words, the process $({\widetilde \rt}(X))_{t=0\pk T}$ covers the final loss $-X$ and is sustainable with respect to the original risk measure $(\rt)_{t=0\pk T}$. The next proposition shows that $({\widetilde \rt}(X))_{t=0\pk T}$ is in fact the smallest process with both these properties. This result is  a generalization of \cite[Proposition 2.5.2 ]{ipen7}, and, in the coherent case, related to \cite[Theorem 6.4]{d6}.

\begin{proposition}\label{smallest}
Let $(\rt)_{t=0\pk T}$ be a dynamic convex risk measure such that each $\rt$ is \ctse. Then, for each $X\in\LT$, the risk process $({\widetilde \rt}(X))_{t=0\pk T}$ defined via (\ref{recurs}) is the smallest bounded adapted process $(U_t)_{t=0\pk T}$ such that $(U_t)_{t=0\pk T}$ is sustainable with respect to $(\rt)_{t=0\pk T}$ and $U_T\ge-X$.
\end{proposition}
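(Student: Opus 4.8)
The goal is to show two things: first that the process $({\widetilde\rt}(X))_{t=0\pk T}$ is itself sustainable with respect to $(\rt)_{t=0\pk T}$ and covers the final loss $-X$, and second that it is the \emph{smallest} such process. The first part is already recorded in the lines preceding the statement: equation \eqref{sust} shows $\rt({\widetilde\rt}(X)-{\widetilde\rho}_{t+1}(X))=0\le 0$ for all $t<T$, so $({\widetilde\rt}(X))_{t}$ is sustainable, and ${\widetilde\rho}_T(X)=-X\ge -X$ by construction. So the real content is minimality, and the plan is a straightforward backward induction on $t$.

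\textbf{Backward induction step.}
Let $(U_t)_{t=0\pk T}$ be any bounded adapted process that is sustainable with respect to $(\rt)_{t=0\pk T}$ and satisfies $U_T\ge -X$. I claim $U_t\ge{\widetilde\rt}(X)$ for all $t$, and I prove this by downward induction. For $t=T$ the claim is exactly the hypothesis $U_T\ge -X={\widetilde\rho}_T(X)$. Assume $U_{t+1}\ge{\widetilde\rho}_{t+1}(X)$. Sustainability of $(U_t)$ at time $t$ means $\rt(U_t-U_{t+1})\le 0$, i.e.\ $U_t-U_{t+1}\in\A_t$, which by cash invariance and the definition \eqref{defviaset} of $\rt$ through its acceptance set gives $U_t\ge -\rt(-U_{t+1})$. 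Wait — more directly: $\rt(U_t-U_{t+1})\le0$ together with cash invariance yields $\rt(-U_{t+1})-U_t\le 0$, hence $U_t\ge \rt(-U_{t+1})$. Now apply monotonicity of $\rt$: since $U_{t+1}\ge{\widetilde\rho}_{t+1}(X)$ we have $-U_{t+1}\le-{\widetilde\rho}_{t+1}(X)$, so $\rt(-U_{t+1})\ge\rt(-{\widetilde\rho}_{t+1}(X))={\widetilde\rt}(X)$ by the recursive definition \eqref{recurs}. Combining, $U_t\ge\rt(-U_{t+1})\ge{\widetilde\rt}(X)$, which closes the induction.

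\textbf{Remarks on the obstacles.}
There is essentially no hard step here: the whole argument rests on cash invariance plus monotonicity of the conditional risk measures, and the inductive structure is dictated by the recursive definition \eqref{recurs}. The only point that needs a little care is the translation between the sustainability inequality $\rt(U_t-U_{t+1})\le 0$ and the pointwise bound $U_t\ge\rt(-U_{t+1})$ — this is immediate from property (i) of Definition~\ref{defrm} (cash invariance), since $\rt(U_t-U_{t+1})=\rt(-U_{t+1})-U_t$ as $U_t$ is $\F_t$-measurable. Continuity from above of each $\rt$ is not actually needed for this particular statement (it is inherited by the ${\widetilde\rt}$ and is relevant to the surrounding discussion, e.g.\ to guarantee robust representations), but including it in the hypothesis harms nothing. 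Finiteness of $T$ is essential only to make the backward recursion well-posed; the argument would not run from the infinite end without an extra limiting hypothesis such as asymptotic precision.
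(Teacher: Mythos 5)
Your proof is correct, and it takes a genuinely different and more elementary route than the paper's. You pass from the sustainability inequality $\rho_t(U_t-U_{t+1})\le 0$ directly to $U_t\ge\rho_t(-U_{t+1})$ via conditional cash invariance (using that $U_t\in L^\infty_t$), and then close the backward induction with monotonicity and the recursion \eqref{recurs}; this uses nothing beyond properties (i) and (ii) of Definition~\ref{defrm}. The paper instead invokes Theorem~\ref{optdec}, i.e.\ the characterization of sustainability by the inequality $E_Q[\,U_{t+1}\,|\,\F_t\,]\le U_t+\alpha_{t,t+1}^{\min}(Q)$ for all $Q\in\M_1(P)$, applies it with the induction hypothesis, and then recovers $\rho_t(-{\widetilde\rho}_{t+1}(X))$ by taking the essential supremum over $Q\in\Q_t^f$ through the robust representation \eqref{rd2} — which is exactly where continuity from above (and the weak$^\ast$-closedness behind Theorem~\ref{optdec}) enters. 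So your closing remark is accurate: your argument proves the proposition for an arbitrary dynamic convex risk measure, with continuity from above playing no role, and finiteness of $T$ needed only to start the backward recursion; what the paper's heavier route buys is the dual, penalty-function/supermartingale formulation of sustainability, consistent with the machinery of Subsection~4.2 and the thesis result it generalizes, but it is not logically necessary here. The only blemish is the momentary sign slip ($U_t\ge-\rho_t(-U_{t+1})$) that you immediately retract; in a final write-up delete that sentence and keep only the computation $\rho_t(U_t-U_{t+1})=\rho_t(-U_{t+1})-U_t\le 0$.
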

\begin{proof} We have already seen that ${\widetilde \rt}_T(X)\ge-X$ and $({\widetilde \rt}(X))_{t=0\pk T}$ is sustainable with respect to $(\rt)_{t=0\pk T}$ due to (\ref{sust}). Now let $(U_t)_{t=0\pk T}$ be another bounded adapted process with both these properties. We will show by backward induction that 
\begin{equation}\label{ineq}
U_t\ge {\widetilde \rt}(X)\quad\f\qquad\forall\;t=0\pk T.
\end{equation}
Indeed, we have
\[
U_T\ge-X={\widetilde \rho_T}(X)\qquad \f.
\]
If (\ref{ineq}) holds for $t+1$, Theorem~\ref{optdec} yields for all $Q\in\Q_t^f$:
\begin{align*}
U_t&\ge E_Q\left[\,U_{t+1}-\pftt\,|\,\F_t\,\right]\nonumber\\&\ge E_Q\left[\,{\widetilde \rho_{t+1}}(X)-\pftt\,|\,\F_t\,\right]\quad \f.
\end{align*}
Thus 
\begin{align*}
U_t&\ge\es_{Q\in\Q_{t}^f} \left(E_Q\left[{\widetilde \rho_{t+1}}(X)|\F_t\right]-\pftt\right)\\
&=\rt(-{\widetilde \rho_{t+1}}(X))={\widetilde \rho_{t}}(X)\quad \f,
\end{align*}
where we have used representation \eqref{rd2}. This proves (\ref{ineq}).\end{proof}
The recursive construction \eqref{recurs} can be used to construct a time consistent dynamic Average Value at Risk, as shown in the next example.
\begin{example}
It is well known that dynamic Average Value at Risk $(AV@R_{t,\lambda_t})_{t=0\pk T}$ (cf.\ Example~\ref{avar}) is not time consistent, and does not even satisfy weaker notions of time consistency from Definition~\ref{cons}; see, e.g., \cite{adehk7, ros7}. Moreover, since $\alpha_0^{\min}(P)=0$ in this case, the set $\Qs$ in \eqref{qstar} is not empty, and \cite[Corollary 4.12]{fp6} implies that there exists no time consistent dynamic convex risk measure $(\rt)\zt$ such that each $\rt$ is continuous from above and $\rho_0=AV@R_{0,\lambda_0}$. However, for $T<\infty$, the recursive construction \eqref{recurs} can be applied to $(AV@R_{t,\lambda_t})_{t=0\pk T}$ in order to modify it to a time consistent dynamic coherent risk measure $(\tilde{\rt})_{t=0\pk T}$. This modified risk measure takes the form
\begin{align*}
\tilde{\rt}(X)&= \es\lk E_Q[-X|\F_t]\mk Q\in\Q_t, \frac{Z^Q_{s+1}}{Z^Q_s}\leq \lambda_s^{-1}, s=t\pk T-1\rk \\
&= \es\lk E_P\left[-X\prod_{s=t+1}^{T}L_s\mk\F_t\right]\mk L_s\in L^{\infty}_s, 0\leq L_s\leq \lambda_s^{-1}, E[L_s|\F_{s-1}]=1, s=t+1\pk T\rk\nonumber
\end{align*}
for all $t=0\pk T-1$, where $Z^Q_t=\frac{dQ}{dP}|_{\F_t}$. This was shown, e.g., in \cite[Example 3.3.1]{ck6}.
\end{example}

\section{The dynamic entropic risk measure}\label{entropic}
In this section we study time consistency properties of the dynamic entropic risk measure
\[
\rt(X)=\frac{1}{\g_t}\log E[e^{-\g_t X}|\F_t],\qquad t\in\T,\qquad X\in\LT,
\]
where the risk aversion parameter $\gamma_t$ satisfies $\gamma_t>0\, P$-a.s. and $\gamma_t, \frac{1}{\gamma_t}\in\Lt$ for all $t\in\T$ (cf.\ Example~\ref{ex:entr}).

It is well known (see, e.g., \cite{dt5, fp6}) that the conditional entropic risk measure $\rho_{t}$ has the robust representation \eqref{rd1} with the minimal penalty function $\alpha_t$ given by
\[
\alpha_t(Q)=\frac{1}{\gamma_t}H_t(Q|P),\quad Q\in\Q_t,
\]
where $H_t(Q|P)$ denotes the conditional relative entropy of $Q$ with respect to $P$ at time $t$:
\[
H_t(Q|P)=E_Q\lp \log\frac{dQ}{dP}\mk\F_t \rp,\quad Q\in\Q_t.
\] 

The dynamic entropic risk measure with constant risk aversion parameter $\gamma_t=\gamma_0\in\real$ for all $t$ was studied in \cite{dt5,cdk6,fp6,ck6}. It plays a particular role since, as proved in \cite{ks9}, it is the only law invariant time consistent relevant dynamic convex risk measure.

In this section we consider an \emph{adapted} risk aversion process $(\gamma_t)\zt$, that depends both on time and on the available information. As shown in the next proposition, the process $(\gamma_t)\zt$ determines time consistency properties of the corresponding dynamic entropic risk measure. This result corresponds to \cite[Proposition 4.1.4]{ipen7}, and generalizes \cite[Proposition 3.13]{Samuel}. 
\begin{proposition}\label{riskav}
Let $(\rt)\zt$ be the dynamic entropic risk measure with risk aversion given by an adapted process $(\g_t)\zt$ such that $\gamma_t>0\,P$-a.s. and $\gamma_t, 1/\g_t\in\Lt$. Then the following assertions hold:
\begin{enumerate}
\item $(\rt)\zt$ is rejection consistent if $\g_t\ge\g_{t+1}\:P$-a.s. for all $t\in\T$ such that $t<T$;
\item $(\rt)\zt$ is acceptance consistent if $\g_t\le\g_{t+1}\:P$-a.s. for all $t\in\T$ such that $t<T$;
\item $(\rt)\zt$ is time consistent if $\g_t=\g_0\in\real\: P$-a.s. for all $t\in\T$ such that $t<T$.
\end{enumerate}
Moreover, assertions 1), 2) and 3) hold with ``if and only if'', if $\g_t\in\real$ for all $t$, or if the filtration $(\F_t)\zt$ is rich enough in the sense that for all $t$ and for all $B\in\F_t$ such that $P[B]>0$ there exists $A\subset B$ such that $A\notin\F_t$ and $P[A]>0$.
\end{proposition}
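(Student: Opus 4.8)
The plan is to reduce each assertion to a one-step recursive (in)equality and then exploit the explicit exponential form of $\rho_t$. By Proposition~\ref{rejrecursiveness}, rejection (resp.\ acceptance) consistency of $(\rho_t)\zt$ is equivalent to $\rho_t(-\rho_{t+1})\le\rho_t$ (resp.\ $\ge$) $P$-a.s.\ for every $t<T$, and by Proposition~\ref{def2} time consistency is equivalent to one-step recursiveness $\rho_t=\rho_t(-\rho_{t+1})$. For the dynamic entropic risk measure one computes directly, using $\exp(\gamma_t\rho_{t+1}(X))=\big(E[e^{-\gamma_{t+1}X}|\F_{t+1}]\big)^{\gamma_t/\gamma_{t+1}}$,
\[
\rho_t(-\rho_{t+1}(X))=\frac{1}{\gamma_t}\log E\Big[\big(E[e^{-\gamma_{t+1}X}|\F_{t+1}]\big)^{\gamma_t/\gamma_{t+1}}\,\Big|\,\F_t\Big],
\]
which is to be compared with $\rho_t(X)=\frac{1}{\gamma_t}\log E[e^{-\gamma_tX}|\F_t]$. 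The crucial structural point is that the exponent $\gamma_t/\gamma_{t+1}$ is $\F_{t+1}$-measurable, hence behaves as a constant under $E[\,\cdot\,|\F_{t+1}]$.

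For the three ``if'' statements I would apply the conditional Jensen inequality given $\F_{t+1}$ to the power function $\phi(y)=y^{\gamma_t/\gamma_{t+1}}$ evaluated at the nonnegative random variable $Z:=e^{-\gamma_{t+1}X}$, noting that $Z^{\gamma_t/\gamma_{t+1}}=e^{-\gamma_tX}$. If $\gamma_t\ge\gamma_{t+1}$ then $\phi$ is convex, so $\big(E[Z|\F_{t+1}]\big)^{\gamma_t/\gamma_{t+1}}\le E[e^{-\gamma_tX}|\F_{t+1}]$; applying the monotone maps $E[\,\cdot\,|\F_t]$ and $\frac{1}{\gamma_t}\log(\cdot)$ (recall $\gamma_t>0$) yields $\rho_t(-\rho_{t+1}(X))\le\rho_t(X)$, i.e.\ rejection consistency. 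If $\gamma_t\le\gamma_{t+1}$ then $\phi$ is concave and the inequality reverses, giving acceptance consistency. If $\gamma_t=\gamma_0\in\real$ for all $t$ then $\gamma_t=\gamma_{t+1}$, $\phi$ is the identity, Jensen is an equality, and $\rho_t=\rho_t(-\rho_{t+1})$ follows; time consistency is then Proposition~\ref{def2}. (One can reach the same conclusions through the penalty functions of Theorem~\ref{eqcharprud}: since $\alpha_t(Q)=\frac{1}{\gamma_t}H_t(Q|P)$ and, by the chain rule for conditional relative entropy, $H_t(Q|P)$ splits into a one-step term $\gamma_t\,\alpha_{t,t+1}(Q)$ plus $E_Q[H_{t+1}(Q|P)|\F_t]$, condition~3 of Theorem~\ref{eqcharprud} reduces to the sign of $\big(\frac{1}{\gamma_{t+1}}-\frac{1}{\gamma_t}\big)E_Q[H_{t+1}(Q|P)|\F_t]$, which is governed by $\gamma_t$ versus $\gamma_{t+1}$ because $H_{t+1}(Q|P)\ge0$.)

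For the ``only if'' parts I would argue by contradiction using the strict case of Jensen's inequality. Suppose $(\rho_t)\zt$ is rejection consistent but $B:=\{\gamma_t<\gamma_{t+1}\}\in\F_{t+1}$ has $P[B]>0$ for some $t<T$. I would test with a position $X=cI_A$, $c>0$, where $A\subseteq B$, $P[A]>0$ and $A\notin\F_{t+1}$; the existence of such an $A$ is exactly what the richness hypothesis supplies, while in the case $\gamma_t\in\real$ for all $t$ one has $B=\Omega$ and merely needs some non-$\F_{t+1}$-measurable $A$. On $B$ the function $\phi(y)=y^{\gamma_t/\gamma_{t+1}}$ is strictly concave, and $Z=e^{-\gamma_{t+1}X}$ fails to be $\F_{t+1}$-measurable on a subset of $B$ of positive probability, so $\big(E[Z|\F_{t+1}]\big)^{\gamma_t/\gamma_{t+1}}\ge E[e^{-\gamma_tX}|\F_{t+1}]$ holds $P$-a.s.\ with strict inequality on a set of positive probability; since $E[\,\cdot\,|\F_t]$ and $\frac{1}{\gamma_t}\log(\cdot)$ preserve an a.s.\ inequality that is strict on a positive-probability set, we obtain $\rho_t(-\rho_{t+1}(X))>\rho_t(X)$ on a set of positive probability, contradicting rejection consistency. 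The acceptance case is symmetric (take $B=\{\gamma_t>\gamma_{t+1}\}$ and use strict convexity), and time consistency, being stronger than both, forces $\gamma_t=\gamma_{t+1}$ for every $t<T$; since $\gamma_{t+1}$ is then $\F_t$-measurable and $\gamma_0$ is constant, induction gives $\gamma_t=\gamma_0\in\real$.

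The step I expect to be the main obstacle is precisely this ``only if'' direction: one must recover the pointwise ordering of $\gamma_t$ and $\gamma_{t+1}$ from an inequality between $\F_t$-conditional expectations of exponentials composed with $\log$. This requires keeping careful track of the equality case in conditional Jensen's inequality, checking that strictness on a positive-probability set is not destroyed by the outer conditional expectation and the logarithm, and localizing the test position to the set $\{\gamma_t\neq\gamma_{t+1}\}$ --- which is exactly the point where either the constancy of $(\gamma_t)$ or the richness of the filtration is indispensable. A minor additional subtlety is that at $t=T-1$ one has $\rho_T=-\mathrm{id}$ irrespective of $\gamma_T$, so the statement there should be read accordingly.
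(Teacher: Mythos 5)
Your proposal follows essentially the same route as the paper: reduction to the one-step recursion via Propositions~\ref{rejrecursiveness} and~\ref{def2}, the explicit computation of $\rho_t(-\rho_{t+1}(X))$, Jensen applied to the $\omega$-dependent power $y\mapsto y^{\gamma_t/\gamma_{t+1}}$ for the ``if'' parts, and for the ``only if'' parts the same contradiction argument testing with $X$ proportional to $I_A$, $A\subset\{\gamma_t<\gamma_{t+1}\}$, $A\notin\F_{t+1}$, using strict concavity and the fact that strictness on a positive-probability set survives $E[\,\cdot\,|\F_t]$ and $\log$. The only point you take for granted is the conditional Jensen inequality for a \emph{random} ($\F_{t+1}$-measurable) convex or concave exponent, which the paper justifies separately in Lemma~\ref{jensen}, where the assumptions $\gamma_t,1/\gamma_t\in\Lt$ are used to control the derivative; your plan is otherwise complete and correct.
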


\begin{proof} Fix $t\in\T$ and $X\in\LT$. Then
\begin{align*}
\rt(-\rtt(X))&=\frac{1}{\g_t}\log\left(E\left[\exp\left\{\frac{\g_t}{\g_{t+1}}\log\left(E\left[e^{-\g_{t+1}X}|\F_{t+1}\right]\right)\right\}\big|\F_t\right]\right)\nonumber\\
&=\frac{1}{\g_t}\log\left(E\left[E\left[e^{-\g_{t+1}X}|\F_{t+1}\right]^{\frac{\g_t}{\g_{t+1}}}\big|\F_t\right]\right).
\end{align*}
Thus $\rt(-\rtt)=\rt$ if $\g_t=\g_{t+1}$ and this proves time consistency. Rejection (resp. acceptance) consistency follow by the generalized Jensen inequality that will be proved in Lemma~\ref{jensen}. We apply this inequality at time $t+1$ to the bounded random variable $Y:=e^{-\g_{t+1}X}$ and the $\B\left((0,\infty)\right)\otimes\F_{t+1}$-measurable function 
\[
u\;:\;(0,\infty)\times\Omega\;\rightarrow\;\real,\quad\quad u(x,\omega):=x^{\frac{\g_t(\omega)}{\g_{t+1}(\omega)}}.
\]
Note that $u(\cdot,\omega)$ is convex if $\g_t(\omega)\ge\g_{t+1}(\omega)$ and concave if $\g_t(\omega)\le\g_{t+1}(\omega)$. Moreover, $u(X,\cdot)\in\LT$ for all $X\in\LT$ and $u(\cdot,\omega)$ is differentiable on $(0,\infty)$ with 
\[
|u'(x,\cdot)|=\frac{\g_t}{\g_{t+1}}x^{\frac{\g_t}{\g_{t+1}}-1}\le ax^b\quad\f
\]
for some $a,b\in\real$ if $\g_t\ge\g_{t+1}$, due to our assumption $\frac{\g_t}{\g_{t+1}}\in\LT$. On the other hand, for $\g_t\le\g_{t+1}$ we obtain
\[
|u'(x,\cdot)|=\frac{\g_t}{\g_{t+1}}x^{\frac{\g_t}{\g_{t+1}}-1}\le a\frac{1}{x^c}\quad\f
\]
for some $a,c\in\real$. Thus the assumptions of Lemma~\ref{jensen} are satisfied and we obtain 
\[
\rt(-\rtt)\le\rt\quad\mbox{if}\quad\g_t\ge\g_{t+1}\quad P\mbox{-a.s.\;\; for all $t\in\T$ such that $t<T$}
\]
and 
\[
\rt(-\rtt)\ge\rt\quad\mbox{if}\quad\g_t\le\g_{t+1}\quad P\mbox{-a.s.\;\; for all $t\in\T$ such that $t<T$}.
\]
The ``only if'' direction for constant $\g_t$ follows by the classical Jensen inequality.

Now we assume that the sequence $(\rt)\zt$ is  rejection consistent and our assumption on the filtration $(\F_t)\zt$ holds. We will show that the sequence $(\g_t)\zt$ is decreasing in this case. Indeed, for $t\in\T$ such that $t<T$, consider $B:=\{\gamma_t<\gamma_{t+1}\}$ and suppose that $P\left[ B \right]>0$.
Our assumption on the filtration allows us to choose $A \subset B$ with $P\left[ B \right]>P\left[ A \right]>0$ and $A\notin\F_{t+1}$. We define a random variable $X:=-xI_{A}$ for some $x>0$. Then
\begin{align*}
\rho_{t}( -\rho_{t+1}(X))&=\frac{1}{\gamma_t}\log \left( E\left[ \exp\left( \frac{\gamma_t}{\gamma_{t+1}}\log \left( E\left[ e^{ \gamma_{t+1}x I_{A}}\big |\F_{t+1} \right] \right) \right) \big|\F_t \right] \right)\\
&=\frac{1}{\gamma_t}\log \left( E\left[ \exp\left( \frac{\gamma_t}{\gamma_{t+1}} I_{B}\log \left( E\left[ e^{ \gamma_{t+1}x I_{A}}\big |\F_{t+1} \right] \right) \right) \big|\F_t \right] \right),
\end{align*}
where we have used that $A\subset B$. Setting 
\begin{equation*}
Y:=E\left[ e^{ \gamma_{t+1}x I_{A}}\big |\F_{t+1} \right]\\
=e^{ \gamma_{t+1}x }P\left[ A |\F_{t+1} \right]+P\left[ A^c |\F_{t+1} \right]
\end{equation*}
and bringing $\frac{\g_t}{\g_{t+1}}$ inside of the logarithm we obtain
\begin{equation}\label{a1}
\rho_{t}\left( -\rho_{t+1}\left( X \right) \right)=\frac{1}{\gamma_t}\log \left( E\left[ \exp\left( I_{B}\log \left( Y^{\frac{\gamma_t}{\gamma_{t+1}} I_{B}} \right) \right) \big|\F_t \right] \right).
\end{equation}
The function $x \mapsto x^{\gamma_{t}(\omega)/\gamma_{t+1}(\omega)}$ is strictly concave for almost each $\omega\in B$, and thus 
\begin{align}\label{a2}
Y^{\frac{\gamma_t}{\gamma_{t+1}} }&=\left(e^{\gamma_{t+1}x }P\left[ A |\F_{t+1} \right]+(1-P\left[ A |\F_{t+1} \right])\right)^{\frac{\gamma_t}{\gamma_{t+1}}}\nonumber\\
&\ge e^{ \gamma_{t}x }P\left[ A |\F_{t+1}\right]+(1-P\left[ A |\F_{t+1} \right])\qquad P\mbox{-a.s on}\;B,
\end{align}
with strict inequality on the set 
\[
C:=\left\{P\left[ A |\F_{t+1}\right]>0\right\}\cap \left\{P\left[ A |\F_{t+1}\right]<1\right\}\cap B.
\]
Our assumptions $P[A]>0, \,A\subset B$ and $A\notin\F_{t+1}$ imply $P[C]>0$ and using 
\begin{equation}\label{a3}
e^{ \gamma_{t}x }P\left[ A |\F_{t+1}\right]+(1-P\left[ A |\F_{t+1} \right])=E\left[e^{\g_txI_A}|\F_{t+1}\right]
\end{equation}
we obtain from (\ref{a1}), (\ref{a2}) and (\ref{a3})
\begin{equation}\label{0}
\rho_{t}\left( -\rho_{t+1}\left( X \right) \right)\ge\frac{1}{\gamma_t}\log \left( E\left[ \exp\left( I_{B}\log \left(E\left[e^{\gamma_{t}x I_{A}}|\F_{t+1} \right] \right) \right) \big|\F_t \right] \right),
\end{equation}
with the strict inequality on some set of positive probability due to strict monotonicity of the exponential and the logarithmic functions. For the right hand side of (\ref{0}) we have
\begin{align*}
\lefteqn{\frac{1}{\gamma_t}\log \left( E\left[ \exp\left( I_{B}\log \left(E\left[e^{\gamma_{t}x I_{A}}|\F_{t+1} \right] \right) \right) \big|\F_t \right] \right)=}\\
&=\frac{1}{\gamma_t}\log \left( E\left[ I_{B}E\left[e^{ \gamma_{t}x I_{A} }|\F_{t+1} \right]+I_{B^c} \big|\F_t \right] \right)\\
&=\frac{1}{\gamma_t}\log \left( E\left[ \exp\left( \gamma_{t}x I_{A} \right)\big|\F_t \right] \right)\\
&=\rho_{t}\left( X \right),
\end{align*}
where we have used $A\subset B$ and $B\in\F_{t+1}$. This is a contradiction to  rejection consistency of $(\rt)\zt$, and we conclude that $\g_{t+1}\le\g_t$ for all $t$.  The proof in the case of  acceptance consistency follows in the same manner. And since time consistent dynamic risk measure is both  acceptance and  rejection consistent, we obtain $\g_{t+1}=\g_t$ for all $t$.
\end{proof}

The following lemma concludes the proof of Proposition~\ref{riskav}.

\begin{lemma}\label{jensen}
Let $(\Omega,\F, P)$ be a probability space and $\F_t\subseteq \F$ a $\sigma$-field. Let $I\subseteq\real$  be an open interval and
\[
u\;:\;I\times\Omega\;\rightarrow\;\real
\]
be a $\B\left(I\right)\otimes\F_{t}$-measurable function such that $u(\cdot, \omega)$ is convex (resp. concave) and finite on $I$ for $P$-a.e. $\omega$. Assume further that 
\[
|u_+'(x,\cdot)|\le c(x)\quad P\mbox{-a.s. with some}\;\;c(x)\in\real\;\,\mbox{for all}\;\;x\in I,
\]
where $u_+'(\cdot,\omega)$ denotes the right-hand derivative of $u(\cdot,\omega)$. Let $X\;:\;\Omega\;\rightarrow\;[a,b]\subseteq I$ be an $\F$-measurable bounded random variable such that $E\left[\,|u(X,\, )|\,\right]<\infty$. Then
\[
E\left[\,u(X,\, )\,|\,\F_t\,\right]\ge u\left(E[X|\F_t],\, \right)\quad(\mbox{resp}\le)\quad\f.
\]
\end{lemma}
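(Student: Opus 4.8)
The plan is to prove this conditional Jensen inequality by combining the pointwise supporting-line characterization of convexity with a careful measurability argument, taking the conditional expectation, and then using a monotone approximation to handle the fact that the supporting slope depends on $\omega$. I will argue the convex case; the concave case follows by replacing $u$ with $-u$.

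First I would recall that for a finite convex function $u(\cdot,\omega)$ on $I$, at every point $y\in I$ the right-hand derivative $u_+'(y,\omega)$ exists and the supporting-line inequality
\[
u(x,\omega)\ge u(y,\omega)+u_+'(y,\omega)(x-y)\qquad\text{for all }x\in I
\]
holds. The idea is to apply this with $y=E[X|\F_t](\omega)$ and $x=X(\omega)$. Since $X$ takes values in $[a,b]\subseteq I$, the conditional expectation $E[X|\F_t]$ also takes values in $[a,b]$, so both arguments stay in $I$. Plugging in and rearranging, we get
\[
u(X,\cdot)\ge u(E[X|\F_t],\cdot)+u_+'(E[X|\F_t],\cdot)\bigl(X-E[X|\F_t]\bigr)\qquad\f.
\]
The point of the growth assumption $|u_+'(x,\cdot)|\le c(x)$ is that on $[a,b]$ the slope $u_+'(E[X|\F_t],\cdot)$ is bounded by $\sup_{x\in[a,b]}c(x)$, and hence the product with the bounded random variable $X-E[X|\F_t]$ is integrable, so that taking $E[\cdot\,|\,\F_t]$ is legitimate and the last term will vanish once we know $u_+'(E[X|\F_t],\cdot)$ is $\F_t$-measurable.

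The main obstacle is measurability: one must check that $\omega\mapsto u(E[X|\F_t](\omega),\omega)$ and $\omega\mapsto u_+'(E[X|\F_t](\omega),\omega)$ are $\F_t$-measurable, so that the first is a valid version of a conditional expectation and the second can be pulled out of $E[\cdot\,|\,\F_t]$. For this I would first note that $u$ being $\B(I)\otimes\F_t$-measurable makes $(x,\omega)\mapsto u(x,\omega)$ and, via difference quotients $u_+'(x,\omega)=\lim_{h\downarrow 0,\,h\in\Q}\frac{u(x+h,\omega)-u(x,\omega)}{h}$ (a countable limit of $\B(I)\otimes\F_t$-measurable functions), also $(x,\omega)\mapsto u_+'(x,\omega)$ jointly measurable; composing with the $\F_t$-measurable map $\omega\mapsto(E[X|\F_t](\omega),\omega)$ gives $\F_t$-measurability of both composites. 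Then, taking $E[\cdot\,|\,\F_t]$ on both sides of the displayed inequality and using $\F_t$-measurability to pull out $u(E[X|\F_t],\cdot)$ and $u_+'(E[X|\F_t],\cdot)$, together with $E[X-E[X|\F_t]\,|\,\F_t]=0$, yields
\[
E[u(X,\cdot)\,|\,\F_t]\ge u(E[X|\F_t],\cdot)\qquad\f,
\]
which is the claim. I would finally remark that the integrability hypothesis $E[|u(X,\cdot)|]<\infty$ ensures the left-hand side is well defined, and that one may reduce to difference quotients along rationals $h$ to keep all limits countable.
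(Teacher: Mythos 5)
Your proof follows essentially the same route as the paper's: the supporting-line inequality of the convex function at the point $E[X|\F_t]$, joint measurability of $u_+'$ (obtained via difference quotients, e.g.\ along rational increments) so that $u(E[X|\F_t],\cdot)$ and $u_+'(E[X|\F_t],\cdot)$ are $\F_t$-measurable, boundedness of the slope so the cross term can be conditioned and vanishes. One minor repair: $c$ is only assumed finite at each fixed $x$, so $\sup_{x\in[a,b]}c(x)$ need not be finite; the correct (and the paper's) bound instead uses that $u_+'(\cdot,\omega)$ is nondecreasing by convexity, giving $-c(a)\le u_+'(a,\cdot)\le u_+'\left(E[X|\F_t],\cdot\right)\le u_+'(b,\cdot)\le c(b)$ $P$-a.s., which yields the boundedness you need.
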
 \begin{proof} We will prove the assertion for the convex case; the concave one follows in the same manner. Fix $\omega\in\Omega$ such that $u(\cdot, \omega)$ is convex. Due to convexity we obtain for all $x_0\in I$
\[
u(x,\omega)\ge u(x_0,\omega)+u_+'(x_0,\omega)(x-x_0)\quad \mbox{for all}\quad x\in I.
\]
Take $x_0=E[X|\F_t](\omega)$ and $x=X(\omega)$. Then
\begin{equation}\label{dk}
u(X(\omega),\omega)\ge u(E[X|\F_t](\omega),\omega)+u_+'(E[X|\F_t](\omega),\omega)(X(\omega)-E[X|\F_t](\omega))
\end{equation}
for $P$-almost all $\omega\in\Omega$. Note further that $\B\left(I\right)\otimes\F_{t}$-measurability of $u$ implies $\B\left(I\right)\otimes\F_{t}$-measurability of $u_+$. Thus
\[
\omega\,\rightarrow\, u(E[X|\F_t](\omega),\omega)\quad\mbox{and}\quad \omega\,\rightarrow\,u_+'(E[X|\F_t](\omega),\omega)
\] 
are $\F_t$-measurable random variables, and $\omega\rightarrow u(X(\omega),\omega)$ is $\F$-measurable. Moreover, due to our assumption on $X$, there are constants $a,b\in I$ such that $a\le E[X|\F_t]\le b\;P$-a.s.. Since $u_+'(\cdot,\omega)$ is increasing by convexity, by using our assumption on the boundedness of $u_+'$ we obtain
\[
-c(a)\le u_+'(a,\omega)\le u_+'(E[X|\F_t],\omega)\le u_+'(b,\omega)\le c(b),
\]
i.e. $u_+'(E[X|\F_t],\,)$ is bounded. Since $E\left[\,|u(X,\, )|\,\right]<\infty$, we can build conditional expectation on the both sides of (\ref{dk}) and we obtain
\begin{align*}
E[\,u(X,\,)\,|\,\F_t\,]&\ge E\left[\,u(E[X|\F_t],\,)+u_+'(E[X|\F_t],\,)(X-E[X|\F_t])\,|\,\F_t\,\right]\\
&=E\left[\,u(E[X|\F_t],\,)\,|\,\F_t\,\right]\quad\f,
\end{align*}
where we have used $\F_t$-measurability of $u(E[X|\F_t],\,)$ and  of $u_+'(E[X|\F_t],\,)$ and the boundedness of $u_+'(E[X|\F_t],\,)$. This proves our claim.\end{proof}

\bibliographystyle{plain}

\end{document}